\documentclass{article}


\usepackage[preprint]{neurips_2025}

\usepackage{amsmath}
\usepackage{amssymb}
\usepackage{mathtools}
\usepackage{amsthm}

\usepackage{verbatim}
\usepackage{url}

\newcommand{\bb}{\mathbb}
\usepackage{multirow}

\usepackage[utf8]{inputenc} 
\usepackage[T1]{fontenc}    
\usepackage{hyperref}       
\usepackage{url}            
\usepackage{booktabs}       
\usepackage{amsfonts}       
\usepackage{nicefrac}       
\usepackage{microtype}      
\usepackage{xcolor}         

\usepackage{algorithm}
\usepackage{algpseudocode}
\usepackage{makecell}

\theoremstyle{plain}
\newtheorem{theorem}{Theorem}[section]

\newtheorem{lemma}[theorem]{Lemma}

\theoremstyle{definition}

\newtheorem{assumption}[theorem]{Assumption}
\theoremstyle{remark}
\newtheorem{remark}[theorem]{Remark}

\title{Low-Rank Graphon Learning for Networks}

\author{
	Xinyuan Fan\thanks{Equal contribution.} \\
	Department of Statistics and Data Science\\
	Tsinghua University\\
	Beijing, China \\
	\texttt{fxy22@mails.tsinghua.edu.cn} \\
	\And
	Feiyan Ma\footnotemark[1] \\
	Weiyang College\\
	Tsinghua University\\
	Beijing, China \\
	\texttt{mafy21@mails.tsinghua.edu.cn}\\
	\AND
	Chenlei Leng\thanks{Corresponding author.}\\ 
	Department of Applied Mathematics\\
	Hong Kong Polytechnic University \\
	Hong Kong, China \\
	\texttt{chenlei.leng@polyu.edu.hk}\\
	\And
	Weichi Wu\footnotemark[2]\\ 
	Department of Statistics and Data Science\\
	Tsinghua University\\
	Beijing, China \\
	\texttt{wuweichi@mail.tsinghua.edu.cn}
}

\begin{document}

	\maketitle

	\begin{abstract}
		Graphons offer a powerful framework for modeling large-scale networks, yet estimation remains challenging. We propose a novel approach that leverages a low-rank additive representation, yielding both a low-rank connection probability matrix and a low-rank graphon--two goals rarely achieved jointly. Our method resolves identification issues and enables an efficient sequential algorithm based on subgraph counts and interpolation. We establish consistency and demonstrate strong empirical performance in terms of computational efficiency and estimation accuracy through simulations and data analysis.
	\end{abstract}

	\section{Introduction}
	With advances in data collection, modeling network data has become increasingly important across domains such as brain networks \citep{maugis2020testing}, co-authorship networks \citep{Isfandyari2023Global}, and biological systems \citep{Kamimoto2023Dissecting}. A key challenge is understanding the generative mechanisms underlying these networks, which informs tasks like studying dynamics \citep{10.1214/18-AOS1751}, link prediction \citep{gao2016optimal}, and community detection \citep{jin2021survey}.
	
	A powerful framework for modeling networks is based on exchangeable graphs, where node permutations leave the edge distribution invariant. By the Aldous-Hoover theorem \citep{kallenberg2005probabilistic}, such graphs are characterized by a graphon, a symmetric measurable function. Graphons offer a unified perspective, supporting tasks such as asymptotic analysis of subgraph counts \citep{bickel2011method} and graph equivalence testing \citep{maugis2020testing}. They also underpin widely used models, including the stochastic block model (SBM) \citep{holland1983stochastic}, random dot product graphs (RDPG) \citep{young2007random}, and latent space models \citep{hoff2002latent}.
	
	A graphon is a symmetric, measurable bivariate function and serves as the limit object for sequences of dense graphs \citep{lovasz2006limit}. Without further assumptions, it cannot be directly estimated from a single network. However, its eigenvalue-eigenfunction decomposition offers a practical solution: by truncating to the leading components---analogous to principal component analysis---we obtain a low-rank approximation. The resulting connection probability matrix $P$, constructed by evaluating the graphon at observed nodes, inherits this low-rank structure.
	
	\textbf{Contributions.} We propose a novel low-rank approach to graphon modeling that simultaneously captures the low-rank structure of both the graphon and the connection probability matrix. While prior work has focused primarily on estimating the connection matrix, our method uniquely recovers both the graphon function and the matrix with the \textit{same rank}, offering a unified framework that fills a gap in existing literature.
	
	The method builds on the key observation that a rank-\(r\) matrix can be decomposed into a sum of \(r\) rank-1 components. By counting \(O(r)\) carefully selected subgraphs, we efficiently extract these components and solve the resulting system to estimate the connection matrix \( P \). The graphon \( f \) is subsequently recovered by sorting and interpolation. Crucially, while sorting is a powerful tool for recovering one-dimensional latent structures as in our approach, it is generally ineffective when applied directly to \( P \) estimated by other methods, which lack the rank-1 aligned structure necessary for accurate graphon recovery. This sequential design ensures that the estimated graphon inherits the correct rank of \( P \), leading to a tuning-free, scalable method that performs robustly across a range of settings, including sparse networks, as validated in our experiments.

	In addition to our methodological advances, we deliver several key theoretical contributions that deepen the understanding of low-rank graphon models. First, we establish sharp perturbation bounds for solutions to stochastic systems of equations (Lemmas~\ref{lem:add2} and~\ref{lem:add3}), shedding light on how estimation errors propagate through the model. Second, we prove a novel result (Lemma~\ref{lem:add4}) showing that, in low-rank graphon settings, the appropriately scaled number of fixed-length paths from a node uniformly approximates its conditional expectation given the latent variable. This insight forms the backbone of our eigenfunction estimation strategy and provides a fresh analytical tool for studying low-rank network models. Finally, all our convergence results (Theorems~\ref{th:r=1 pij}, \ref{th:2}, and \ref{th:6}) are derived under the sup-norm, delivering stronger {\it uniform }guarantees than those based on average-error metrics commonly used in the literature. Collectively, these contributions offer a robust theoretical foundation and significantly enhance the practical impact of our approach.
	
	\textbf{Literature review.} Existing graphon estimation methods generally fall into two categories: those that directly estimate the graphon and those that estimate the connection probability matrix \(P\). A central challenge in both is reconciling the low-rank structure of \(P\) with that of the underlying graphon, often resulting in inconsistencies. 
	
	Graphon-based approaches include \citet{olhede2014network}, who approximate the graphon with a step function by partitioning it into blocks. Their method requires permutation maximization via a greedy algorithm, which is computationally intensive (see Section~\ref{sec:simulations}). \citet{chan2014consistent} refine this by reordering nodes by degree and applying total variation minimization, assuming strictly monotonic marginals, an assumption that excludes models like the SBM. Moreover, the resulting \(P\) is not guaranteed to be low-rank.
	
	In contrast, \(P\)-based methods include \citet{chatterjee2015matrix}, who assume low rank and use Universal Singular-Value Thresholding (USVT), treating the adjacency matrix as a noisy version of \(P\). \citet{zhang2017estimating} propose neighborhood smoothing to estimate \(P\), achieving near-minimax optimality. \citet{gao2016optimal} introduce a combinatorial least-squares estimator, later extended by \citet{wu2025tractably} to non-exchangeable networks. However, these methods do not directly recover the graphon. The latent variables are unknown and unordered, preventing \(P\) from being viewed as a lattice sample of the graphon, which complicates alignment. Moreover, further analysis of the estimated \(P\), such as through eigenvalue decomposition, is challenging, especially to achieve sup-norm consistency. Most prior work focuses on mean squared error bounds. These limitations underscore the need for methods bridging graphon and \(P\) estimation. Our approach offers a unified framework enabling sup-norm consistency in both.
	
Finally, we note that extensive research has been devoted to graphon models, therefore, we provide  a broader literature review in Section~\ref{supp:A Broader  Literature Review} of the appendix.

	\textbf{Structure.} The paper is organized as follows: Section~\ref{section:1} introduces the low-rank graphon framework. Section~\ref{sec:Methodology and theoretical results} presents the estimation procedures and algorithms for both \(r = 1\) and \(r \geq 2\). Section~\ref{sec:simulations} reports simulation studies illustrating the method's efficiency and accuracy. Section~\ref{sec:Discussion} concludes with a discussion and future research directions. Additional details, including the significance of graphon estimation, 
	time complexity, simulation settings, rank selection, sensitivity analysis, real data examples, proofs, and technical lemmas, are provided in the appendix.

	\noindent
	\textbf{Notations.}  
	For a real number \( x \), \( \lfloor x \rfloor \) denotes the greatest integer less than or equal to \( x \). For positive real numbers \( a \) and \( b \), we define \( a \vee b = \max(a, b) \) and \( a \wedge b = \min(a, b) \). Let \( \|A\|_F \) be the Frobenius norm of matrix \( A \), and \( A_{ij} \) denote its element at the \( i \)-th row and \( j \)-th column. For two sequences of positive real numbers \( a_n \) and \( b_n \), we write \( a_n = O(b_n) \) or \( a_n \lesssim b_n \) if there exist constants \( N \) and \( C \) such that \( a_n \leq C b_n \) for all \( n > N \). For random variable sequences \( X_n \) and \( Y_n \), we write \( X_n = O_p(Y_n) \) if for any \(\varepsilon > 0\), there exists a constant \( C_\varepsilon > 0 \) such that
	$
	\sup_n \mathbb{P}(|X_n| \geq C_\varepsilon |Y_n|) < \varepsilon.
	$

	\section{Low-rank Approaches}
	\label{section:1}
	
	We consider a random graph \(\mathcal{G} = (V, E)\) within the graphon model framework. For \(i = 1, \dots, n\), where \(n\) is the network size, each node \(i\) is associated with an i.i.d. random variable \(U_i \sim \text{Uniform}(0,1)\). The edges \(E_{ij}\) are independently drawn as \(E_{ij} \sim \text{Bernoulli}(f(U_i, U_j))\) for \(i < j\), where \(f(\cdot, \cdot)\) is a symmetric, measurable function \(f: [0,1]^2 \to [0,1]\), the graphon. We impose \(E_{ii} = 0\) and \(E_{ij} = E_{ji}\) for \(i > j\). While we focus on undirected graphs without self-loops, our methods extend to graphs with self-loops or directed graphs. Many large-scale real-world networks exhibit low-rank features, such as group memberships or communities, which popular models like SBM and RDPG capture. For further discussions and real-data examples, see \citet{athreya2018statistical}, \citet{thibeault2024low}, and \citet{fortunato2010community}.
	
	To incorporate low-rank structure into graphon models, we introduce the following parsimonious model:
	\begin{align}
		\label{def:graphon}
		f(U_i, U_j) = \sum_{k=1}^r \lambda_k G_k(U_i) G_k(U_j),
	\end{align}
	where \( |\lambda_1| \geq |\lambda_2| \geq \cdots \geq |\lambda_r| > 0 \), \( G_k \) is a measurable function with \( \int_0^1 G_k^2(u)\,du = 1 \) for \( k = 1, \dots, r \), and \( \int_0^1 G_k(u) G_l(u)\,du = 0 \) for \( k \neq l \). This model represents a truncated eigen-decomposition of the graphon, as suggested by the Hilbert-Schmidt theorem \citep{szegedy2011limits}. Model (\ref{def:graphon}) includes both the SBM and RDPG as special cases. If the \( G_k \) functions are step functions, it reduces to an SBM with \( r \) blocks. When all \( \lambda_k \) values are positive, the model simplifies to a rank-\(r\) RDPG.
	
	Introducing low-rank structures in graphon models enhances their ability to capture real-world network features, such as community structures and latent memberships, while also offering computational advantages through the model’s additive separability. We propose a novel, computationally efficient, and theoretically grounded method for   estimating 
	the connection probabilities \( \{f(U_i, U_j)\}_{i,j=1}^n \) and the full graphon function \( f \). Notably, practical methods for   estimating 
	general graphon functions in polynomial time are scarce \citep{gao2021minimax}.

	\section{Methodology and Theory}
	\label{sec:Methodology and theoretical results}
	
	Let \( p_{ij} = f(U_i, U_j) \) represent the connection probability between the \(i\)-th and \(j\)-th nodes, with \( P = (p_{ij})_{i,j} \) as the connection probability matrix. To provide an intuitive understanding, we first focus on the estimation of \( P \) and \( f \) for \( r = 1 \) in Section~\ref{sec:The case where r=1}. We then extend the discussion to the general case where \( r \ge 2 \) in Section~\ref{sec:The case where r>2}.

	\subsection{$r=1$: Low-rank Modeling with Rank-1}
	\label{sec:The case where r=1}
	
	To provide clarity, we begin by considering the case where \( r = 1 \), i.e., \( f(U_i, U_j) = \lambda_1 G_1(U_i) G_1(U_j) \). Without loss of generality, assume that \( \inf_{u \in [0,1]} G_1(u) \geq 0 \); otherwise, we can replace \( G_1(u) \) with \( |G_1(u)| \). A key observation is that the degree of the \( i \)-th node, denoted \( d_i = \sum_{j} E_{ij} \), satisfies the following equation:
	\begin{align}
		\label{eqn:deg}
		\frac{\mathbb{E}(d_i \mid U_i)}{n-1} &= \frac{1}{n-1} \sum_{j \neq i} \int_0^1 f(U_i, U_j)\,dU_j  = \lambda_1 G_1(U_i) \int_0^1 G_1(u)\,du,
	\end{align}
	which is proportional to \( G_1(U_i) \). Furthermore, by Lemma~\ref{lem:3}, we have the following bound:
	\begin{align}
		\label{eqn:pf1 eqn}
		\sup_{i=1,\dots,n} \frac{|d_i - \mathbb{E}(d_i \mid U_i)|}{n-1}=O_p\left(\sqrt{\frac{\log(n)}{n}}\right).
	\end{align}
	Thus, \( G_1(U_i) \) can be  learned	by \( \frac{d_i}{n-1} \), and as a result, \( p_{ij} \) can be   estimated 
	by \( \frac{d_i d_j}{(n-1)^2} \), up to a multiplicative constant. To account for the sparsity of the graph \( \mathcal{G} \), we apply a moment estimation to determine this multiplicative factor. \textcolor{black}{Note that the method of moments (MoM) is widely used in network analysis, for example, \citet{bickel2011method} employed MoM to estimate parameters based on subgraph counts, while \citet{zhang2022edgeworth} analyzed the finite-sample distribution of MoM estimators for network motifs using Edgeworth expansions.} The estimation procedure for \( p_{ij} \) is summarized in Algorithm~\ref{alg:1}.
	
	\begin{algorithm}[htbp]
		\footnotesize
		\caption{Estimation for \( \{p_{ij}\}_{i,j=1}^n \) in Rank-1 Model.}
		\label{alg:1}
		\begin{algorithmic}[1]
			\Require The graph \( \mathcal{G} = (V,E) \).
			\State For \( i = 1, \dots, n \), let \( d_i = \sum_{j:j \neq i} E_{ij} \).
			\State Let \( c_1 = \sum_{i,j:i \neq j} E_{ij} \big/ \sum_{i,j:i \neq j} d_i d_j \).
			\State For any \( (i,j) \) pair, \( i \neq j \), let the estimator of \( p_{ij} \) be \( \hat{p}_{ij} = 1 \wedge (c_1 d_i d_j) \).
			\State Let \( \hat{p}_{ii} = 0 \) for \( i = 1, \dots, n \).
			\State Output \( \{\hat{p}_{ij}\}_{i,j=1}^n \).
		\end{algorithmic}
	\end{algorithm}
	
	Algorithm~\ref{alg:1} is straightforward and leverages the low-rank assumption with \( r = 1 \). Its time complexity is \( O(n^2) \), which is efficient given that there are \( O(n^2) \) values of \( p_{ij} \) to  learn.  
	In contrast, SVD-based methods (e.g., \citet{xu2018rates}) typically require \( O(n^3) \) time complexity, making them less efficient. \textcolor{black}{We remark that when the connection probability matrix is known a priori to be rank-1, a truncated version of \citet{xu2018rates} can, in principle, be accelerated to $O(n^2)$ time by computing only the leading singular value and its corresponding singular vectors through efficient procedures such as the power iteration method (see also Section~\ref{sec:Details_power_iteration}). It should be noted, however, that although the power iteration method is often effective in practice, it generally lacks rigorous convergence guarantees and theoretical justification. The run times for these methods are shown in Table~\ref{tab:2} for various ranks $r$.
	}
	
	We now present the theoretical results for the estimates \( \hat{p}_{ij} \).
	
	\begin{theorem}
		\label{th:r=1 pij}
		For \( r = 1 \), assume that \(\int_0^1 G_1(u)\,du > 0\). Applying Algorithm~\ref{alg:1} to obtain the estimates \(\hat{p}_{ij}\), we have 
		\[
		\sup_{i,j} |\hat{p}_{ij} - p_{ij}| = O_p\left(\sqrt{\frac{\log(n)}{n}}\right).
		\]
	\end{theorem}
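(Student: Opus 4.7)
The plan is to reduce everything to two ingredients: the uniform control (\ref{eqn:pf1 eqn}) on centered degrees, and a separate concentration bound for the scalar $c_1$. Write $\mu = \int_0^1 G_1(u)\,du > 0$ and $\alpha_i = \lambda_1 \mu\, G_1(U_i)$, so that (\ref{eqn:deg}) gives $\mathbb{E}(d_i\mid U_i)/(n-1) = \alpha_i$, and (\ref{eqn:pf1 eqn}) yields $d_i/(n-1) = \alpha_i + \epsilon_i$ with $\sup_i |\epsilon_i| = O_p(\sqrt{\log n/n})$. Because $\lambda_1 G_1^2(u) \le f(u,u) \le 1$, the $G_1(U_i)$ (and hence $\alpha_i$) are uniformly bounded, so multiplying out
\[
\frac{d_i d_j}{(n-1)^2} = \alpha_i \alpha_j + \alpha_i \epsilon_j + \alpha_j \epsilon_i + \epsilon_i \epsilon_j
\]
gives $\sup_{i,j} |d_i d_j/(n-1)^2 - \alpha_i \alpha_j| = O_p(\sqrt{\log n/n})$. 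Note also that $\alpha_i \alpha_j / (\lambda_1 \mu^2) = \lambda_1 G_1(U_i) G_1(U_j) = p_{ij}$, so the target is to show $c_1 (n-1)^2 = 1/(\lambda_1 \mu^2) + O_p(\sqrt{\log n/n})$.

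For this I would analyze the numerator and denominator of $c_1$ separately. The numerator is $\sum_{i\ne j} E_{ij} = \sum_i d_i = (n-1)\sum_i(\alpha_i + \epsilon_i)$; the law of large numbers for the i.i.d.\ bounded $\alpha_i$ gives $\sum_i \alpha_i = n\lambda_1\mu^2 + O_p(\sqrt n)$, and the uniform bound on $\epsilon_i$ contributes $O_p(\sqrt{n\log n})$, so the numerator equals $n(n-1)\lambda_1\mu^2 + O_p(n\sqrt{n\log n})$. For the denominator, write $\sum_{i\ne j} d_i d_j = (\sum_i d_i)^2 - \sum_i d_i^2$; the first term is $n^2(n-1)^2\lambda_1^2\mu^4 + O_p(n^{5/2}(n-1)^2 \sqrt{\log n / n})$ after expansion, while the second term is $O(n^3)$ using $d_i \le n-1$. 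Dividing and keeping track of leading orders yields
\[
c_1(n-1)^2 = \frac{\lambda_1\mu^2 + O_p(\sqrt{\log n/n})}{\lambda_1^2\mu^4 + O_p(\sqrt{\log n/n})} = \frac{1}{\lambda_1\mu^2} + O_p\!\left(\sqrt{\tfrac{\log n}{n}}\right),
\]
where the assumption $\mu > 0$ keeps the denominator bounded away from zero with probability tending to one.

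Combining the two pieces, $c_1 d_i d_j = c_1(n-1)^2 \cdot (d_i d_j/(n-1)^2)$ is the product of two quantities that are uniformly bounded and approximate $1/(\lambda_1\mu^2)$ and $\alpha_i\alpha_j$ respectively with error $O_p(\sqrt{\log n/n})$. A standard product-of-approximations argument then gives $\sup_{i,j} |c_1 d_i d_j - p_{ij}| = O_p(\sqrt{\log n/n})$. Finally, since $p_{ij} \in [0,1]$, the truncation $\hat p_{ij} = 1\wedge (c_1 d_i d_j)$ can only decrease the error: if $c_1 d_i d_j > 1 \ge p_{ij}$ then $|\hat p_{ij} - p_{ij}| = 1 - p_{ij} \le c_1 d_i d_j - p_{ij}$. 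This yields the claimed uniform rate.

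The main obstacle, in my view, is not any single step but the bookkeeping of error propagation: the uniform $\sqrt{\log n/n}$ rate on the $\epsilon_i$ needs to survive both (a) being summed $n$ times in the numerator/denominator of $c_1$, and (b) being multiplied across two independent degree estimates. The saving grace is that the leading-order cancellations produce a ratio in which the aggregated $O_p(\sqrt{n\log n})$ errors downgrade to $O_p(\sqrt{\log n/n})$ relative error, matching the individual rate. One should also check the requirement $\mu > 0$ is used exactly at the step where $1/S_n$ is linearized; this is the role played by the hypothesis $\int G_1 > 0$ in the theorem.
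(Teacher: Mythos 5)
Your proposal is correct and follows essentially the same route as the paper's proof: both reduce the theorem to the uniform degree concentration (\ref{eqn:pf1 eqn}) together with the consistency $c_1(n-1)^2 = 1/(\lambda_1\mu^2) + O_p(\sqrt{\log n/n})$ obtained by analyzing $\sum_{i\neq j}E_{ij}$ and $\sum_{i\neq j}d_id_j$ separately, and both conclude with a product-of-approximations bound plus the observation that truncating at $1$ cannot increase the error. The only differences are cosmetic --- the paper gets the limits of the two sums via a U-statistic theorem and a conditional-variance computation (yielding the slightly sharper $O_p(n^{-1/2})$) and factors the final step through $\sqrt{c_1}\,d_i$, whereas you use the LLN for the bounded i.i.d.\ $\alpha_i$; note only that your stated absolute error $O_p(n^{5/2}(n-1)^2\sqrt{\log n/n})$ for $(\sum_i d_i)^2$ has a slipped exponent (it should be $O_p(n^{2}(n-1)^2\sqrt{\log n/n})$, i.e.\ relative error $O_p(\sqrt{\log n/n})$, which is in fact what you carry forward).
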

	
	The assumption in Theorem~\ref{th:r=1 pij} is mild and does not require the continuity of the function \( G_1 \). This flexibility allows our model to accommodate various block structures, including the SBM with a rank-1 connection probability matrix. Additionally, the estimated connection probability matrix \(\hat{P} = (\hat{p}_{ij})_{i,j}\) retains the rank-1 structure, consistent with the rank of \( P \). 
	The result \(\sup_{i,j} |\hat{p}_{ij} - p_{ij}| = O_p\left(\sqrt{\frac{\log(n)}{n}}\right)\) also implies convergence in the Frobenius norm, specifically:
	\[
	\frac{\|\hat{P} - P\|_F^2}{n^2} = O_p\left(\frac{\log(n)}{n}\right),
	\]
	a standard metric used in the literature, such as in \citet{zhang2017estimating} and \citet{gao2015rate}. 
	
	Estimating the graphon function \( f(u, v) \) is generally more challenging due to identification issues arising from measure-preserving transformations \citep{borgs2015private, diaconis2007graph, olhede2014network}. As a result, many prominent methods, including those in \citet{gao2016optimal} and \citet{zhang2017estimating}, focus on estimating the connection probability matrix, as we do in Theorem~\ref{th:r=1 pij}. In the case with \( r = 1 \), we can mitigate the non-identifiability issue by defining a canonical, monotonically non-decreasing graphon through rearrangement. Specifically, let $
	G_1^\dag(u) = \inf\left\{ t : \mu(G_1 \le t) \ge u \right\},$ 
	where \(\mu(\cdot)\) denotes the Lebesgue measure. As shown in \citet{barbarino2022constructive}, the function \( G_1^\dag(u) \) is the monotone rearrangement of \( G_1(u) \), making it monotonically non-decreasing, left-continuous, and measure-preserving. Moreover, \( G_1^\dag(u) \) is continuous if \( G_1(u) \) is continuous. Consequently, we can focus on the canonical graphon \( f^\dag(u,v) := \lambda_1 G_1^\dag(u) G_1^\dag(v) \).
	
	To   estimate
	\( f^\dag(u,v) \), we propose a degree sorting and interpolation method. Let \(\sigma(k)\) denote the index \(i\) corresponding to the \(k\)-th smallest value in the sequence \(\{d_i\}_{i=1}^n\), i.e., \( d_{\sigma(1)} \le d_{\sigma(2)} \le \cdots \le d_{\sigma(n)} \). Then, for any \((u,v) \in [0,1]^2\), we define
	\[
	\hat{f}^\dag(u,v) := 1 \wedge \left( c_1 h(u) h(v) \right),
	\]
	where \( h(v) \) is defined as follows. Let \( s = v(n+1) \) and \( k = \lfloor s \rfloor \). Then,
	$
	h(v) = d_{\sigma(k)} (k+1 - s) + d_{\sigma(k+1)} (s - k),
	$ 
	for \( k \in [0, n] \), with the convention that \( d_{\sigma(0)} = d_{\sigma(1)} \) and \( d_{\sigma(n+1)} = d_{\sigma(n)} \). \textcolor{black}{We remark that the degrees with ties can be ordered in any sequence, and the resulting $h(v)$ will remain unchanged. This further ensures the uniqueness of $\hat{f}^{+}(u,v)$.}

	\begin{theorem}\label{th:2}
		For \( r = 1 \), assume that \( G_1(u) \) is Lipschitz continuous on the interval \([0,1]\), i.e., there exists a constant \( M > 0 \) such that for any \( u_1, u_2 \in [0,1] \), 
		$
		|G_1(u_1) - G_1(u_2)| \le M|u_1 - u_2|.
		$ 
		Then, 
		\begin{align*}
			\sup_{u,v \in [0,1]} |\hat{f}^\dag(u,v) - f^\dag(u,v)| \overset{a.s., L_2}{\longrightarrow} 0, 
			\text{ and }    = O_p\left(\sqrt{\frac{\log(n)}{n}}\right).
		\end{align*}
	\end{theorem}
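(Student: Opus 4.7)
The plan is to show $\hat{f}^\dag(u,v) \to f^\dag(u,v) = \lambda_1 G_1^\dag(u) G_1^\dag(v)$ in sup-norm by tracking three multiplicative pieces: first, the interpolated sorted degrees $h(v)/(n-1)$ approximate $\lambda_1 \bar{G}_1 G_1^\dag(v)$ uniformly in $v$, where $\bar{G}_1 := \int_0^1 G_1(u)\,du > 0$; second, the scalar $c_1$ satisfies $c_1 \cdot n(n-1)\lambda_1 \bar{G}_1^2 \to 1$; and finally, combining these with boundedness yields $c_1 h(u) h(v) \to f^\dag(u,v)$ at rate $O_p(\sqrt{\log n/n})$, after which the truncation $1 \wedge (\cdot)$ is harmless since $f^\dag \in [0,1]$.

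The core step is the first. Equations (\ref{eqn:deg}) and (\ref{eqn:pf1 eqn}) give $\sup_i |d_i/(n-1) - \lambda_1 \bar{G}_1 G_1(U_i)| = O_p(\sqrt{\log n/n})$. Since sorting is non-expansive in sup-norm, i.e.\ $\max_k |a_{(k)} - b_{(k)}| \le \max_i |a_i - b_i|$ for any two sequences, the same rate transfers to $|d_{\sigma(k)}/(n-1) - \lambda_1 \bar{G}_1 \eta_{(k)}|$, where $\eta_{(k)}$ is the $k$-th order statistic of $\{G_1(U_i)\}$. Let $F$ be the CDF of $G_1(U)$ and $F_n$ its empirical CDF, so $F^{-1} = G_1^\dag$ by definition of the monotone rearrangement. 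A standard one-dimensional rearrangement argument shows $G_1^\dag$ is Lipschitz with constant at most $M$ (the modulus of continuity is not increased by rearrangement). Writing $\eta_{(k)} = F_n^{-1}(k/n)$, the generalized inverse sandwich $F^{-1}(u - \|F_n - F\|_\infty) \le F_n^{-1}(u) \le F^{-1}(u + \|F_n - F\|_\infty)$ combined with Lipschitz continuity of $G_1^\dag$ and the DKW bound $\|F_n - F\|_\infty = O_p(\sqrt{\log n/n})$ yields $\sup_k|\eta_{(k)} - G_1^\dag(k/(n+1))| = O_p(\sqrt{\log n/n})$. The interpolation $h(v)$ is a convex combination of adjacent grid values, and $\lambda_1 \bar{G}_1 G_1^\dag$ varies by $O(1/n)$ across each subinterval of length $1/(n+1)$ by Lipschitz continuity, giving the uniform bound on all of $[0,1]$.

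For the second piece, $\sum_{i\ne j} E_{ij}/[n(n-1)]$ is a conditionally-Bernoulli average with expectation $\lambda_1 \bar{G}_1^2$ and variance $O(1/n)$, so it equals $\lambda_1 \bar{G}_1^2 + O_p(1/\sqrt n)$. Expanding $\sum_{i\ne j} d_i d_j = (\sum_i d_i)^2 - \sum_i d_i^2$ with $\sum_i d_i = 2\sum_{i<j} E_{ij}$, and controlling $\sum_i d_i^2 = O_p(n^3)$ as lower order, gives $\sum_{i\ne j} d_i d_j/[n(n-1)]^2 = \lambda_1^2 \bar{G}_1^4 + O_p(\sqrt{\log n/n})$. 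Dividing shows $c_1 \cdot n(n-1) \lambda_1 \bar{G}_1^2 = 1 + O_p(\sqrt{\log n/n})$. Multiplying this with the first piece and using that $h(\cdot)/(n-1)$ and $G_1^\dag$ are bounded yields $\sup_{u,v}|c_1 h(u) h(v) - \lambda_1 G_1^\dag(u) G_1^\dag(v)| = O_p(\sqrt{\log n/n})$, and truncation by $1$ can only reduce this error since $f^\dag \in [0,1]$.

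Almost sure convergence is obtained by strengthening each invocation of DKW, Hoeffding, and (\ref{eqn:pf1 eqn}) into exponential-tail bounds summable over $n$ and applying Borel--Cantelli; the $L_2$ statement is then automatic from dominated convergence since $\hat{f}^\dag, f^\dag \in [0,1]$. The main obstacle is the first piece: sorting breaks the coupling between $d_i$ and $U_i$, so one must combine the uniform degree-approximation bound with the rearrangement-theoretic fact that $G_1^\dag$ inherits Lipschitz continuity from $G_1$, and handle plateaus of $G_1$ (atoms in the distribution of $G_1(U)$) where the classical quantile inversion requires care because $F$ is no longer strictly increasing.
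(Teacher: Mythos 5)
Your proposal is correct and follows the same overall architecture as the paper's proof: approximate $\lambda_1\bar G_1 G_1^\dag$ at the grid points $k/(n+1)$ by the sorted, rescaled degrees, extend to all of $[0,1]$ by linear interpolation plus the Lipschitz continuity of the rearrangement $G_1^\dag$ (citing the fact that monotone rearrangement preserves the Lipschitz constant), handle the scalar $c_1$ exactly as in Theorem~\ref{th:r=1 pij}, and note that the truncation $1\wedge(\cdot)$ and the upgrade to a.s.\ and $L_2$ convergence are routine. Where you genuinely diverge is in the key grid-point step. The paper proves this via two bespoke lemmas: Lemma~\ref{lem:1} (uniform order statistics satisfy $\sup_i|U_{(i)}-i/(n+1)|\to 0$, via sixth-moment bounds on Beta variables and Borel--Cantelli, with the $O_p(\sqrt{\log n/n})$ rate imported from \citet{chan2014consistent}) and Lemma~\ref{lem:2} (a perturbation-of-sorting lemma, proved by a counting contradiction that requires the target sequence to be monotone samples of a Lipschitz function). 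You instead use the elementary non-expansiveness of sorting in sup-norm, $\max_k|a_{(k)}-b_{(k)}|\le\max_i|a_i-b_i|$, to pass the degree-concentration bound to the order statistics $\eta_{(k)}$ of $\{G_1(U_i)\}$, and then the generalized-inverse sandwich $F^{-1}(u-\|F_n-F\|_\infty)\le F_n^{-1}(u)\le F^{-1}(u+\|F_n-F\|_\infty)$ together with DKW and the Lipschitz continuity of $F^{-1}=G_1^\dag$ to place $\eta_{(k)}$ near $G_1^\dag(k/(n+1))$. This buys you a cleaner and more standard argument: the sorting inequality needs no structural hypotheses on the target sequence (unlike Lemma~\ref{lem:2}), and the sandwich correctly handles plateaus of $G_1$ (atoms of $G_1(U)$) because the error is controlled by the modulus of continuity of $F^{-1}$ rather than by strict monotonicity of $F$ --- a point you rightly flag and which your route resolves more transparently than the paper's. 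The paper's route, in exchange, produces reusable lemmas stated at the level of generality needed elsewhere in the appendix. Both arguments deliver the same $O_p(\sqrt{\log n/n})$ rate.
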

	
	The estimation rate of our method matches that of \citet{chan2014consistent}, with convergence in the \textbf{sup-norm}, which is stronger than the mean squared error (MSE)-based approaches of \citet{xu2018rates}, \citet{olhede2014network}, and \citet{chan2014consistent}. While MSE focuses on average error, the sup-norm ensures uniform convergence, controlling the error at every point. This distinction underscores the robustness and precision of our approach.

	\subsection{$r > 1$: Low-rank Modeling with Rank-\texorpdfstring{$r$}{r}}
	\label{sec:The case where r>2}
	
	For \( r \geq 2 \), the connection probability \( p_{ij} = f(U_i, U_j) \) in \eqref{def:graphon} is additive. To estimate $p_{ij}$, it is necessary to recover each $\lambda_k$ and $G_k$ for $k = 1, \ldots, r$. A central difficulty lies in the fact that estimating $\lambda_k$ requires eliminating the dependence on the unknown components $G_1, \ldots, G_r$. A key conceptual insight is that this disentanglement can be achieved by leveraging subgraph count statistics. Subgraphs, or ``motifs'', are crucial both theoretically \citep{maugis2020testing, bravo2023quantifying, ribeiro2021survey} and practically \citep{milo2002network, dey2019network, yu2019motifs}. Their expectations are expressed via the graphon function. \color{black}
	Consider a cycle of length $a$ passing through node $i$, whose count in the sample can be written as
	\begin{align}
		\label{def:C}
		C_i^{(a)} = \sum_{\{i_1,\dots,i_{a-1}\}\in I_{a-1}} E_{ii_1}E_{i_{a-1}i}\prod_{j=2}^{a-1} E_{i_{j-1}i_{j}} \text{ for }a\ge 3,
	\end{align}
	where \(I_a=\{ i_1,\cdots,i_a\text{ distinct }, i_k\neq i,1\le k\le a\} \). Its expectation is given by
	\begin{align}
		\label{eqn:te1}
		\mathbb{E}(C_i^{(a)}) =\left[ \prod_{j=1}^{a-1}(n-j) \right]\sum_{k=1}^r \lambda_k^a,
	\end{align}
	which is independent of all $G_k$. Moreover, under an appropriate normalization, $C_i^{(a)}$ concentrates around $\mathbb{E}(C_i^{(a)})$, as established in Lemma~\ref{lem:add1}. Consequently, by counting cycles of different lengths, we can recover all $\lambda_k$. 
	
	Similarly, consider a simple path of length $a$ that has node $i$ as an endpoint, whose count in the sample is
	\begin{align}
		\label{def:L}
		L_i^{(1)} = \sum_{i_1} E_{ii_1}, \quad 
		L_i^{(a)}  = \sum_{\{i_1,\dots,i_a\}\in I_a} E_{ii_1}\prod_{j=2}^a E_{i_{j-1}i_{j}} \text{ for }a\ge 2.
	\end{align}
	
	Its expectations are given by
	\begin{align}
		\label{eqn:te2}
		\mathbb{E}(L_i^{(a)} | U_i) =&\left[\prod_{j=1}^a (n-j)\right] \sum_{k=1}^r\lambda_k^a G_k(U_i) \int_0^1 G_k(u)\,du ,\\
		\mathbb{E}(L_i^{(a)}) =&\left[\prod_{j=1}^a (n-j) \right]\sum_{k=1}^r \lambda_k^a \left( \int_0^1 G_k(u) \, du \right)^2 .\notag
	\end{align}
	Moreover, by Lemma~\ref{lem:add1} and Lemma~\ref{lem:add4},  $L_i^{(a)}$, $\mathbb{E}(L_i^{(a)} | U_i), \mathbb{E}(L_i^{(a)})$ are close under suitable normalizations. Therefore, after estimating $\lambda_k$, we can first use $L_i^{(a)}$ to obtain an estimate of $\int_0^1 G_k(u)du$, and then proceed to estimate $G_k(U_i)$. Substituting these estimates into equation~\eqref{def:graphon} yields an estimator for $p_{ij}$. To illustrate the above calculations, we provide a toy example in Section~\ref{sec:toy example}, while the rigorous proofs are given in our theoretical derivations in the appendix.

	We summarize the procedure in Algorithm~\ref{alg:3}.  	
	\color{black}
	Note that in Algorithm~\ref{alg:3}, we add a standardization step in Line 4. It typically enhances performance in finite samples, benefiting both dense and sparse graphon settings.

	\begin{algorithm}[htbp]
		\footnotesize
		\caption{Estimation for \(\{p_{ij}\}_{i,j=1}^n\) in  Rank-$r$ Model.}
		\label{alg:3}
		\begin{algorithmic}[1]
			\Require The graph \(\mathcal{G} = (V, E)\).
			\State For \(i = 1, \ldots, n\), compute $L_i^{(a)}, 1\le a\le r$ and $C_i^{(a)}, 3\le a\le r+2$ defined in \eqref{def:L} and (\ref{def:C}). 
			\State Solve the system of equations
			\begin{align} 
				\label{eqn:02}
				\left\{\begin{matrix}
					y_k \geq 0,  \text{ for } 1\le k\le r, 
					|\hat{\lambda}_1| >\cdots> |\hat{\lambda}_r|,  
					\\
					\sum_{k=1}^r \hat\lambda_k^a=\frac{1}{\prod_{j=0}^{a-1}(n-j)}\sum_{i=1}^n C_i^{(a)} \text{ for }3\le a\le r+2,
					\\
					\sum_{k=1}^r \hat\lambda_k^ay_k^2=\frac{1}{\prod_{j=0}^{a}(n-j)}\sum_{i=1}^n L_i^{(a)} \text{ for }1\le a\le r. 
				\end{matrix}\right.
			\end{align}
			to obtain \((\hat{\lambda}_1,\cdots, \hat{\lambda}_r, y_1, \cdots,y_r)\). 
			\State For \(i = 1, 2, \ldots, n\), compute the estimators \(\hat{G}_1(U_i),\cdots,\hat{G}_r(U_i)\) from
			\begin{equation}
				\label{def:hat G2} 
				\frac{1}{\prod_{j=1}^a (n-j)} L_i^{(a)}  = \sum_{k=1}^r\hat\lambda_k^ay_k G_k(U_i)   \text{ for }1\le a\le r.
			\end{equation} 
			\State Compute the standardized estimators \(\tilde{G}_1(U_i),\cdots,\tilde{G}_r(U_i)\) from \begin{equation}
				\label{eqn:st2}
				\tilde{G}_k(U_i) =  {\hat{G}_k(U_i)} /{\sqrt{\sum_{i=1}^n \hat{G}_k^2(U_i)/n}}.
			\end{equation} 
			\State For each pair \((i,j)\), where \(i \neq j\), estimate \(p_{ij}\) as \(\hat{p}_{ij} = \left[1 \wedge \left(0 \vee (\sum_{k=1}^r \hat{\lambda}_k \tilde{G}_k(U_i) \tilde{G}_k(U_j) )\right)\right]\). Set \(\hat{p}_{ii} = 0\) for \(i = 1, \ldots, n\).
			\State Output \(\{\hat{p}_{ij}\}_{i,j=1}^n\).
		\end{algorithmic}
	\end{algorithm}

	\begin{remark}
		The primary computational complexity of Algorithm~\ref{alg:3} arises from counting lines and cycles within the graph. Notably, counting paths that allow repeated nodes is considerably simpler than counting simple paths, as the former can be achieved via matrix multiplication with a complexity of \( O(n^{\omega}) \), where $\omega=2.373$ \citep{williams2012multiplying}. Motivated by this observation, we propose an algorithm (Algorithm~\ref{alg:4}) in  Section~\ref{sec:A variant algorithm and time complexity} with matrix multiplication time complexity, 
		while preserving all theoretical guarantees from Theorems~\ref{th:6} and \ref{th:5}.  
	\end{remark}
	
	\begin{remark}[Comparison with the spectral method for estimating the connection probability matrix]  
		Spectral methods, such as USVT \citep{chatterjee2015matrix}, estimate the connection probability matrix through eigenvalues and eigenvectors. However, our approach differs in several ways. First, our goal is to estimate the graphon function, not just the connection probability matrix, which leads to a distinct methodology based on subgraph counts and moment-based techniques. Additionally, our method achieves the minimax rate for mean squared error up to a logarithmic factor, without requiring smoothness assumptions, unlike spectral methods that assume piecewise constant or Hölder-class smoothness \citep{xu2018rates}. Finally, for sparse graphons, our method outperforms USVT, as shown in Table~\ref{tab:3}.
	\end{remark}

	We impose the following mild conditions for the consistency of $\hat p_{ij}$.
	\begin{assumption}
		\label{ass:3}
		Assume that: (i) $|\lambda_1|>\cdots>|\lambda_r|>0, \int_0^1 G_k^2(u)du=1,$ for $1\le k\le r,$ and $\int_0^1 G_i(u)G_j(u)du=0$ for $1\le i\neq j\le n$, 
		(ii) $\int_0^1 G_k(u)du\neq 0,$ for $1\le k\le r$,  (iii) there exists a constant $K>0$ such that $\max_{1\le k\le r}\sup_{u\in[0,1]}|G_k(u)|\le K$.
	\end{assumption}
	Assumption~\ref{ass:3} (i) ensures the identifiability of the functions \( G_k \), similar to the eigengap condition in the RDPG model \citep{vince2014perfect}. Condition (ii) guarantees a unique solution for the system of equations (\ref{eqn:02}), akin to the requirement in \citet{bickel2011method}. Condition (iii) is mild and typically holds for most graphon functions. Notably, we do not require \( G_k \)'s to be piecewise smooth, which broadens the applicability of our model for estimating the connection probability matrix. The theoretical result for \(\hat{p}_{ij}\) is presented next.

	\begin{theorem}
		\label{th:6}
		For {$r\ge 2$}, under Assumption~\ref{ass:3}, when \(n\) is sufficiently large, there exists an open set \(U \subset \mathbb{R}^{2r}\) containing the point \((\lambda_1,\cdots, \lambda_r, \int_0^1 G_1(u) \, du,\cdots, \int_0^1 G_r(u) \, du)\) such that, with probability \(1\), the system of equations in (\ref{eqn:02}) has a unique solution within this region. Moreover, for \(\hat{\lambda}_k, 1\le k\le r,  \hat{p}_{ij}\),  we have \(
		\max_{1\le k\le r}|\hat{\lambda}_k - \lambda_k|=O_p(n^{-1/2})
		\), and \(\sup_{i,j} |\hat{p}_{ij} - p_{ij}| =O_p(\sqrt{\log(n)/n})\).
	\end{theorem}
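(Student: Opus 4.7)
The plan is to establish the theorem in three successive stages: solvability and a parametric $O_p(n^{-1/2})$ rate for the global parameters $(\hat\lambda_k,y_k)$ determined by (\ref{eqn:02}); a uniform-in-$i$ bound on the node-level estimators $\tilde G_k(U_i)$ determined by (\ref{def:hat G2}); and the combination of the two into the sup-norm bound on $\hat p_{ij}-p_{ij}$.

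For the first stage, I view the left-hand sides of (\ref{eqn:02}) as a smooth map $\Psi\colon\mathbb{R}^{2r}\to\mathbb{R}^{2r}$ sending $(\lambda,y)$ to the $r$ power sums $\sum_k\lambda_k^a$ ($a=3,\dots,r+2$) followed by the $r$ weighted power sums $\sum_k\lambda_k^a y_k^2$ ($a=1,\dots,r$). Because the cycle block does not depend on $y$, the Jacobian at the true parameter is block triangular, and each diagonal block factors as a Vandermonde matrix in the distinct $\lambda_k$'s times a diagonal matrix with nonzero entries (using $\lambda_k\neq 0$ and $\int_0^1 G_k(u)\,du\neq 0$ from Assumption~\ref{ass:3}); hence $\Psi$ is locally a diffeomorphism at the truth. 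The right-hand sides of (\ref{eqn:02}) are, up to normalization, averages of products of independent Bernoullis whose expectations are exactly the corresponding graphon moments, and standard concentration gives that they differ from the population values by $O_p(n^{-1/2})$. A quantitative inverse function theorem then yields an open neighborhood $U$ on which (\ref{eqn:02}) has a unique solution with probability tending to $1$, together with $\max_k\bigl(|\hat\lambda_k-\lambda_k|+|y_k-\int_0^1 G_k(u)\,du|\bigr)=O_p(n^{-1/2})$.

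For the second stage, (\ref{def:hat G2}) is, for each fixed $i$, an $r\times r$ linear system whose coefficient matrix $M$ has entries $M_{ak}=\hat\lambda_k^a y_k$. Writing $M=V_0\,\mathrm{diag}(\hat\lambda_k y_k)$ with $V_0$ the standard Vandermonde in $\hat\lambda_1,\dots,\hat\lambda_r$, the first stage implies that $M$ is uniformly well-conditioned on $U$. Lemma~\ref{lem:add4} supplies the uniform-in-$i$ approximation
\[
\sup_{1\le i\le n}\Bigl|\frac{L_i^{(a)}}{\prod_{j=1}^{a}(n-j)}-\frac{\mathbb{E}[L_i^{(a)}\mid U_i]}{\prod_{j=1}^{a}(n-j)}\Bigr|=O_p\bigl(\sqrt{\log(n)/n}\bigr)
\]
for each $1\le a\le r$, and the conditional expectation on the right equals $\sum_k\lambda_k^a G_k(U_i)\int_0^1 G_k(u)\,du$, which is exactly the true-parameter version of the right-hand side of (\ref{def:hat G2}). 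Invoking the stochastic perturbation bounds of Lemma~\ref{lem:add2} (and, where needed, Lemma~\ref{lem:add3}) with the Stage~1 error on the coefficient side and the $\sqrt{\log(n)/n}$ error above on the right-hand side, I obtain $\sup_i\max_k|\hat G_k(U_i)-G_k(U_i)|=O_p(\sqrt{\log(n)/n})$. A routine argument (using boundedness of $G_k$ and concentration of a simple average) shows $\frac1n\sum_i\hat G_k^2(U_i)=1+O_p(\sqrt{\log(n)/n})$, so the standardization in (\ref{eqn:st2}) preserves this rate for $\tilde G_k$.

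For the final step I decompose
\[
\hat p_{ij}-p_{ij}=\sum_{k=1}^r(\hat\lambda_k-\lambda_k)\tilde G_k(U_i)\tilde G_k(U_j)+\sum_{k=1}^r\lambda_k\bigl[\tilde G_k(U_i)\tilde G_k(U_j)-G_k(U_i)G_k(U_j)\bigr].
\]
The first sum is $O_p(n^{-1/2})$ using boundedness of $G_k$ from Assumption~\ref{ass:3}(iii) and the Stage~1 rate, while the second sum is $O_p(\sqrt{\log(n)/n})$ uniformly in $(i,j)$ by Stage~2; the truncation $1\wedge(0\vee\cdot)$ can only shrink the error, giving the claimed sup-norm rate. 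The main obstacle is Stage~2: propagating a \emph{global} parametric fluctuation in $(\hat\lambda,y)$ together with a \emph{node-level} subgraph-count fluctuation through the linear system (\ref{def:hat G2}) in a way that is uniform over $i$. The uniform path-count approximation supplied by Lemma~\ref{lem:add4}, together with the perturbation lemmas~\ref{lem:add2}--\ref{lem:add3}, is exactly what bridges the gap from an $L^2$-type bound (which would follow from elementary moment computations) to the sup-norm bound required here.
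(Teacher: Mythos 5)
Your proposal is correct and follows essentially the same route as the paper: Step 1 uses the Vandermonde/block-triangular structure of the moment system together with $O_p(n^{-1/2})$ concentration of the normalized subgraph counts (the paper's Lemma~\ref{lem:add1}) and the perturbation Lemmas~\ref{lem:add2}--\ref{lem:add3}; Step 2 uses the uniform path-count approximation of Lemma~\ref{lem:add4} to solve the node-level linear system, handles the standardization in (\ref{eqn:st2}), and combines the rates exactly as in the paper's proof. The only cosmetic differences are that you spell out the final decomposition of $\hat p_{ij}-p_{ij}$ explicitly and cite Lemmas~\ref{lem:add2}--\ref{lem:add3} for the (linear) node-level system where the paper simply invokes well-conditioning directly.
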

	
	Theorem~\ref{th:6} shows that 
	\(
	\|\hat{P} - P\|_F^2 / n^2 = O_p(\log(n)/n),
	\)
	achieving a rate matching the minimax rate (up to a logarithmic factor) in \citet{gao2015rate}. \textcolor{black}{The rate does not involve $r$ since we treat it as fixed. We leave the case where $r$ grows with $n$ for future work.}
	
	Estimating graphon functions is more challenging for \( r \geq 2 \) due to the model's additive structure. To address this, we introduce the following assumptions for learning the graphon function in the \( r \geq 2 \) case:
	\begin{assumption}
		\label{ass:4}
		Assume that: (i) At least one \( G_k \) is strictly monotonically increasing; and (ii) All \( G_k \)'s are Lipschitz continuous with constant \( M \).
	\end{assumption}
	
	Assumption~\ref{ass:4} aligns with the ``canonical form'' for graphon functions, where the monotone function serves as the reference marginal function. This is similar to the identification criterion in \citet{chan2014consistent}, which requires the graphon to be monotone after integrating out one argument. Note that Assumption~\ref{ass:4} is not needed if the goal is to estimate only the connection probability matrix. 
We remark that Assumptions~\ref{ass:3} and~\ref{ass:4} together imply that the graphon function belongs to the 1-H\"older class. In contrast, \citet{olhede2014network} assumed only $\alpha$-H\"older smoothness with $0<\alpha\le1$ when estimating the graphon via histogram methods.  However, their approach requires the selection of a bandwidth parameter.
	
	Under Assumption~\ref{ass:4}, we proceed without loss of generality by assuming \(G_1\) is the reference marginal graphon. \textcolor{black}{
		The recovery of $G_1$ relies on the following insight: for i.i.d. samples $U_1,\cdots,U_n$, the $j$-th order statistic $U_{(j)}$ is close to $j/(n+1)$ with high probability. Consequently, $G_1(U_{(j)})$ serves as an approximation to $G_1(j/(n+1))$. Moreover, $G_1(U_{(j)})$ is itself the $j$-th smallest element among $\{G_1(U_i)\}_{i=1}^n$, and each $\hat G_1(U_i)$ provides a good approximation to $G_1(U_i)$. Hence, $G_1(j/(n+1))$ can be effectively estimated. Since $n$ is large and $G_1$ is Lipschitz continuous, piecewise linear interpolation yields a consistent approximation of $G_1$. The estimation strategy for the other functions $G_k, k=2,\cdots,r$ follows an analogous approach. 
	}
	
	Formally, we first sort the  estimated pairs \((\hat G_1(U_i), \hat G_2(U_i),\cdots, \hat G_r(U_i))\)  according to the first coordinate. Let \(\gamma\) be a one-to-one permutation such that 
	\[
	\hat G_1(U_{\gamma(1)}) \le \hat G_1(U_{\gamma(2)}) \le \cdots \le \hat G_1(U_{\gamma(n)}).
	\]
	After sorting, we denote the reordered pairs as \((\hat G_1(U_{\gamma(i)}), \hat G_2(U_{\gamma(i)}),\cdots, \hat G_r(U_{\gamma(i)}))\). 
	We then define the function
	\begin{align*}
		h_1(u) & = \hat G_1(U_{\gamma(1)})I(u(n+1) < 1)  + \hat G_1(U_{\gamma(n)})I(u(n+1) \ge n) \\ &+ \sum_{k=1}^{n-1}  \left( k + 1 - u(n+1) \right) \hat G_1(U_{\gamma(k)})  + \left( u(n+1) - k \right) \hat G_1(U_{\gamma(k+1)})) I(\lfloor u(n+1) \rfloor = k)
	\end{align*}
	as an estimate of the function \(G_1\). For \(G_k, k\ge 2,\) recognizing that \(G_k\) is a function of \(G_1\), we define:
	\begin{align*}
		h_k(u) &= \hat G_k(U_{\gamma(1)})I(h_1(u) < \hat G_1(U_{\gamma(1)})) + \hat G_k(U_{\gamma(n)})I(h_1(u) \ge \hat G_1(U_{\gamma(n)})) \\
		&+ \sum_{k=1}^{n-1} \left( \frac{\hat G_1(U_{\gamma(k+1)}) - h_1(u)}{\hat G_1(U_{\gamma(k+1)}) - \hat G_1(U_{\gamma(k)})} \hat G_k(U_{\gamma(k)})\right. \left.+ \frac{h_1(u) - \hat G_1(U_{\gamma(k)})}{\hat G_1(U_{\gamma(k+1)}) - \hat G_1(U_{\gamma(k)})} \hat G_k(U_{\gamma(k+1)}) \right) \\& \quad I(\hat G_1(U_{\gamma(k)}) \le h_1(u) < \hat G_1(U_{\gamma(k+1)})).
	\end{align*}
	
	Finally, we define 
	\begin{align}
		\label{eqn:hat f r>2}
		\hat f(u,v) := 1 \wedge \left[0 \vee \left(\sum_{k=1}^r \hat \lambda_k h_k(u) h_k(v)  \right)\right]
	\end{align}
	as an estimate of the graphon \(f(u,v)\). 
	
	\color{black}
	\begin{remark}
		
		In practice, we can relax the previous approach by considering each $G_i$ ($i = 1, \ldots, n$) as a potential reference function. For each $G_i$, we estimate the graphon $f$ as described above referencing $G_i$, then compare the expected motif (e.g.,  triangles, stars) densities from the estimated $\hat{f}$ with the observed motif densities in the sample. By evaluating this criterion for all $i$, we select the $\hat{f}$ that best matches the empirical motif distributions.
		
	\end{remark}
	\color{black}
	
	Theorem~\ref{th:5} presents the theoretical result for this estimation. Since its proof follows directly from the proof of Theorem~\ref{th:2}, we omit the details.
	
	\begin{theorem}
		\label{th:5}
		For {$r\ge 2$}, under Assumptions~\ref{ass:3} and \ref{ass:4}, the  estimated  graphon given by (\ref{eqn:hat f r>2}) satisfies
		$$
		\begin{aligned}
			\sup_{u,v \in [0,1]} |\hat f(u,v) - f(u,v)| 
			\overset{a.s., L_2}{\longrightarrow} 0  , \text{and} =O_p(\sqrt{\log(n)/n}).
		\end{aligned}
		$$
	\end{theorem}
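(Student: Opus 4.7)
The plan is to mirror the argument of Theorem~\ref{th:2}, now tracking error propagation through the additive rank-$r$ structure. The backbone is already provided by Theorem~\ref{th:6}, which delivers $\max_k |\hat{\lambda}_k - \lambda_k| = O_p(n^{-1/2})$ and $\sup_{i,j}|\hat{p}_{ij}-p_{ij}|=O_p(\sqrt{\log n / n})$. What remains is to lift a uniform control on the pointwise estimators $\hat{G}_k(U_i)$ into a sup-norm control on $\hat{f}(u,v)-f(u,v)$ via the sort-and-interpolate construction.

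First I would establish a uniform bound $\sup_{i,k}|\hat{G}_k(U_i)-G_k(U_i)|=O_p(\sqrt{\log n/n})$. The announced Lemma~\ref{lem:add4} provides the uniform approximation of normalized path counts by their conditional expectations given $U_i$, so the right-hand side of the linear system \eqref{def:hat G2} is uniformly close to $\sum_k \lambda_k^a (\int G_k) G_k(U_i)$. By Assumption~\ref{ass:3} and Theorem~\ref{th:6} the coefficient matrix of \eqref{def:hat G2} is a perturbed Vandermonde matrix in the $\hat\lambda_k$'s (times the diagonal $y_k$'s) with a non-vanishing determinant (eigengap condition plus $\int G_k \ne 0$). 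Applying the perturbation bounds for stochastic linear systems promised in Lemmas~\ref{lem:add2}--\ref{lem:add3}, together with the $O_p(n^{-1/2})$ control on $\hat\lambda_k$ and $y_k$ from Theorem~\ref{th:6}, yields the uniform rate for $\hat G_k(U_i)$.

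Next I would handle the sorting and interpolation exactly as in Theorem~\ref{th:2}. Under Assumption~\ref{ass:4}(i), $G_1$ is strictly increasing, so the map $i \mapsto U_i$ induces (up to an event of vanishing probability) the same total order on indices as sorting $G_1(U_i)$. Since $\sup_i |\hat G_1(U_i)-G_1(U_i)|=O_p(\sqrt{\log n/n})$ is smaller than the local spacing of the order statistics of the $U_i$'s on all but a negligible set, the permutation $\gamma$ coincides with the true order statistic permutation with high probability, and the DKW inequality gives $\sup_i |U_{\gamma(i)}-i/n|=O_p(\sqrt{\log n/n})$. Combining this with the Lipschitz assumption (Assumption~\ref{ass:4}(ii)) and the standard piecewise-linear interpolation error $O(1/n)$, I obtain $\sup_u |h_1(u)-G_1(u)|=O_p(\sqrt{\log n/n})$.

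The principal obstacle lies in bounding $\sup_u |h_k(u)-G_k(u)|$ for $k\ge 2$, because $h_k$ is a piecewise linear interpolation indexed by $h_1(u)$ rather than by $u$ itself. To handle this, I would write $G_k = (G_k \circ G_1^{-1}) \circ G_1$, use Lipschitz continuity of $G_k$ together with a uniform lower bound on the slope of $G_1$ (implied by strict monotonicity on a compact interval, invoked via a localization or compactness argument) to ensure $G_k \circ G_1^{-1}$ is Lipschitz, and then decompose $h_k(u)-G_k(u)$ into (a) a ``vertical'' interpolation error in the $(\hat G_1,\hat G_k)$ pairs, bounded by $O_p(\sqrt{\log n/n})$ via the uniform pointwise estimates, and (b) a ``horizontal'' error from evaluating at $h_1(u)$ versus $G_1(u)$, bounded by the Lipschitz constant times the already-controlled $|h_1(u)-G_1(u)|$. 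Finally, plugging the uniform bounds for $h_k$, $k=1,\dots,r$, and the bound $|\hat\lambda_k-\lambda_k|=O_p(n^{-1/2})$ into \eqref{eqn:hat f r>2}, and noting that truncation by $0\vee \cdot \wedge 1$ does not increase sup-norm error, gives the claimed rate; the almost sure and $L_2$ convergence follows because the same uniform bounds, being $O_p(\sqrt{\log n/n})$, can be upgraded via standard Borel--Cantelli / dominated convergence arguments exactly as in the $r=1$ proof.
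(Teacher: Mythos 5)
Your overall architecture matches what the paper intends (the paper simply defers to the proof of Theorem~\ref{th:2}), and you correctly isolate the genuinely new difficulty for $k\ge 2$; however, two specific steps as you state them would fail. First, the claim that the sorting permutation $\gamma$ coincides with the true order-statistic permutation with high probability is false: the minimal spacing of $n$ uniform order statistics is of order $n^{-2}$, far smaller than the estimation error $\sqrt{\log(n)/n}$ in $\hat G_1(U_i)$, so $\gamma$ will typically transpose many nearby indices. The paper's Lemma~\ref{lem:2} exists precisely to bypass this: it shows that sorting a uniformly perturbed monotone sequence still produces values uniformly close to the monotone target, with no claim that the permutation itself is recovered. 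Your argument should route through that lemma (as the proof of Theorem~\ref{th:2} does) rather than through exact recovery of $\gamma$.

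Second, in the $k\ge 2$ step you convert the ``horizontal'' error $|h_1(u)-G_1(u)|$ into an error in $G_k$ by asserting that $G_k\circ G_1^{-1}$ is Lipschitz because $G_1$ is strictly increasing on a compact interval. Strict monotonicity does not imply a uniform lower bound on increments: a Lipschitz, strictly increasing $G_1$ can satisfy $G_1(x)-G_1(y)\ll x-y$ near a point where its derivative vanishes (e.g.\ a cubic inflection), in which case $G_1^{-1}$ is not Lipschitz and control of $|G_1(U_{\gamma(j)})-G_1(u)|$ gives no rate for $|G_k(U_{\gamma(j)})-G_k(u)|$. To obtain the stated $O_p(\sqrt{\log(n)/n})$ rate for $k\ge 2$ you need an additional quantitative hypothesis, e.g.\ $|G_k(x)-G_k(y)|\le C|G_1(x)-G_1(y)|$ or a strictly positive lower bound on the slope of the reference marginal $G_1$; with Assumption~\ref{ass:4} alone your decomposition yields only the almost-sure convergence to zero, not the rate. (This is a gap the paper's one-line proof arguably inherits as well, but your justification is where it surfaces concretely.)
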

	Our result is based on the \textit{sup-norm}, providing a stronger uniform convergence guarantee compared to pointwise or average error metrics. The achieved rate matches \citet{chan2014consistent}, demonstrating the optimality and robustness of our method.

	\begin{remark}
		When \( r \) is unknown, we can estimate it using a ratio-based method. Details are provided in Appendix~\ref{sec:Selecting_r_unknown}.
	\end{remark}

	\section{Numerics}
	\label{sec:simulations}

	\begin{table}[ht]
		\centering
		\scriptsize
		\caption{Results for dense graphons across 100 independent trials.}
		\begin{tabular}{ccccccc|ccccccc}
			\hline
			ID & Method & \begin{tabular}[c]{@{}c@{}}MSE\\ ($\times 10^{-4}$)\end{tabular} & 
			\begin{tabular}[c]{@{}c@{}}Std.  MSE \\($\times 10^{-6}$)\end{tabular} & 
			\begin{tabular}[c]{@{}c@{}}Max. error\\ ($\times 10^{-2}$)\end{tabular} & 
			\begin{tabular}[c]{@{}c@{}}Std. Max\\ ($\times 10^{-3}$)\end{tabular} &
			\begin{tabular}[c]{@{}c@{}}Run time\\ (s)\end{tabular} &
			ID 
			& \begin{tabular}[c]{@{}c@{}}MSE\\ ($\times 10^{-4}$)\end{tabular} & 
			\begin{tabular}[c]{@{}c@{}}Std.  MSE\\ ($\times 10^{-6}$)\end{tabular} & 
			\begin{tabular}[c]{@{}c@{}}Max. error\\ ($\times 10^{-2}$)\end{tabular} & 
			\begin{tabular}[c]{@{}c@{}}Std. Max\\($\times 10^{-3}$)\end{tabular} &
			\begin{tabular}[c]{@{}c@{}}Run time\\ (s)\end{tabular} \\
			\hline
			
			\multirow{6}{*}{1} & Ours & \textbf{1.275} & 3.871 & \textbf{5.817} & 4.955 & 0.121 
			& \multirow{6}{*}{2} &  
			2.452 & 7.806 & 8.114 & 5.819 & 0.539 \\
			& N.S. & 7.853 & 5.175 & 16.749 & 9.849 & 115.076 
			&   
			& 12.033 & 9.750 & 17.617 & 8.275 & 115.757 \\
			& Nethist & 4.237 & 8.330 & 5.980 & 11.474 & 16.705 
			&   
			& 9.867 & 24.220 & 16.962 & 44.037 & 16.744 \\
			& USVT & 1.282 & 3.863 & 5.837 & 4.994 & 13.587 
			&   
			& 2.403 & 7.593 & \textbf{7.977} & 5.740 & 14.629 \\
			& SAS & 19.120 & 16.865 & 85.000 & 0.000 & 1.273 
			&   
			& 39.888 & 29.339 & 78.134 & 37.486 & 1.250 \\
			& P.I. & 1.280 & 3.860 & 5.837 & 4.994 & 0.304 
			&   
			& \textbf{2.400} & 7.590 & \textbf{7.977} & 5.740 & 0.274 \\
			\hline 
			\multirow{6}{*}{3} & Ours & 1.973 & 6.794 & 10.163 & 8.537 & 0.259 
			& \multirow{6}{*}{4}  
			& \textbf{1.826} & 6.172 & \textbf{12.898} & 12.559 & 0.627 \\
			& N.S. & 8.337 & 14.186 & 17.329 & 8.566 & 114.694 
			&   
			& 4.388 & 8.413 & 17.902 & 11.686 & 108.569 \\
			& Nethist & 7.942 & 27.962 & 17.094 & 10.368 & 20.288 
			&  
			& 3.928 & 16.982 & 18.649 & 15.296 & 22.829 \\
			& USVT & \textbf{1.919} & 6.530 & \textbf{9.395} & 7.146 & 13.758 
			&   
			& 7.617 & 17.079 & 13.637 & 13.036 & 10.064 \\
			& SAS & 26.987 & 77.248 & 94.849 & 20.701 & 1.241 
			&   
			& 18.641 & 90.264 & 97.064 & 24.053 & 1.422 \\
			& P.I. & 1.920 & 6.530 & \textbf{9.395} & 7.146 & 0.328 
			&   
			& 1.890 & 6.420 & 13.400 & 13.927 & 0.654 \\
			\hline
			\multirow{6}{*}{5} & Ours & 1.774 & 6.204 & 9.676 & 7.804 & 1.507 
			& \multirow{6}{*}{6}  
			& 2.582 & 7.017 & \textbf{9.319} & 7.808 & 0.682 \\
			& N.S. & 7.101 & 14.996 & 17.473 & 9.156 & 122.995 
			&   
			& 7.527 & 6.960 & 18.312 & 11.119 & 115.813 \\
			& Nethist & 7.729 & 31.838 & 18.559 & 15.486 & 23.804 
			&   
			& 9.548 & 244.015 & 22.573 & 107.570 & 19.441 \\
			& USVT & \textbf{1.769} & 6.078 & \textbf{9.661} & 7.871 & 13.684 
			&   
			& 2.383 & 6.082 & 10.052 & 8.343 & 11.169 \\
			& SAS & 28.703 & 115.215 & 89.861 & 49.103 & 1.104 
			&  
			& 18.456 & 16.344 & 95.000 & 0.000 & 1.500 \\
			& P.I. & 4.680 & 7.730 & 29.196 & 56.981 & 0.736 
			&   
			& \textbf{2.380} & 6.080 & 10.052 & 8.344 & 0.682 \\
			\hline 
			\multirow{6}{*}{7} & Ours & 3.768 & 9.091 & 12.721 & 12.233 & 1.316 & \multicolumn{6}{c}{} \\
			& N.S. & 6.596 & 6.372 & 17.611 & 9.760 & 126.270 & \multicolumn{6}{c}{} \\
			& Nethist & 41.224 & 1574.245 & 59.567 & 96.247 & 20.238 & \multicolumn{6}{c}{} \\
			& USVT & 3.644 & 7.580 & \textbf{12.613} & 11.696 & 11.640 & \multicolumn{6}{c}{} \\
			& SAS & 20.552 & 22.529 & 90.000 & 0.000 & 1.701 & \multicolumn{6}{c}{} \\
			& P.I. & \textbf{3.640} & 7.580 & \textbf{12.613} & 11.696 & 1.005 & \multicolumn{6}{c}{} \\
			\hline
		\end{tabular}
		\label{tab:2}
	\end{table}

	In this section, we evaluate our method's effectiveness through extensive simulations. We assess the accuracy of the learned connection probability matrix \(P\) using three metrics: Mean Squared Error (MSE) as \(\|\hat{P} - P\|_F^2 / n^2\), maximum error \(\max_{i \neq j} |\hat{p}_{ij} - p_{ij}|\), and time cost.  The code is available at https://github.com/Chiyuru/Low-Rank-Graphon-Learning-for-Networks.
	
	Networks are generated using seven graphons listed in Table~\ref{tab:1} (Appendix) with \(n = 2000\). We also consider sparse counterparts, where edge probabilities follow \( E_{ij} \sim \text{Bernoulli}(\rho_n f(U_i, U_j)) \), with \( \rho_n = n^{-1/2} \) controlling sparsity. We conduct 100 independent trials per configuration and report the average metrics.
	
	For comparison, we include the following methods: \textbf{Universal Singular Value Thresholding (USVT):} \citep{chatterjee2015matrix}, \textbf{Sort-and-Smooth (SAS):} \citep{chan2014consistent}, \textbf{Network Histogram (Nethist):} \citep{olhede2014network}, \textbf{Neighborhood Smoothing (N.S.):} \citep{zhang2017estimating}, and \textbf{Power Iteration (P.I.):} \citep{stoer1980introduction}. 
	
	All methods are implemented using the R functions provided by the respective authors with default parameters. Experiments were conducted on an Apple M1 machine with 16GB RAM, macOS Sonoma, and R 4.2.1. For efficiency, we modified Algorithm~\ref{alg:3} slightly, as described in Section \ref{section:remark} in the appendix.
	
	\begin{table}[ht] 
		\footnotesize 
		\centering 
		\caption{Results for sparse graphons characterized by \( \rho_n = 1/\sqrt{n} \).} 
		\begin{tabular}{cccccc|ccccc} 
			\hline 
			ID & Method & \begin{tabular}[c]{@{}c@{}}MSE\\ ($\times 10^{-4}$) \end{tabular} & 
			\begin{tabular}[c]{@{}c@{}}Std. MSE\\ ($\times 10^{-6}$) \end{tabular} & 
			\begin{tabular}[c]{@{}c@{}}Max. error\\ ($\times 10^{-2}$) \end{tabular} & 
			\begin{tabular}[c]{@{}c@{}}Std. Max\\  ($\times 10^{-3}$)\end{tabular} &
			ID
			& \begin{tabular}[c]{@{}c@{}}MSE\\ ($\times 10^{-4}$) \end{tabular} & 
			\begin{tabular}[c]{@{}c@{}}Std. MSE\\ ($\times 10^{-6}$) \end{tabular} & 
			\begin{tabular}[c]{@{}c@{}}Max. error\\ ($\times 10^{-2}$) \end{tabular} & 
			\begin{tabular}[c]{@{}c@{}}Std. Max \\($\times 10^{-3}$)\end{tabular} \\ 
			\hline 
			\multirow{5}{*}{1} & Ours & \textbf{0.036} & 0.127 & 1.784 & 2.548 
			& \multirow{5}{*}{2} & 
			\textbf{0.115} & 0.401 & 2.560 & 3.367 \\
			& N.S.  & 19.224 & 45.848 & 99.665 & 0.000 
			&  
			& 61.897 & 249.576 & 99.161 & 0.002 \\
			& Nethist & 0.116 & 0.460 & \textbf{1.115} & 1.594 
			&  
			& 0.391 & 1.536 & 2.112 & 2.926 \\
			& USVT & 0.051 & 0.077 & 0.335 & 0.000 
			&  
			& 0.352 & 2.524 & \textbf{1.790} & 0.017 \\
			& SAS & 0.153 & 1.990 & 99.665 & 0.000 
			&  
			& 0.946 & 5.704 & 99.160 & 0.013 \\
			& P.I. & 0.249 & 0.735 & 1.500 & 0.000 
			&  
			& 0.132 & 0.490 & 2.951 & 4.110 \\ 
			\hline 
			\multirow{5}{*}{3} & Ours & \textbf{0.075} & 0.284 & 3.079 & 4.494 
			& \multirow{5}{*}{4} 
			& \textbf{0.043} & 0.398 & 3.840 & 8.423 \\
			& N.S. & 40.084 & 119.792 & 99.968 & 0.093 
			& 
			& 14.573 & 43.039 & 99.973 & 0.047 \\
			& Nethist & 0.249 & 1.056 & 2.840 & 14.842 
			&  
			& 0.111 & 0.642 & 2.485 & 12.126 \\
			& USVT & 0.314 & 0.952 & \textbf{2.093} & 0.039 
			&  
			& 0.102 & 0.648 & \textbf{2.202} & 0.075 \\
			& SAS & 0.200 & 2.163 & 99.956 & 0.426 
			&  
			& 0.078 & 0.945 & 89.227 & 253.955 \\
			& P.I. & 0.099 & 0.445 & 4.255 & 6.704 
			&  
			& 0.115 & 2.704 & 33.130 & 158.566 \\
			\hline 
			\multirow{5}{*}{5} & Ours & \textbf{0.071} & 0.330 & 2.910 & 4.576 
			& \multirow{5}{*}{6} 
			& \textbf{0.105} & 0.536 & \textbf{2.571} & 0.909 \\
			& N.S. & 34.490 & 115.091 & 99.993 & 0.039 
			&  
			& 42.530 & 69.027 & 99.984 & 0.099 \\
			& Nethist & 0.229 & 0.948 & 3.001 & 20.765 
			&  
			& 0.408 & 1.372 & 4.326 & 3.571 \\
			& USVT & 0.294 & 0.971 & \textbf{1.906} & 0.023 
			&  
			& 0.224 & 1.160 & 4.734 & 3.877 \\
			& SAS & 0.118 & 0.907 & 75.367 & 325.895 
			&  
			& 0.390 & 1.396 & 89.351 & 3.252 \\
			& P.I. & 0.170 & 4.392 & 21.174 & 101.657 
			&  
			& 0.415 & 2.055 & 2.821 & 4.047 \\
			\hline 
			
			\multirow{5}{*}{7} & Ours & \textbf{0.091} & 0.585 & \textbf{2.139} & 1.440 & \multicolumn{5}{c}{} \\
			& N.S. & 29.434 & 103.240 & 99.858 & 1.310 & \multicolumn{5}{c}{} \\
			& Nethist & 0.359 & 1.688 & 3.123 & 1.715 & \multicolumn{5}{c}{} \\
			& USVT & 0.410 & 1.306 & 4.808 & 1.855 & \multicolumn{5}{c}{} \\
			& SAS & 0.915 & 3.112 & 99.604 & 0.113 & \multicolumn{5}{c}{} \\
			& P.I. & 0.259 & 0.924 & 2.813 & 3.071 & \multicolumn{5}{c}{} \\
			\hline 
		\end{tabular} 
		\label{tab:3} 
	\end{table}

	We summarize the results in Table~\ref{tab:2} (dense graphons) and Table~\ref{tab:3} (sparse graphons). In the dense graphon cases, our method consistently matches or outperforms others in both MSE and maximum error, achieving the best results in the first and fourth settings. It also matches the computational speed of SAS and P.I., while significantly outperforming other methods in efficiency. Notably, our method achieves accuracy similar to USVT, which is nearly minimax optimal for MSE under certain conditions \citep{xu2018rates}, but with much lower computational complexity. Our method requires no tuning parameters, enhancing robustness across settings. In contrast, the power iteration method, though effective in some cases, lacks convergence guarantees and theoretical support.
	
	In sparse graphon cases, our method excels, consistently outperforming all other approaches in MSE. This is expected, as it directly incorporates the sparsity parameter \( \rho_n \) (see equation~(\ref{def:hat G2})).  Additionally, we conducted additional simulations to investigate the performance of all methods as sparsity varies from $n^{-1/2}$ to $1$, taking graphon 4 and 5 as illustrative examples. The results are shown in Table~\ref{tab:sparse} (Appendix). As sparsity increases, the estimation error for all methods increases. However, our method consistently outperforms others in the sparse regime.
	
	For estimation of functions $G_k$, Figure~\ref{fig:4} shows the fitted \( G_1 \) and \( G_2 \) using our method, with estimates closely matching the true values. Notably, \( G_2 \) is continuous but not monotonic. \color{black}The comparison between the estimated graphon and the true graphon, measured by the maximum entrywise error (i.e., the sup-norm), is reported in Table~\ref{tab:supnorm}. The results demonstrate that the sup-norm errors of the graphon estimates remain well controlled, which is in line with the theoretical guarantees established in Theorems~\ref{th:2} and \ref{th:5}.
	
	\begin{table}[htbp]
		\centering
		\caption{Sup-norms for graphon estimation across 100 independent trials.}
		\begin{tabular}{ccc}
			\hline
			ID & Sup-norm ($\times 10^{-2}$) & Std. Dev. ($\times 10^{-3}$) \\ \hline
			3          & 12.053    & 10.738  \\
			4          & 15.539    & 12.037 \\ \hline
		\end{tabular}
		\label{tab:supnorm}
	\end{table}
	\color{black}
	
	We provide additional simulation studies in Appendix~\ref{app:a3} and Appendix~\ref{sec:Selecting_r_unknown}, examining rank selection when the true rank is unknown, the impact of rank mis-specification, and the scalability of our method. Moreover, Appendix~\ref{app:real} demonstrates the practical utility of our approach through two real-world applications: the Primary School dataset and the U.S. Political Blogs dataset.

	\begin{figure}[!htbp]
		\centering
		\includegraphics[width=0.8\linewidth]{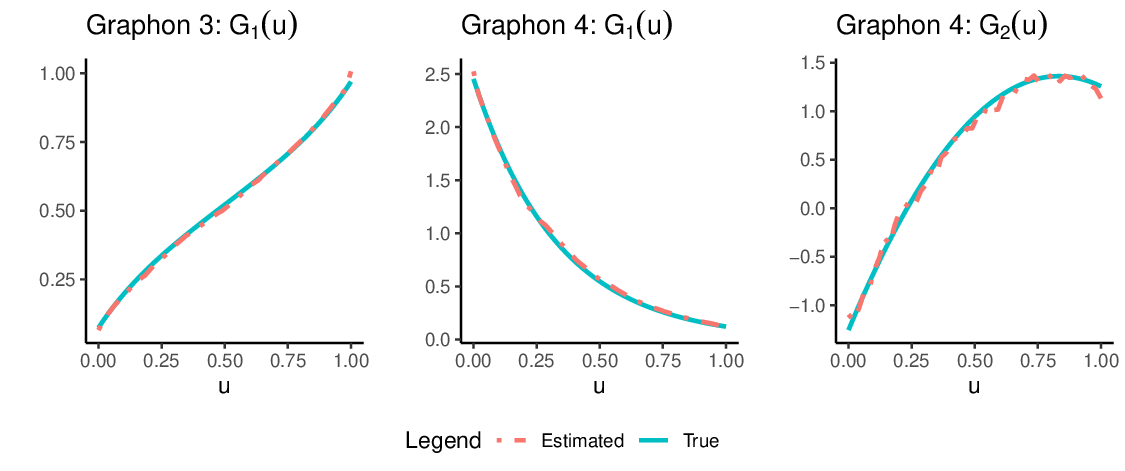}
		\caption{Estimation of $G_k$'s for the third and fourth settings.}
		\label{fig:4}
	\end{figure}

	\section{Discussion}
	\label{sec:Discussion}
	This work presents the first unified framework for simultaneously estimating both a low-rank connection probability matrix and its corresponding graphon. Traditional approaches treat these components separately, leading to inconsistencies and inefficiencies. By aligning their estimation, our method offers a more robust, consistent, and efficient solution for modeling network structures, particularly in low-rank graphon models. We propose a computationally efficient method based on subgraph counts, supported by rigorous theoretical guarantees for low-rank graphon models of fixed rank \( r \). The method's effectiveness is validated through extensive simulations and real-data examples.
	
	Future research directions include exploring convergence rates and optimality in sparse graphon models, optimizing subgraph selection for better efficiency and accuracy, and extending the method to cases where the rank \( r \) increases with \( n \), which could have significant implications for large-scale network analysis in high-dimensional or rapidly evolving networks.
	
	\section{Funding Disclosure}
	This work was supported by the High Performance Computing Center, Tsinghua University. Weichi Wu is supported by the NSFC No.12271287.
	
	\bibliography{nips_paper}
	\bibliographystyle{plainnat}
	
	\appendix	 
	\section{Graphon Estimation Beyond the Connection Matrix}
	\label{sec:graphon}
	Graphon estimation provides a richer and more general description of network structure than the connection probability matrix. While the connection probability matrix captures marginal probabilities for edges in a specific network instance, it lacks information about how edges jointly interact. In contrast, the graphon $f$ defines a generative model over a family of networks and encodes global structural properties, including the distribution of motifs and other higher-order patterns.
	
	For example, the expected density of a fixed motif $F$ (e.g., a triangle or star) in a large random graph generated from $f$ is given by its homomorphism density \[\int_{[0,1]^{|V(F)|}} \prod_{(i,j) \in E(F)} f(x_i, x_j) \, dx_1 \cdots dx_{|V(F)|}.\]  For triangles, this quantity captures the limiting probability that three nodes form a triangle, a fundamentally joint property that the connection probability matrix alone cannot characterize. Similarly, the expected transitivity can be expressed as \[{\int_{[0,1]^3} f(x,y) f(y,z)f(z,x) \, dx\,dy\,dz}\bigg/{\int_{[0,1]^3} f(x,y) f(y,z) \, dx\,dy\,dz}.\]
	
	Graphon estimation thus enables direct prediction of such higher-order features. It also facilitates comparison between networks of different sizes by analyzing their underlying generative mechanisms, independently of the specific realizations captured in the connection probability matrix. 
	
	To substantiate the discussion, we conducted an experiment using graphons with IDs 4-6. We generated networks with 2000 nodes, randomly removed 10\% of the nodes, estimated the graphon from the subgraph, and regenerated networks from the estimate. The mean absolute errors in triangle counts and transitivity (normalized appropriately) over 100 trials are shown below in Table \ref{tab:inflation}.
	
	\begin{table}[h]
		\centering
		\caption{Triangle and transitivity errors across 100 repetitions.}
		\begin{tabular}{c|cc|cc}
			\hline
			Graphon & \multicolumn{2}{c|}{Triangle Error $(\times 10^{-4})$} & \multicolumn{2}{c}{Transitivity Error $(\times 10^{-3})$} \\
			ID & Mean & Std. Dev. & Mean & Std. Dev. \\
			\hline
			4 & 2.942 & 5.309 & 0.696 & 6.240 \\
			5 & 1.863 & 2.107 & 1.506 & 1.749 \\
			6 & 0.914 & 1.192 & 1.346 & 0.453 \\
			\hline
		\end{tabular}
		\label{tab:inflation}
	\end{table}
	
	These consistently low errors demonstrate that key higher-order features are well preserved, even with subsampling, highlighting the structural fidelity and generalizability of graphon estimation.

	\color{black}
	\section{A Broader Literature Review}
	\label{supp:A Broader  Literature Review}

 {More recently, advances have been made in understanding the minimax rates and adaptivity of graphon estimation methods. Notably, \citep{Klopp2019optimal} rigorously analyzed optimal rates under the challenging cut distance, providing sharp risk bounds for several estimator classes. Oracle inequalities for network models, which allow for adaptivity in the presence of sparsity, were established by \citep{Klopp2015OracleIF}, while \citep{doniermeroz2023graphon} extended the graphon framework to bipartite graphs with partial observability, further broadening the applicable scope of statistical models in network analysis. These works, along with related efforts by \citep{gao2015rate} and \citep{Lei2015Consistency}, situate the current work within a rich landscape of research focused on both theoretical properties and practical algorithms for graphon estimation, highlighting the ongoing evolution of this field.}
	
{Matrix completion and matrix sensing are intimately related to graphon estimation, especially when considering networks with latent structure expressible through low-rank matrix representations. \citep{chen2014completing} addressed the general case of completing low-rank matrices, offering provable guarantees under mild assumptions, while the survey by \citep{Candes2012Exact} provides a foundational understanding of matrix completion via convex optimization techniques. These techniques inform approaches to graphon estimation, where the adjacency matrix of a network can often be approximated by a low-rank structure.}
	
{The study of smoothness assumptions further refines graphon estimation, particularly through the establishment of rates under Hölder or Sobolev smoothness conditions. For example, \citep{gao2015rate} analyzed community detection and graphon estimation under smoothness constraints, providing minimax rates and adaptive procedures. \citep{wu2025tractably} introduced a flexible framework for network modeling beyond exchangeability, leveraging composite and Hölder-smooth graphon structures.}
	
{Finally, emerging works have sought to bridge graphon estimation with transfer learning and latent variable modeling. \citep{jalan2024transfer} explored leveraging knowledge across related network models for improved inference, illustrating the potential for cross-network generalization. Collectively, these contributions provide a comprehensive landscape for graphon estimation, matrix completion, and smoothness-adaptive procedures, informing both the theoretical boundaries and practical approaches to latent structure recovery in large-scale networks.}
	
{The significance of statistical network analysis and graphon estimation continues to grow within the AI and machine learning communities, as evidenced by prominent work presented at leading conferences \citep{li2022network, gaucher2021optimality, araya2019latent}. Our methodology advances efficient and accurate network model estimation, with broad impact in domains such as social network analysis \citep{Li2022FairLPTF} and knowledge graphs \citep{Nickel2015ARO}.}

	\section{A Toy Example for Equations~\ref{eqn:te1} and~\ref{eqn:te2}}
	\label{sec:toy example}

	Consider a rank-2 graphon 	$f(u,v)=\lambda_1G_1(u)G_1(v)+\lambda_2G_2(u)G_2(v)$. We illustrate the above result for $C_i^{(a)}$ using $a=3$. Note that $n C_i^{(3)}/6$ corresponds to the number of triangles in the graph. Hence,
	\[
	\bb{E}\left(
	\frac{n C_i^{(3)}/6}{n(n-1)(n-2)/6}
	\right)
	=
	\int_{[0,1]^3} f(u,v)f(v,w)f(w,u)\,dudvdw
	=\lambda_{1}^3+\lambda_{2}^3,
	\]
	where the last equality follows from the orthonormality of $\{G_j\}_{j=1,2}$. Moreover, we verify the conditional expectation of $L_i^{(3)}$. 
	Since  $
	L_i^{(3)}=\sum_{i_1,i_2: i_1\neq i_2,\, i_1\neq i,\, i_2\neq i}E_{ii_1}E_{i_1i_2}$, 
	we have
	\begin{align*}
		\bb{E}(L_i^{(3)}\mid U_i)
		&=\bb{E}\left[
		\sum_{i_1,i_2: i_1\neq i_2,\, i_1\neq i,\, i_2\neq i}
		\bb{E}(E_{ii_1}E_{i_1i_2}\mid U_i,U_{i_1},U_{i_2})
		\right]\\
		&=(n-1)(n-2)\,\bb{E}\!\left[
		f(U_i,U_{i_1})\,f(U_{i_1},U_{i_2})\mid U_i
		\right]\\
		&=(n-1)(n-2)\left(
		\lambda_1^2G_1(U_i)\int_0^1G_1(u)\,du+
		\lambda_2^2G_2(U_i)\int_0^1G_2(u)\,du
		\right).
	\end{align*}
	
	The derivation for more general cases proceeds analogously.

	\section{A Variant Algorithm and Time Complexity}
	\label{sec:A variant algorithm and time complexity}
	In this section, we present a modified version of Algorithm~\ref{alg:3} that preserves all theoretical guarantees from Theorems~\ref{th:6} and \ref{th:5}, while achieving the time complexity of matrix multiplication, \( O(n^{\omega}), \omega=2.373 \).  This variant algorithm is motivated by defining paths that permit node repetition. For \( i = 1, \dots, n \), define the lines and cycles \textit{allowing repeated nodes} as:
	\begin{align*}
		\tilde L_i^{(1)} &= \sum_{i_1} E_{ii_1},\ 
		\tilde L_i^{(a)} = \sum_{i_1,\cdots,i_a} E_{ii_1}\prod_{j=2}^a E_{i_{j-1}i_{j}}\text{ for }a\ge 2,\notag\\
		\tilde C_i^{(a)} &= \sum_{i_1,\cdots,i_{a-1}} E_{ii_1}E_{i_{a-1}i}\prod_{j=2}^{a-1} E_{i_{j-1}i_{j}}\text{ for }a\ge 3.
	\end{align*}
	These quantities can be computed efficiently. Specifically, let \( E^a \) denote the \(a\)-th power of the adjacency matrix \( E \). Then, we have:
	\[
	\tilde L_i^{(a)} = \sum_{j\neq i}(E^a)_{ij}, \quad \tilde C_i^{(a)} = (E^a)_{ii}.
	\]
	The variant algorithm (Algorithm~\ref{alg:4}) uses \( \tilde L_i^{(a)} \) and \( \tilde C_i^{(a)} \) instead of \( L_i^{(a)} \) and \( C_i^{(a)} \).
	
	\begin{algorithm}[htbp]
		\footnotesize
		\caption{Fast Estimation Procedure for \(\{p_{ij}\}_{i,j=1}^n\) in Rank-$r$ Model.}
		\label{alg:4}
		\begin{algorithmic}[1]
			\Require The graph \(\mathcal{G} = (V, E)\).
			\State For \(i = 1, \ldots, n\), compute \( \tilde L_i^{(a)} = \sum_{j\neq i}(E^a)_{ij} \) for \( 1 \le a \le r \), and \( \tilde C_i^{(a)} = (E^a)_{ii} \) for \( 3 \le a \le r+2 \).
			\State Set \( L_i^{(a)} = \tilde L_i^{(a)} \) and \( C_i^{(a)} = \tilde C_i^{(a)} \).
			\State Follow from Line 2 of Algorithm~\ref{alg:3} to  learn	\( \{\hat p_{ij}\}_{i,j=1}^n \).
			\State Output \( \{\hat{p}_{ij}\}_{i,j=1}^n \).
		\end{algorithmic}
	\end{algorithm}
	
	\begin{remark}[Time Complexity of Algorithm~\ref{alg:4}]
		Since all \( \tilde L_i^{(a)} \) and \( \tilde C_i^{(a)} \) for \( 1 \le i \le n \) and \( 1 \le a \le r \) can be computed using matrix multiplication, which has a time complexity of \( O(n^{2.373}) \), the overall time complexity of Algorithm~\ref{alg:4} is also \( O(n^{2.373}) \).
	\end{remark}
	
	To analyze the theoretical properties, we present a key lemma showing that \( \tilde L_i^{(a)} \) and \( L_i^{(a)} \) (as well as \( \tilde C_i^{(a)} \) and \( C_i^{(a)} \)) are sufficiently close, such that their differences do not impact the results of Theorem~\ref{th:6} and Theorem~\ref{th:5}.
	
	\begin{lemma}
		\label{lem:add5}
		For a rank-$r$ model, under the assumptions of Theorem~\ref{th:6}, we have:
		
		\begin{align*}
			\max_{1\le i\le n}\max_{1\le a\le r}
			\left|
			\frac{\tilde L_i^{(a)}-  L_i^{(a)}}{\prod_{j=1}^a (n-j)} 	\right|&=o_p\left(\frac{1}{\sqrt{n}}\right),\ 
			\max_{1\le i\le n}\max_{3\le a\le r+2}
			\left|
			\frac{\tilde C_i^{(a)}-  C_i^{(a)}}{\prod_{j=1}^{a-1} (n-j)}
			\right|&=o_p\left(\frac{1}{\sqrt{n}}\right).
		\end{align*}
	\end{lemma}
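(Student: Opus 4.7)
The plan is purely combinatorial: the differences $\tilde L_i^{(a)} - L_i^{(a)}$ and $\tilde C_i^{(a)} - C_i^{(a)}$ enumerate only those walks and cycles that visit some vertex more than once, and such non-simple configurations admit a deterministic count that is an order of $n$ smaller than $\prod_{j=1}^a(n-j)$ and $\prod_{j=1}^{a-1}(n-j)$ respectively. This will produce a deterministic $O(n^{-1})$ bound, which is strictly stronger than the claimed $o_p(n^{-1/2})$ rate.

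First I would observe that
\[
\tilde L_i^{(a)} - L_i^{(a)} \;=\; \sum_{(i_1,\dots,i_a)\in [n]^a\setminus I_a} E_{ii_1}\prod_{j=2}^{a}E_{i_{j-1}i_j},
\]
where $[n]^a\setminus I_a$ is exactly the set of tuples in which at least two of $\{i,i_1,\dots,i_a\}$ coincide. Since every $E_{jk}\in\{0,1\}$, each summand is at most $1$, so the difference is bounded by $|[n]^a\setminus I_a| = n^a - \prod_{j=1}^a(n-j)$. A direct expansion shows $|[n]^a\setminus I_a|\le C_a n^{a-1}$ for a constant $C_a$ depending only on $a$ (the dominant contribution being $\binom{a+1}{2}n^{a-1}$). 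Dividing by $\prod_{j=1}^a(n-j)\sim n^a$ gives
\[
\sup_{1\le i\le n}\;\sup_{1\le a\le r}\;\frac{|\tilde L_i^{(a)} - L_i^{(a)}|}{\prod_{j=1}^a(n-j)} \;=\; O(n^{-1})
\]
deterministically, where uniformity over $a$ uses that $r$ is fixed, so $\max_{a\le r}C_a$ is finite.

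The cycle case is entirely analogous: $\tilde C_i^{(a)} - C_i^{(a)}$ sums the same type of Bernoulli product over tuples $(i_1,\dots,i_{a-1})\in[n]^{a-1}$ with at least one coincidence among $\{i,i_1,\dots,i_{a-1}\}$, and there are at most $C'_a n^{a-2}$ such tuples. Normalizing by $\prod_{j=1}^{a-1}(n-j)\sim n^{a-1}$ again yields a deterministic $O(n^{-1})$ bound, uniformly over $i$ and over $a\in\{3,\dots,r+2\}$, since $r$ is fixed.

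There is no substantial obstacle here; the main point is simply quantifying the intuition that among walks of bounded length $a$, those with a repeated vertex form an $O(1/n)$ fraction of all such walks. The argument uses only the boundedness $E_{jk}\le 1$ and not any distributional structure of the graphon, so the stronger deterministic $O(n^{-1})$ rate comes essentially for free. Since $n^{-1}=o(n^{-1/2})$, this is comfortably $o_p(n^{-1/2})$, completing the proof and justifying the substitution of $\tilde L$ and $\tilde C$ for $L$ and $C$ in Algorithm~\ref{alg:4}.
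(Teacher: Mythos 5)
Your proposal is correct and follows essentially the same route as the paper's own proof: identify the difference as a sum of Bernoulli products over tuples with a repeated index, bound each summand by $1$, count the offending tuples as $O(n^{a-1})$ (resp.\ $O(n^{a-2})$ for cycles), and divide by the normalization to get $O(n^{-1})$. Your version is if anything slightly more explicit about the combinatorial count and about the bound being deterministic, but the argument is the same.
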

	
	With Lemma~\ref{lem:add5} established, it follows straightforwardly that the following theorem holds.
	
	\begin{theorem}
		\label{th:7}
		Theorem~\ref{th:6} and Theorem~\ref{th:5} remain valid when the fast estimation procedure described in Algorithm~\ref{alg:4} is applied.
	\end{theorem}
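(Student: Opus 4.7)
The plan is to show that replacing $(L_i^{(a)},C_i^{(a)})$ with $(\tilde L_i^{(a)},\tilde C_i^{(a)})$ in the estimation pipeline injects an additional error of order $o_p(n^{-1/2})$, which is strictly dominated by the fluctuations already controlled in the proof of Theorem~\ref{th:6}. Concretely, inspect the proof of Theorem~\ref{th:6}: the argument is driven by (i) a concentration bound showing that the normalized path and cycle counts $\frac{1}{\prod(n-j)}L_i^{(a)}$ and $\frac{1}{\prod(n-j)}C_i^{(a)}$ approximate their conditional expectations up to $O_p(n^{-1/2})$ (uniformly in $i$, up to the $\sqrt{\log n}$ factor appearing in the sup-norm), and (ii) a perturbation analysis of the polynomial system~(\ref{eqn:02}) and the linear system~(\ref{def:hat G2}) via Lemmas~\ref{lem:add2} and \ref{lem:add3}, which convert $O_p(n^{-1/2})$ perturbations of the right-hand sides into $O_p(n^{-1/2})$ perturbations of $(\hat\lambda_k,y_k)$ and $O_p(\sqrt{\log n/n})$ perturbations of $\hat G_k(U_i)$.

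The key step is to use Lemma~\ref{lem:add5}: writing $\tilde L_i^{(a)}=L_i^{(a)}+\Delta_{L,i}^{(a)}$ and $\tilde C_i^{(a)}=C_i^{(a)}+\Delta_{C,i}^{(a)}$ with $\max_{i,a}|\Delta_{L,i}^{(a)}|/\prod_{j=1}^a(n-j)$ and $\max_{i,a}|\Delta_{C,i}^{(a)}|/\prod_{j=1}^{a-1}(n-j)$ both $o_p(n^{-1/2})$, the right-hand sides of~(\ref{eqn:02}) and~(\ref{def:hat G2}) change by $o_p(n^{-1/2})$ uniformly. By the triangle inequality, the total perturbation of the right-hand sides remains $O_p(n^{-1/2})$ (and $O_p(\sqrt{\log n/n})$ uniformly over $i$ for the pointwise system), so Lemmas~\ref{lem:add2} and \ref{lem:add3} deliver the same rates for the tilde-based estimators: $\max_k|\hat\lambda_k-\lambda_k|=O_p(n^{-1/2})$ and $\sup_{i,j}|\hat p_{ij}-p_{ij}|=O_p(\sqrt{\log n/n})$, reproducing Theorem~\ref{th:6}.

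For Theorem~\ref{th:5}, the sorting and interpolation steps defining $h_1,\dots,h_r$ and $\hat f$ in~(\ref{eqn:hat f r>2}) depend on the $\hat G_k(U_i)$'s only through their uniform accuracy. Since these have been shown to inherit the same $O_p(\sqrt{\log n/n})$ sup-norm rate under Algorithm~\ref{alg:4}, the proof of Theorem~\ref{th:5} transfers verbatim, yielding $\sup_{u,v}|\hat f(u,v)-f(u,v)|=O_p(\sqrt{\log n/n})$ together with the almost sure and $L_2$ convergence.

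The main obstacle is the verification of Lemma~\ref{lem:add5} itself (handled elsewhere): one must argue that the contribution of paths and cycles with at least one repeated interior vertex, relative to the total number $\prod_{j=1}^a(n-j)$, is of smaller order than $n^{-1/2}$. Heuristically, in a rank-$r$ graphon with bounded eigenfunctions the number of repeated-vertex walks of length $a$ emanating from $i$ is of order $n^{a-1}$, whereas the normalizing constant is of order $n^a$, giving a ratio of order $n^{-1}=o(n^{-1/2})$; a concentration argument around this deterministic bound then gives the uniform $o_p(n^{-1/2})$ claim. Once Lemma~\ref{lem:add5} is in hand, Theorem~\ref{th:7} follows by the perturbation reasoning sketched above with no additional ingredients.
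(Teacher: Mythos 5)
Your proposal is correct and follows essentially the same route as the paper: the paper likewise deduces Theorem~\ref{th:7} directly from Lemma~\ref{lem:add5}, noting that the $o_p(n^{-1/2})$ uniform discrepancy between the tilde and non-tilde counts is dominated by the error terms already controlled in the proofs of Theorems~\ref{th:6} and~\ref{th:5}, so the perturbation lemmas apply unchanged. The only minor difference is that the paper's proof of Lemma~\ref{lem:add5} is purely deterministic (each edge indicator is bounded by $1$, so the repeated-index tuples contribute $O(n^{a-1})/\prod_{j=1}^a(n-j)=O(1/n)$ with no concentration step needed), whereas you suggest a concentration argument; this does not affect the validity of your deduction.
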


	\section{Graphons in Simulation Studies}
	
	We list the graphons in Table~\ref{tab:1}. 
	
	\begin{table*}[!htb]
		\centering
		\caption{List of graphons. We  {learn} three rank-1 graphons using Algorithm~\ref{alg:1} and four rank-\(r \geq 2 \) graphons using Algorithm~\ref{alg:4}.}
		\footnotesize
		\begin{tabular}{ccc}
			\hline
			ID & Graphon \(f(u,v)\) & Rank of \(f(u,v)\) \\
			\hline
			1  & \(0.15\) & 1 \\
			2  & \(\frac{1.5}{(1+\exp(-u^2))(1+\exp(-v^2))}\) & 1 \\
			3  & \(\frac{1}{5}\left(\tan\left(\frac{\pi}{2}u\right) + \frac{7}{6}\right)\left(\tan\left(\frac{\pi}{2}v\right) + \frac{7}{6}\right)\) & 1 \\
			4  & \(0.95\exp(-3u)\exp(-3v) + 0.04(3u^2 - 5u + 1)(3v^2 - 5v + 1)\) & 2 \\
			5  & \(\frac{1}{2} (\sin u \sin v + uv)\) & 2 \\
			6  & \(0.05 + 0.15 I(u < 0.4, v < 0.4) + 0.25 I(u > 0.4, v > 0.4)\) & 2 \\
			7  & \(0.1 + 0.75I(u, v < \frac{1}{3}) + 0.15 I(\frac{1}{3} < u, v \leq \frac{2}{3}) + 0.5 I(u, v > \frac{2}{3})\) & 3 \\
			\hline
		\end{tabular}
		\label{tab:1}
	\end{table*}
	
	\section{A Remark on Computaiton}
	\label{section:remark}
	In Algorithm~\ref{alg:4}, we employ \(\tilde L_i^{(a)}\) and \(\tilde C_i^{(a)}\) as approximations for \(L_i^{(a)}\) and \(C_i^{(a)}\), enabling efficient computation. Though their equivalence has been proven in Theorem~\ref{th:7}, applying certain corrections in practice can improve finite-sample performance. Specifically, we define:
	\begin{align*}
		\check{L}_i^{(3)} &= \tilde{L}_i^{(3)} - \tilde{L}_i^{(2)} - (\tilde{L}_i^{(1)})^2, \\  
		\check{C}_i^{(4)} &= \tilde{C}_i^{(4)} - \tilde{L}_i^{(2)} - (\tilde{L}_i^{(1)})^2, \\
		\check{C}_i^{(5)} &= \tilde{C}_i^{(5)} - 2(\tilde{L}_i^{(1)} - 2)\tilde{C}_i^{(3)} \\
		&\quad - \frac{1}{n}\left(\sum_{k=1}^n \tilde{L}_k^{(1)}\right) \tilde{C}_i^{(3)} - 2\sum_{k=1}^n E_{ik} \tilde{C}_k^{(3)},
	\end{align*}
	and use \(\check{L}_i^{(3)}\), \(\check{C}_i^{(4)}\), and \(\check{C}_i^{(5)}\) to replace \(\tilde{L}_i^{(3)}\), \(\tilde{C}_i^{(4)}\), and \(\tilde{C}_i^{(5)}\), respectively, in Algorithm~\ref{alg:4}. These corrections ensure \(\check{L}_i^{(3)} = L_i^{(3)}\), \(\check{C}_i^{(4)} = C_i^{(4)}\), and \(\check{C}_i^{(5)}\) is closer to \(C_i^{(5)}\) compared to \(\tilde{C}_i^{(5)}\).
	
	\section{More Simulations}
	\label{app:a3}
	
	\subsection{Additional Simulation Results on Scalability}

	To show the scalability of our method, we vary \( n \) from 1000 to 2000 for the rank-2 settings and report the results and time costs in Table~\ref{tab:scala}. Our method consistently shows a decrease in both metrics as \( n \) increases, which is consistent with our theoretical results.

	\begin{table}[htbp]
		\centering
		\small
		\caption{Results for rank-2 settings with varying node sizes between 1000 and 2000.}
		\label{tab:scala}
		\begin{tabular}{cccccc}
			\hline
			ID & MSE $(\times 10^{-4})$   & MSE S.D. $(\times 10^{-4})$ & Max. Error $(\times 10^{-2})$  & Max. Error S.D. $(\times 10^{-2})$ & Node Size $n$ \\ \hline
			4  & 3.677 & 0.143            & 16.618     & 1.745                & 1000        \\
			& 3.067 & 0.126            & 15.772     & 1.419                & 1200        \\
			& 2.457 & 0.091            & 14.507     & 1.490                & 1500        \\
			& 2.043 & 0.064            & 13.564     & 1.269                & 1800        \\
			&  1.826  & 0.062                                                  & 12.898   & 1.256    & 2000         \\ \hline
			5  & 3.552 & 0.159            & 13.003     & 1.083                & 1000        \\
			& 2.995 & 0.396            & 12.252     & 1.819                & 1200        \\
			& 2.365 & 0.086            & 10.732     & 0.878                & 1500        \\
			& 2.001 & 0.311            & 10.444     & 2.244                & 1800        \\
			& 1.774  & 0.062                                                  & 9.676   & 0.780      & 2000    \\ \hline
			6  & 5.576 & 0.185            & 12.119     & 1.160                & 1000        \\
			& 4.533 & 0.155            & 11.665     & 1.177                & 1200        \\
			& 3.535 & 0.108            & 10.378     & 0.938                & 1500        \\
			& 2.890 & 0.073            & 9.772      & 0.964                & 1800       \\
			& 2.582 & 0.070                                           & 9.319 & 0.781    & 2000 \\ \hline
		\end{tabular}
		\label{tab:nodesize}
	\end{table}

	For time costs, we plot the logarithm of runtime against the logarithm of \( n \) in Figure~\ref{fig:scala}, with \( n \) varying from 200 to 11000. The observed asymptotic trend exhibits linear growth, aligning with the theoretical computational complexity.
	
	\begin{figure}[htbp]
		\centering
		\includegraphics[width=0.6\linewidth]{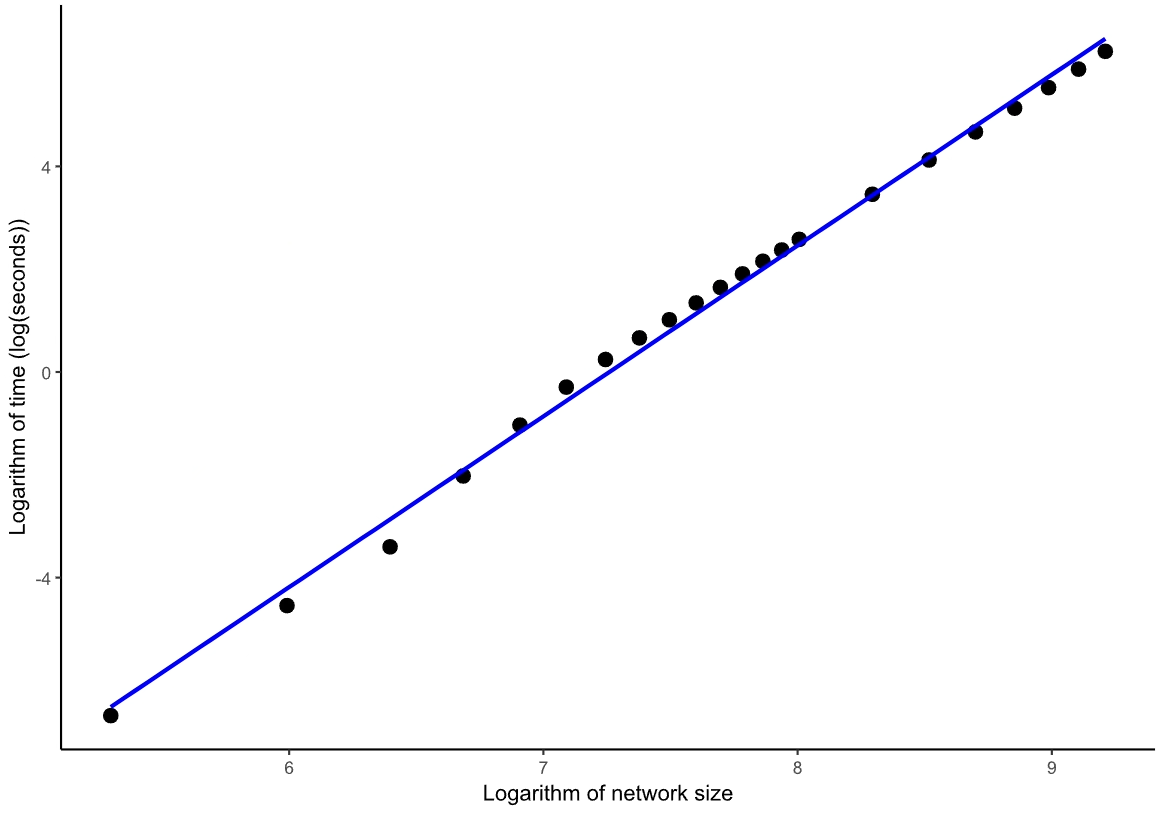}
		\caption{Average runtime of our algorithm as a function of node size, computed over 100 repetitions. The blue line indicates the best least-squares fit.}
		\label{fig:scala}
	\end{figure}

	\subsection{Additional Simulation Results when Assumption 3.7 is Violated}
	
	Theoretically, Assumption~\ref{ass:4} serves as a regularization condition for graphon estimation, and similar conditions are necessary and can be commonly found in the literature (e.g., \citet{chan2014consistent}). If this assumption is violated, no theoretical guarantees for graphon estimation (Theorem~\ref{th:5}) can be made; however, the estimation of the probability matrix (Theorem~\ref{th:6}) still works as expected.  
	Practically, even if Assumption~\ref{ass:4} is slightly violated, the estimated graphon function can still be reasonably close. We illustrate this with an example. Let $G_1(u) = \sqrt{\frac{6}{17}} (\sqrt{u} + 1)$, $G_2(u) = \sqrt{\frac{300}{17}} (\sqrt x - \frac{7}{10})$, which are non-Lipschitz orthogonal functions. We apply our method to estimate the graphon function, and the estimated \( G_1 \) and \( G_2 \) are plotted below. It is evident that the estimates exhibit the correct trend, and will be more accurate as the sample size increases.

	\begin{figure}[htbp]
		\centering
		\includegraphics[width=\linewidth]{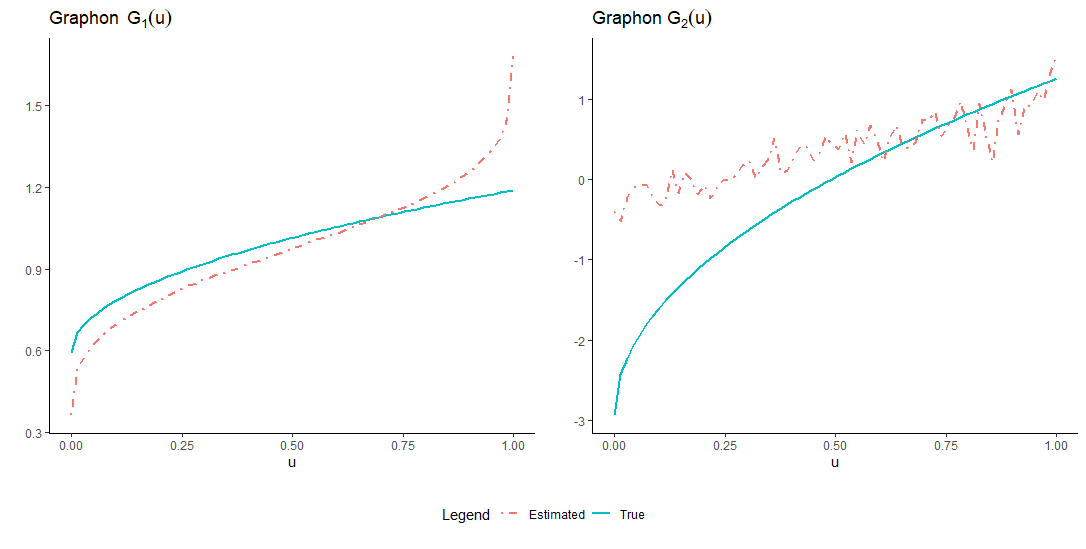}
		\caption{Estimation of the components of the graphon function for the proposed setting.}
		\label{fig:graphon1}
	\end{figure}
	
	\color{black}
	\subsection{Additional Simulation Results when the Sparsity Varies}
	Taking graphon 4 and 5 as illustrative examples, we conducted additional simulations to investigate the performance of all methods as sparsity varies from $n^{-1/2}$ to $1$, . The results are shown in Table~\ref{tab:sparse}.
	\begin{table}[ht]
		\centering
		\footnotesize
		\caption{Results for sparse graphons across 100 independent trials.}
		\begin{tabular}{c|c|c|c|c|c|c}
			\hline
			\multirow{2}{*}{Graphon ID} & \multirow{2}{*}{Sparsity \par Parameter $\rho_n$} & \multirow{2}{*}{Method} & \begin{tabular}[c]{@{}c@{}}MSE \\ $(\times 10^{-4})$ \end{tabular} & \begin{tabular}[c]{@{}c@{}}Std. dev\\ of MSE \\ $(\times 10^{-6})$ \end{tabular} & \begin{tabular}[c]{@{}c@{}}Max. Error \\ $(\times 10^{-2})$ \end{tabular} & \begin{tabular}[c]{@{}c@{}}Std. dev\\ of Max. Error \\ $(\times 10^{-3})$ \end{tabular} \\
			\hline
			\multirow{15}{*}{4}
			& \multirow{5}{*}{$n^{-1/3}$}
			& Ours & 0.178 & 0.663 & 5.722 & 9.922 \\
			& & N.S. & 31.285 & 140.98 & 99.907 & 0.15 \\
			& & USVT & 0.389 & 1.964 & 6.163 & 24.638 \\
			& & Nethist & 1.193 & 7.809 & 7.815 & 0.278 \\
			& & SAS & 0.597 & 3.787 & 99.72 & 0.985 \\ \cline{2-7}
			& \multirow{5}{*}{$n^{-1/6}$}
			& Ours & 0.752 & 3.501 & 10.022 & 12.695 \\
			& & N.S. & 8.833 & 86.337 & 98.761 & 0.416 \\
			& & USVT & 1.348 & 6.356 & 12.44 & 11.336 \\
			& & Nethist & 13.66 & 89.165 & 27.699 & 1.149 \\
			& & SAS & 5.356 & 32.228 & 99.597 & 2.321 \\ \cline{2-7} 
			& \multirow{5}{*}{$n^{-1/10}$}
			& Ours & 1.116 & 4.874 & 12.412 & 14.103 \\
			& & N.S. & 3.264 & 10.317 & 38.125 & 129.087 \\
			& & USVT & 2.161 & 10.221 & 15.706 & 14.102 \\
			& & Nethist & 34.658 & 221.149 & 45.841 & 2.907 \\
			& & SAS & 14.681 & 59.42 & 93.407 & 72.598 \\ \cline{2-7}
			\hline
			\multirow{15}{*}{5}
			& \multirow{5}{*}{$n^{-1/3}$}
			& Ours & 0.258 & 1.205 & 4.93 & 7.207 \\
			& & N.S. & 29.992 & 170.322 & 99.969 & 0.167 \\
			& & USVT & 0.9 & 3.584 & 6.231 & 5.969 \\
			& & Nethist & 3.316 & 10.137 & 6.766 & 0.074 \\
			& & SAS & 0.821 & 4.783 & 97.805 & 101.723 \\ \cline{2-7}
			& \multirow{5}{*}{$n^{-1/6}$}
			& Ours & 1.145 & 34.613 & 11.757 & 40.912 \\
			& & N.S. & 3.624 & 6.686 & 20.814 & 98.466 \\
			& & USVT & 3.039 & 11.992 & 12.511 & 9.675 \\
			& & Nethist & 30.673 & 72.197 & 23.993 & 0.395 \\
			& & SAS & 9.165 & 18.182 & 74.31 & 132.569 \\ \cline{2-7}
			& \multirow{5}{*}{$n^{-1/10}$}
			& Ours & 1.338 & 33.349 & 10.409 & 30.081 \\
			& & N.S. & 4.852 & 10.989 & 16.784 & 9.73 \\
			& & USVT & 4.626 & 18.243 & 15.865 & 13.431 \\
			& & Nethist & 58.583 & 185.89 & 39.152 & 5.418 \\
			& & SAS & 9.281 & 21.39 & 93.009 & 68.993 \\ \cline{2-7} 
			\hline
		\end{tabular}
		
		\label{tab:sparse}
	\end{table}
	
	\color{black}

	\section{Selecting $r$ when it is unknown}
	\label{sec:Selecting_r_unknown}
	
	In this section, we propose a method for selecting \( r \) when its value is unknown. Since \(\hat{\lambda}_k\) approximates \( \lambda_k \) by Theorem~\ref{th:6}, we can estimate \( r \) incrementally, starting from \( r = 1 \). When \(|\hat{\lambda}_k|\) is significantly larger than 0, but \(|\hat{\lambda}_{i}|\) for \( i \ge k+1 \) are close to 0, we select \( r = k \). The detailed procedure for selecting \( r \) is summarized in Algorithm~\ref{alg:5}.
	
	\begin{algorithm}[!htbp]
		\footnotesize
		\caption{Procedure for selecting \( r \).}
		\label{alg:5}
		\begin{algorithmic}[1]
			\Require Graph \(\mathcal{G} = (V, E)\), threshold \(\tau\).
			\State For \( i = 1, \ldots, n \), compute \(\tilde{C}_i^{(3)}\). Set \( k = 1 \).
			\State For \( i = 1, \ldots, n \), compute \(\tilde{C}_i^{(k+3)}\).
			\State Solve the system of equations in \eqref{eqn:02} with \( 3 \leq a \leq k+3 \) and \( r = k+1 \) to obtain \((\hat{\lambda}_1, \ldots, \hat{\lambda}_{k+1})\).
			\If{
				$
				\left| \frac{\hat{\lambda}_{k+1}}{\hat{\lambda}_{k}} \right| \leq \tau,
				$
			}
			\State Choose \( r = k \) and \textbf{return} \( r \).
			\EndIf
			\State Set \( k = k+1 \) and go back to Line 2.
			\State \textbf{Output} \( r \).
		\end{algorithmic}
	\end{algorithm}
	
	\color{black}
	We remark that Line 4 in Algorithm~
	\ref{alg:5} applies the eigenratio method, which is a well-established approach in the statistics literature (e.g., \citet{lam2012factor,ahn2013eigenvalue}). It offers robustness to scaling and noise, especially in settings where the eigenvalues decay gradually or have non-uniform magnitudes (see also \citet{cai2024optimal}). In our context, the use of the ratio helps highlight significant drops in the eigenvalue sequence while mitigating the influence of slow decay or scale variability. Algorithm~
	\ref{alg:5} selects the correct rank $r$ with high probability as the threshold $\tau$  asymptotically approaches zero at a certain rate.
	\color{black}
	
	We apply Algorithm~\ref{alg:5} to select \( r \) for the 3rd, 6th and 7th settings in Table~\ref{tab:1}, with \(\tau = 0.2\). Theoretically, the threshold $\tau$ should asymptotically tend to 0. For finite sample simulations, we choose 0.2 as a heuristic value. The results are presented in Table~\ref{tab:6}. These results demonstrate that Algorithm~\ref{alg:5} is effective in most cases.  
	
	In our simulations in the main paper, we assume the rank \( r \) is known. Estimating \( r \) first and then applying our method leads to minimal differences because \( r \) can be chosen correctly with high probability by our method (see Table~\ref{tab:6}) in our scenarios. For example, we apply our rank selection algorithm to the 6th setting, and there is only an 8\% probability of making an incorrect selection. The MSE when selecting \(r\) first and then running our method is $2.4 \times 10^{-4}$ (with a standard error of $0.9 \times 10^{-6}$), which is close to the values reported in the main article.

	\begin{table}[!htbp]
		\centering
		\caption{Selection of \( r \) for the 3rd, 6th and 7th settings across 100 independent trials.}
		\begin{tabular}{cccccc}
			\hline
			\multirow{2}{*}{ID} & \multirow{2}{*}{True \( r \)} & \multicolumn{4}{c}{Estimated \( r \)} \\ \cline{3-6} 
			&                               & 1      & 2      & 3     & \( \geq 4 \) \\ \hline
			3                   & 1                             & 100    & 0      & 0     & 0           \\
			6                   & 2                             & 0      & 92     & 0     & 8           \\ 
			7                   & 3                             & 0    & 0      & 89       & 11   \\
			\hline
		\end{tabular}
		\label{tab:6}
	\end{table}
	
	Moreover, we assess the impact of an incorrect rank selection by testing cases where \( r = 1 \) is mis-specified as \( r = 2 \). The results, presented in Table \ref{tab:rk1}, show that the differences in performance are minimal compared to the original settings in Table~\ref{tab:2}, highlighting that our method is robust to over-estimation of the rank \( r \).

	\begin{table}[htbp]
		\centering
		\caption{Results for rank-1 settings when their rank is mis-selected as 2.}
		\begin{tabular}{ccccc}
			\hline
			ID & MSE ($10^{-4}$) & Std. MSE ($10^{-4}$) & Max error ($10^{-2}$) & Std. Max  ($10^{-2}$) \\ \hline
			1       & 1.45 & 0.475 & 6.274 & 1.226     \\
			2       & 6.11 & 6.650 & 18.186 & 18.17       \\
			3       & 1.92 & 0.067 & 9.399 & 0.693    \\ \hline
		\end{tabular}
		
		\label{tab:rk1}
	\end{table}

	\color{black}
	For the case where \( r = 2 \) is mis-specified as \( r = 1 \), the corresponding results are summarized in Table~\ref{tab:rk2}. As shown in the table, the estimation error increases significantly when rank-2 graphons are incorrectly selected as rank 1, since an essential component is omitted during estimation. However, our rank selection experiments (see also Table~\ref{tab:6}) indicate that while the selected rank may sometimes be higher than the true rank, we seldom select a rank lower than the actual one, which would otherwise lead to large errors.
	
	\begin{table}[ht]
		\centering
		\footnotesize
		\caption{Results summary with runtime statistics.}
		\begin{tabular}{ccccc}
			\hline
			ID & MSE ($\times 10^{-4}$) & Std. MSE ($\times 10^{-6}$) & Max. error ($\times 10^{-2}$) & Std. Max ($\times 10^{-3}$)\\
			\hline
			4 & 15.559 & 5.939 & 30.818 & 17.708 \\
			5 &  1.810 & 0.620 & 10.320 &  8.962 \\
			6 & 71.987 & 29.280 & 15.086 &  2.447 \\
			\hline
		\end{tabular}
		\label{tab:rk2}
	\end{table}
	
	\color{black}

	\section{Real-world Data Analysis}
	\label{app:real}
	
	We applied our method to two real-world datasets to evaluate its performance and robustness in practical scenarios. 
	
	\subsection{Primary School Student Contact Data}
	
	Firstly, we applied our method to real contact data from a primary school, collected by the SocioPatterns project\footnote{\url{http://www.sociopatterns.org}} using active RFID devices that recorded data every 20 seconds. On October 1st, 2009, from 8:40 to 17:18, contact data were gathered for 236 individuals, resulting in 60,623 records. We constructed an undirected graph where nodes are connected if individuals had at least one contact. Using Algorithm~\ref{alg:5} with a threshold \( \tau = 0.25 \), we selected \(r=4\) based on Table \ref{tab:rankselection}.  We subsequently  learned	the connection probability matrix with Algorithm~\ref{alg:4}. The resulting heatmap, shown on the left of Figure~\ref{fig:realworld}, aligns with expectations for real-world interactions.
	\begin{table}[!htbp]
		\centering
		\footnotesize
		\caption{Learned eigenvalues for the contact data.}
		\begin{tabular}{cccccc}
			\hline
			Rank \( r \) & \( \hat\lambda_1 \) & \( \hat\lambda_2 \) & \( \hat\lambda_3 \) & \( \hat\lambda_4 \) & \( \hat\lambda_5 \) \\
			\hline
			2 & 0.264 & 0.159 & & & \\
			3 & 0.266 & 0.146 & 0.0593 & & \\
			4 & 0.271 & 0.118 & 0.118 & -0.0992 & \\
			5 & 0.272 & 0.117 & 0.0813 & -0.0423 & -0.00721 \\
			\hline
		\end{tabular}
		\label{tab:rankselection}
	\end{table}
	
	Assuming Assumption~\ref{ass:4}, we  learn the graphon function, using \( G_1 \) as the reference marginal graphon. The  learned 	functions \( h_1, \dots, h_4 \) are shown on the right of Figure~\ref{fig:realworld}. The  learned 	graphon function \( \hat{f}(u,v) \) for any \( (u,v) \in [0,1]^2 \) is computed using equation \eqref{eqn:hat f r>2}.
	
	\begin{figure}[!htbp]
		\centering
		\includegraphics[width=\linewidth]{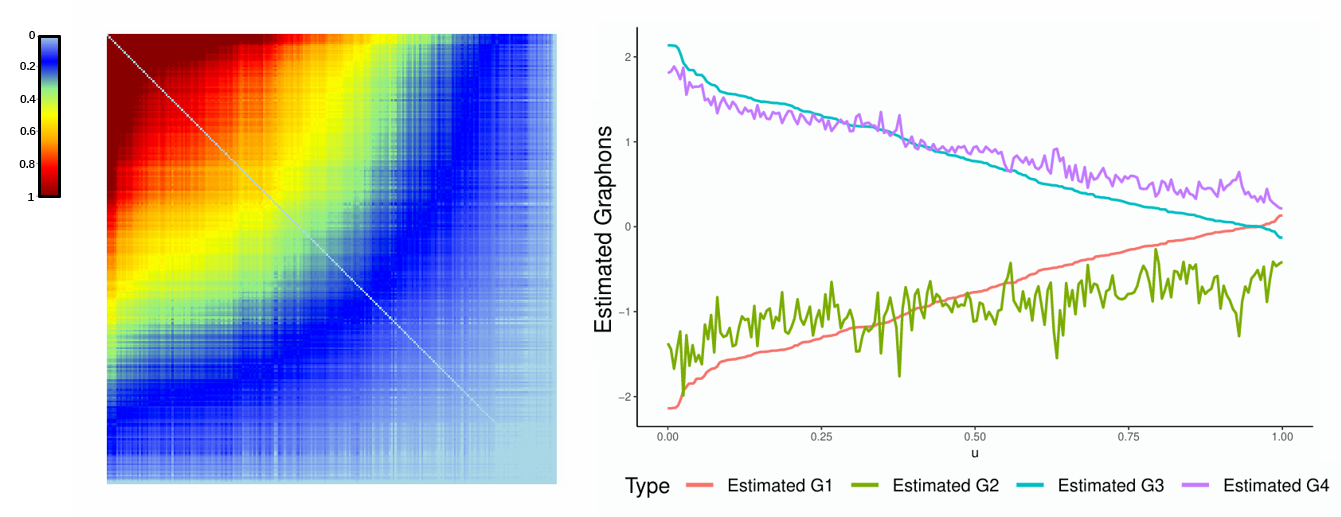}
		\caption{ Learned $P$ (Left) and  {learned} graphon (Right).}
		\label{fig:realworld}
	\end{figure}
	
	\subsection{U.S. Political Blogs Data}
	
	We evaluate our method on a larger real-world dataset. We consider the U.S. Political Blog Dataset \citep{Adamic2005ThePB}, which consists of 1490 nodes. This dataset captures the hyperlink network among political blogs during the 2004 U.S. presidential election, where each node represents a blog labeled as either liberal or conservative, and approximately 19000 edges between these blogs. 
	
	Using Algorithm~\ref{alg:5} with a threshold \( \tau = 0.25 \), we selected rank \( r = 2 \) based on Table~\ref{tab:rankselection2}.We mention that our choice of rank is consistent with the known structure of the network, in which blogs affiliated with the same political orientation tend to form two distinct communities, characterized by dense intra-community and sparse inter-community connections.

	\begin{table}[!htbp]
		\centering
		\footnotesize
		\caption{Learned eigenvalues for the political blog data.}
		\begin{tabular}{cccc}
			\hline
			Rank \( r \) & \( \hat\lambda_1 \) & \( \hat\lambda_2 \) & \( \hat\lambda_3 \) \\
			\hline
			2 & 0.0668 & 0.0326 &  \\
			3 & 0.0692 & 0.0304 & 0.00649 \\
			\hline
		\end{tabular}
		\label{tab:rankselection2}
	\end{table}
	
	We subsequently estimated the connection probability matrix using Algorithm~\ref{alg:4}. The resulting heatmap, shown in Figure~\ref{fig:realworld2heatmap}, reveals a relatively sparse structure, reflecting the nature of the observed political blog network.
	
	\begin{figure}[!htbp]
		\centering
		\includegraphics[width=0.75\linewidth]{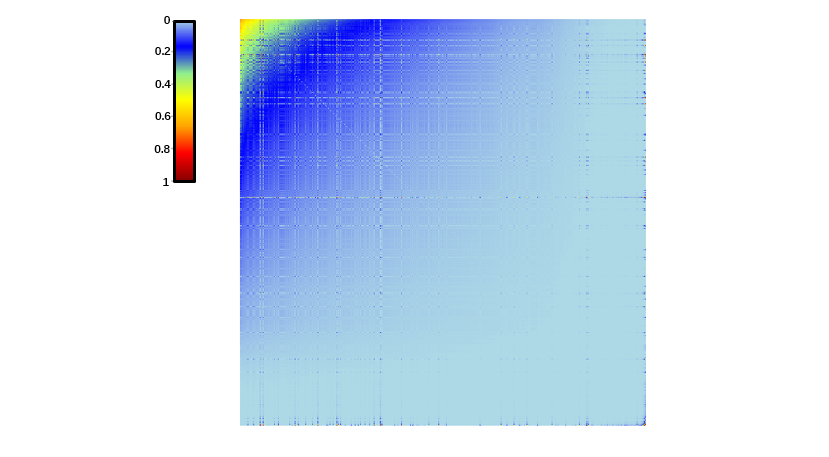}
		\caption{Learned connection probability matrix $P$ for U.S. political blog data.}
		\label{fig:realworld2heatmap}
	\end{figure}
	
	Assuming Assumption~\ref{ass:4} holds, we learn the graphon function using \( G_1 \) as the reference marginal graphon. The learned functions \( h_1 \) and \( h_2 \) are presented in Figure~\ref{fig:realworld2graphon}. The estimated graphon function \( \hat{f}(u,v) \) for any \( (u,v) \in [0,1]^2 \) can be computed according to equation~\eqref{eqn:hat f r>2}. The learning procedure demonstrates promising performance when applied to large-scale real-world networks, and effectively detects the latent structure of low-rank networks.

	\begin{figure}[!htbp]
		\centering
		\includegraphics[width=0.6\linewidth]{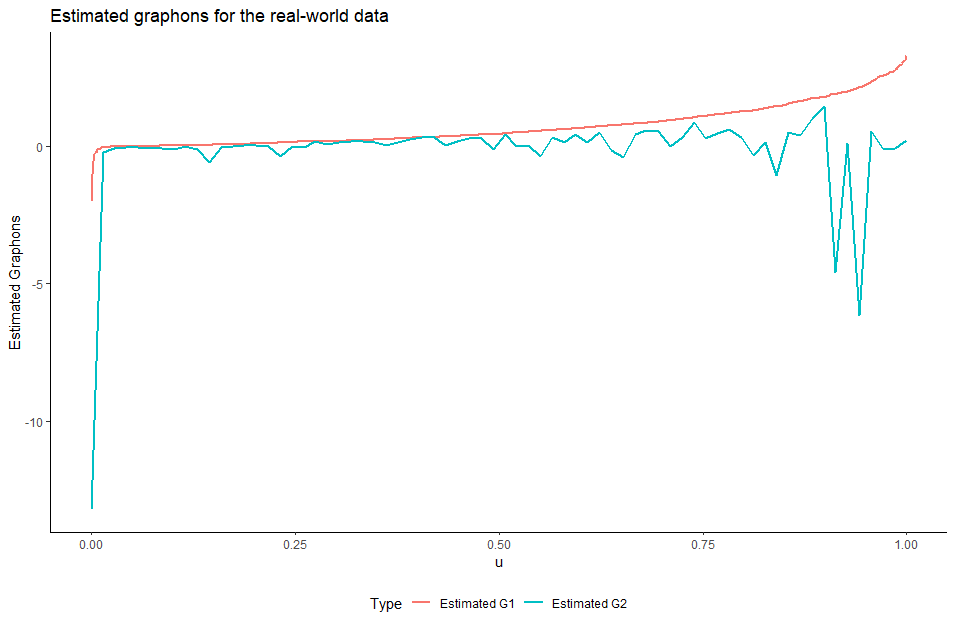}
		\caption{Estimated components of the graphon for U.S. political blog data.}
		\label{fig:realworld2graphon}
	\end{figure}

	\color{black}
	
	\subsection{Validating the Quality of the Real-data Estimates}
	Using the U.S. political blog dataset as an illustrative example, we demonstrate the advantages of our method through extensive comparisons with existing approaches.
	
	Specifically, we estimated the underlying graphon from the observed network and computed the expected densities of various motifs, such as triangles, squares, and 5-cycles. For competing methods, we generated synthetic networks from their estimated connection probability matrices and performed motif counting on these networks. We then compared the absolute differences in motif counts--normalized by the total possible in a complete graph of the same size--between the original and generated networks. The results, summarized in Table~\ref{tab:motif}, show that our method consistently achieves smaller errors in motif prediction than competing approaches.
	\begin{table}[ht]
		\centering
		\caption{Motif counting errors and runtime for each method.}
		\renewcommand{\arraystretch}{1.2}
		\begin{tabular}{lcccccc}
			\hline
			Method & 
			\makecell{Triangle counting \\ error ($\times 10^{-4}$)} & 
			\makecell{Runtime \\ (s)} &
			\makecell{Square counting \\ error ($\times 10^{-5}$)} & 
			\makecell{Runtime \\ (s)} &
			\makecell{5-cycle counting \\ error ($\times 10^{-6}$)} & 
			\makecell{Runtime \\ (s)} \\
			\hline
			Ours     & 0.252 & 0.167 & 0.761 & 0.198 & 3.848 & 0.175 \\
			N.S.     & 1.709 & 119.812 & 3.758 & 120.837 & 9.039 & 123.863 \\
			Nethist  & 8.087 & 19.082 & 1.809 & 20.523 & 4.383 & 22.427 \\
			USVT     & 4.449 & 15.971 & 1.712 & 17.098 & 5.149 & 19.439 \\
			SAS      & 1.327 & 3.018 & 2.256 & 4.730 & 4.960 & 5.237 \\
			\hline
		\end{tabular}
		\label{tab:motif}
	\end{table}
	
	Moreover, as shown in Table~\ref{tab:motif}, our approach significantly improves computational efficiency. After estimating the graphon, the expected density of any motif in networks of any size can be quickly approximated using the plug-in method, with only a constant number of matrix operations per sample. In contrast, existing methods require explicit motif counting in generated networks, which is computationally expensive (see \citet{jin2025counting} for details). This blend of efficiency and accuracy is particularly beneficial for downstream tasks where motif statistics are crucial (e.g., \citet{milo2002network}).

	\section{Details of the power iteration method}
	\label{sec:Details_power_iteration}
	
	To estimate the leading \( r \) (\( r \geq 1 \)) eigenpairs of a matrix \( A \), we use the power iteration method iteratively, combined with matrix deflation. Beginning with \( A_0 = A \), each step \( k \) (from 1 to \( r \)) estimates the dominant eigenpair \((\hat{\lambda}_k, \hat{v}_k)\) of \( A_{k-1} \) using the power iteration method. After normalization of \(\hat{v}_k\), the matrix is deflated as \( A_k = A_{k-1} - \hat{\lambda}_k \hat{v}_k \hat{v}_k^\top \). Once all \( r \) eigenpairs are computed, an approximation \(\tilde{P}\) is reconstructed as \(\tilde{P} = \sum_{k=1}^r \hat{\lambda}_k \hat{v}_k \hat{v}_k^\top\). Finally, the probability connection matrix \(\hat{P}\) is obtained via element-wise thresholding:
	\[
	p_{ij} = \min\left(1, \max(0, \tilde{p}_{ij})\right),
	\]
	ensuring all entries lie within the probability range \([0, 1]\). The complete algorithm is summarized in Algorithm~\ref{alg:6}.
	
	\begin{algorithm}[!htbp]
		\caption{Iterative Power Iteration.}
		\label{alg:6}
		\begin{algorithmic}[1]
			\Require Adjacency matrix \( A \), rank \( r \), maximum iterations \( N \), tolerance \( \epsilon \).
			\Ensure Estimated probability connection matrix \( \hat{P} \).
			\State Initialize \( A_0 \gets A \).
			\For{\( k = 1 \) to \( r \)}
			\State Initialize vector \(\mathbf{x}_0\) of length \( n \) with all ones and normalize: \(\mathbf{x}_0 \gets \mathbf{x}_0 / \|\mathbf{x}_0\|_2\).
			\For{\( i = 1 \) to \( N \)}
			\State Update \(\mathbf{x}_i \gets A_{k-1} \mathbf{x}_{i-1}\) and normalize: \(\mathbf{x}_i \gets \mathbf{x}_i / \|\mathbf{x}_i\|_2\).
			\If{\( \|\mathbf{x}_i - \mathbf{x}_{i-1}\|_2 < \epsilon \)}
			\State \textbf{break}
			\EndIf
			\EndFor
			\State Compute eigenvalue \(\hat{\lambda}_k \gets \mathbf{x}_i^\top A_{k-1} \mathbf{x}_i\) and eigenvector \(\hat{v}_k \gets \mathbf{x}_i\).
			\State Deflate the matrix: \( A_k \gets A_{k-1} - \hat{\lambda}_k \hat{v}_k \hat{v}_k^\top \).
			\EndFor
			\State Compute \(\tilde{P} = \sum_{k=1}^r \hat{\lambda}_k \hat{v}_k \hat{v}_k^\top\) and threshold element-wise: \( \hat{p}_{ij} = \min\left(1, \max(0, \tilde{p}_{ij})\right) \).
			\State \textbf{Return} \( \hat{P} \).
		\end{algorithmic}
	\end{algorithm}
	
	For the experiments, we set the maximum number of iterations to 500 and the convergence threshold to \( 10^{-6} \). Tables~\ref{tab:dense} and \ref{tab:sparse1} summarize the actual number of iterations for all scenarios discussed in this paper. In some cases, the power iteration does not converge within 500 iterations.
	
	\begin{table}[!htbp]
		\centering
		\caption{Number of iterations for dense settings across 100 independent trials.}
		\begin{tabular}{ccccccc}
			\hline
			ID & \(\hat{\lambda}_1\) Iterations & Std. Dev. & \(\hat{\lambda}_2\) Iterations & Std. Dev. & \(\hat{\lambda}_3\) Iterations & Std. Dev. \\ \hline
			1  & 6                             & 0         &                              &          &                               &           \\
			2  & 5                             & 0         &                              &          &                               &           \\
			3  & 5.01                          & 0.0995    &                              &          &                               &           \\
			4  & 9                             & 0         & 17.99                        & 0.5744   &                               &           \\
			5  & 6                             & 0         & 500                          & 0        &                               &           \\
			6  & 15.34                         & 0.6200    & 8                            & 0        &                               &           \\
			7  & 27.83                         & 1.8871    & 15.1                         & 0.7416   & 8                             & 0         \\ \hline
		\end{tabular}
		\label{tab:dense}
	\end{table}
	
	\begin{table}[!htbp]
		\centering
		\caption{Number of iterations for sparse settings across 100 independent trials.}
		\begin{tabular}{ccccc}
			\hline
			ID & \(\hat{\lambda}_1\) Iterations & Std. Dev. & \(\hat{\lambda}_2\) Iterations & Std. Dev. \\ \hline
			2  & 13                             & 0         &                              &           \\
			3  & 17.15                          & 0.3571    &                              &           \\
			4  & 31.91                          & 1.8713    & 500                          & 0         \\
			5  & 17.98                          & 0.3995    & 500                          & 0         \\ \hline
		\end{tabular}
		\label{tab:sparse1}
	\end{table}
	
	\section{Proofs}
	
	\begin{proof}[Proof of Theorem~\ref{th:r=1 pij}]
		By (\ref{eqn:deg}) and (\ref{eqn:pf1 eqn}), we have
		\begin{align}
			\label{pf1:1}
			\sup_{i}\left|G_1(U_i) - \frac{1}{c(n-1)}d_i\right| = O_p\left(\sqrt{\frac{\log(n)}{n}}\right),
		\end{align}
		where \( c = \lambda_1 \int_0^1 G_1(u) \, du \).
		
		Using the property of U-statistics (see, for example, Theorem 4.2.1 in \citet{korolyuk2013theory}), we have
		\begin{align}
			\label{pf1:eqn:1}
			\frac{1}{n(n-1)} \sum_{i,j : i \neq j} f(U_i, U_j) = \mathbb{E}f(U_i, U_j) + O_p(n^{-1/2}).
		\end{align}
		Moreover, note that
		\begin{align}
			\label{pf1:eqn:2}
			&\mathbb{E}\left(\left(\frac{1}{n(n-1)}\sum_{i,j : i \neq j} E_{ij} - \frac{1}{n(n-1)}\sum_{i,j : i \neq j} f(U_i, U_j)\right)^2 \bigg| U_1, \ldots, U_n \right) \\
			&\lesssim \frac{1}{n^4} \sum_{i_1, i_2, j_1, j_2} \mathbb{E}\left(
			(E_{i_1 j_1} - f(U_{i_1}, U_{j_1}))(E_{i_2 j_2} - f(U_{i_2}, U_{j_2}))
			\bigg| U_1, \ldots, U_n \right) \\
			&\lesssim \frac{1}{n^4} \sum_{i_1, i_2} \mathbb{E}\left(
			(E_{i_1 j_1} - f(U_{i_1}, U_{j_1}))^2 
			\bigg| U_1, \ldots, U_n \right) 
			= O\left(\frac{1}{n^2}\right),
		\end{align}
		where the second inequality follows from the fact that the terms are nonzero only when \( i_1 = i_2, j_1 = j_2 \), and the last equality is due to the boundedness of each term.
		
		Combining (\ref{pf1:eqn:1}) and (\ref{pf1:eqn:2}), we obtain
		\begin{align}
			\label{pf1:3}
			\frac{1}{n(n-1)}\sum_{i,j : i \neq j} E_{ij} = \lambda_1 \left( \int_0^1 G_1(u) \, du \right)^2 + O_p(n^{-1/2}).
		\end{align}
		Similarly,
		\begin{align}
			\label{pf1:2}
			\frac{1}{n(n-1)^3}\sum_{i,j : i \neq j} d_i d_j = \lambda_1^2 \left( \int_0^1 G_1(u) \, du \right)^4 + O_p(n^{-1/2}).
		\end{align}
		
		Combining (\ref{pf1:3}) and (\ref{pf1:2}), we have
		\begin{align*}
			(n-1)^2 \frac{\sum_{i,j : i \neq j} E_{ij}}{\sum_{i,j : i \neq j} d_i d_j} = \frac{1}{\lambda_1 \left( \int_0^1 G_1(u) \, du \right)^2} + O_p(n^{-1/2}).
		\end{align*}
		Thus,
		\begin{align}
			\label{pfeqn:02}
			&\sup_{i} \left| G_1(U_i) - \sqrt{\frac{\sum_{i,j : i \neq j} E_{ij}}{\sum_{i,j : i \neq j} d_i d_j}} \frac{d_i}{\sqrt{\lambda_1}} \right| \\
			&\le \sup_{i} \left| G_1(U_i) - \frac{1}{c(n-1)}d_i \right| + 
			\sup_{i} \left| \sqrt{\frac{\sum_{i,j : i \neq j} E_{ij}}{\sum_{i,j : i \neq j} d_i d_j}} \frac{d_i}{\sqrt{\lambda_1}} - \frac{1}{c(n-1)}d_i \right| \\
			&\le \sup_{i}\left|G_1(U_i)-\frac{1}{c(n-1)}d_i\right|+\left|\sqrt{(n-1)^2\frac{\sum_{i,j:i\neq j}E_{ij}}{   \sum_{i,j:i\neq j}d_id_j}}\frac{1}{\sqrt{\lambda_1}} -\frac{1}{\lambda_1\int_0^1G_1(u)du}\right|\\
			&= O_p\left(\sqrt{\frac{\log(n)}{n}}\right).
		\end{align}
		By the definition of the graphon function, \( \sup_{u_1, u_2 \in [0,1]} \lambda_1 G_1(u_1)G_1(u_2) \leq 1 \), and thus \( \sup_{u \in [0,1]} \sqrt{\lambda_1} G_1(u) \leq 1 \). 
		
		For \( c_1 = \frac{\sum_{i,j : i \neq j} E_{ij}}{\sum_{i,j : i \neq j} d_i d_j} \), we have
		\begin{align*}
			\sup_{i,j} \left| \hat{p}_{ij} - p_{ij} \right| &\le \sup_{i,j} \left| c_1 d_i d_j - \lambda_1 G_1(U_i) G_1(U_j) \right| \\
			&\le \sup_{i,j} \left| \sqrt{c_1} d_i - \sqrt{\lambda_1} G_1(U_i) \right| \sqrt{c_1} d_j \\
			&\quad + \sup_{i,j} \left| \sqrt{c_1} d_j - \sqrt{\lambda_1} G_1(U_j) \right| \sqrt{\lambda_1} G_1(U_i) \\
			&= O_p\left(\sqrt{\frac{\log(n)}{n}}\right).
		\end{align*}
	\end{proof}
	
	\begin{proof}[Proof of Theorem~\ref{th:2}]
		It suffices to show that 
		\begin{align}
			\label{pf2:target}
			&\sup_{u\in[0,1]}\left|G_1^\dag(u)-
			\frac{1}{(n-1)\lambda_1\int_0^1G_1(v)dv}h(u)
			\right|\overset{a.s.}{\to}0,  \\ 
			&\sup_{u\in[0,1]}\left|G_1^\dag(u)-
			\frac{1}{(n-1)\lambda_1\int_0^1G_1(v)dv}h(u)
			\right|=O_p\left(\sqrt{\frac{\log(n)}{n}}\right),   \label{pf2:target1}
		\end{align}
		and the subsequent steps can then be obtained by following the similar proof of Theorem~\ref{th:r=1 pij}, transitioning from (\ref{pf1:1}) to (\ref{pfeqn:02}) by replacing $(n-1)\lambda_1\int_0^1G_1(v)dv$ with $\sqrt{\frac{\sum_{i,j:i\neq j}E_{ij}}{\lambda_1 \sum_{i,j:i\neq j}d_id_j}}$, and modifying the argument from taking the maximum over all $U_i$ to taking the supremum over all $u \in [0,1]$. To show (\ref{pf2:target}) and (\ref{pf2:target1}), we consider the following two steps.
		
		\textbf{(Step 1.)} In this step, we prove that 
		\[
		\sup_{u\in\{1,2,\cdots,n\}}\left|\frac{h\left(\frac{u}{n+1}\right)}{(n-1)\lambda_1\int_0^1G_1(v)dv} - G_1^\dag\left(\frac{u}{n+1}\right)\right|\overset{a.s.}{\to}0,
		\]
		and 
		\[
		\sup_{u\in\{1,2,\cdots,n\}}\left|\frac{h\left(\frac{u}{n+1}\right)}{(n-1)\lambda_1\int_0^1G_1(v)dv} - G_1^\dag\left(\frac{u}{n+1}\right)\right| = O_p\left(\sqrt{\frac{\log(n)}{n}}\right).
		\]
		Let $U_{(1)}, \cdots, U_{(n)}$ denote the rearrangement of $U_1, \cdots, U_n \overset{i.i.d.}{\sim} \text{Uniform}(0,1)$ such that $U_{(1)} \le \cdots \le U_{(n)}$. By Lemma~\ref{lem:1}, we have 
		\[
		\sup_{i=1,\cdots,n}|U_{(i)}-i/(n+1)|\overset{a.s.}{\to}0.
		\]
		By \citet{kawohl2006rearrangements} (Chapter \uppercase\expandafter{\romannumeral 2}.2), the rearrangement function $G_1^\dag$ is Lipschitz continuous with constant $M$ as long as $G_1$ is Lipschitz continuous with constant $M$. As a consequence,
		\begin{align}
			\label{pf1:eqn10}
			\sup_{i=1,\cdots,n}|G_1^\dag(U_{(i)})-G_1^\dag(i/(n+1))| &\le 
			M \sup_{i=1,\cdots,n}|U_{(i)}-i/(n+1)| \overset{a.s.}{\to}0.
		\end{align}
		Moreover, using the proof of Lemma 1 in \citet{chan2014consistent}, we have 
		\[
		\sup_{i=1,\cdots,n}|U_{(i)}-i/(n+1)| = O_p\left(\sqrt{\frac{\log(n)}{n}}\right),
		\]
		which also shows that
		\begin{align}
			\label{pf1:eqn101}
			\sup_{i=1,\cdots,n}|G_1^\dag(U_{(i)})-G_1^\dag(i/(n+1))| = O_p\left(\sqrt{\frac{\log(n)}{n}}\right).
		\end{align}
		By definition, for $i=1,\cdots,n$, $h(i/(n+1)) = d_{\sigma(i)}$. By (\ref{pf1:1}) (more precisely, the similar argument of (\ref{pf1:1}) applied to $G^\dag$), (\ref{pf1:eqn10}), and Lemma~\ref{lem:2}, we have
		\[
		\sup_{i\in\{1,2,\cdots,n\}}\left|\frac{h\left(\frac{i}{n+1}\right)}{(n-1)\lambda_1\int_0^1G_1(v)dv} - G_1^\dag\left(\frac{i}{n+1}\right)\right| \overset{a.s.}{\to} 0.
		\]
		Similarly, via (\ref{pf1:1}), (\ref{pf1:eqn101}), and Lemma~\ref{lem:2}, we have
		\[
		\sup_{i\in\{1,2,\cdots,n\}}\left|\frac{h\left(\frac{i}{n+1}\right)}{(n-1)\lambda_1\int_0^1G_1(v)dv} - G_1^\dag\left(\frac{i}{n+1}\right)\right| = O_p\left(\sqrt{\frac{\log(n)}{n}}\right).
		\]
		
		\textbf{(Step 2.)} In this step, we prove (\ref{pf2:target}). We note that
		\begin{align*}
			&\sup_{u\in[0,1/(n+1)]}\left|G_1^\dag(u)-
			\frac{1}{(n-1)\lambda_1\int_0^1G_1(v)dv}h(u)
			\right| \\
			&\le \left|G_1^\dag\left(\frac{1}{n+1}\right) -
			\frac{h(1/(n+1))}{(n-1)\lambda_1\int_0^1G_1(v)dv}\right| \\
			&\quad + \sup_{u\in[0,1/(n+1)]}\left|G_1^\dag\left(\frac{1}{n+1}\right) - G_1^\dag(u)\right| \\
			&\le \left|G_1^\dag\left(\frac{1}{n+1}\right) -
			\frac{h(1/(n+1))}{(n-1)\lambda_1\int_0^1G_1(v)dv}\right| + \frac{M}{n+1} \\
			&\overset{a.s.}{\to} 0, \quad \text{and} \quad O_p\left(\sqrt{\frac{\log(n)}{n}}\right).
		\end{align*}
		Similarly, we have 
		\[
		\sup_{u\in[n/(n+1),1]}\left|G_1^\dag(u)-
		\frac{1}{(n-1)\lambda_1\int_0^1G_1(v)dv}h(u)
		\right|\overset{a.s.}{\to}0, \quad \text{and} \quad O_p\left(\sqrt{\frac{\log(n)}{n}}\right).
		\]
		For $u \in (1/(n+1), n/(n+1))$, let $k = \lfloor u(n+1) \rfloor$, then 
		\begin{align*}
			&\left|G_1^\dag(u) -
			\frac{h(u)}{(n-1)\lambda_1\int_0^1G_1(v)dv}\right| \\
			&\le (k+1 - u(n+1)) \left|G_1^\dag(u) - G_1^\dag\left(\frac{k}{n+1}\right)\right| \\
			&\quad + (k+1 - u(n+1)) \left|G_1^\dag\left(\frac{k}{n+1}\right) - \frac{h\left(\frac{k}{n+1}\right)}{(n-1)\lambda_1\int_0^1G_1(v)dv}\right| \\
			&\quad + (u(n+1) - k) \left|G_1^\dag\left(\frac{k+1}{n+1}\right) - \frac{h\left(\frac{k+1}{n+1}\right)}{(n-1)\lambda_1\int_0^1G_1(v)dv}\right| \\
			&\quad + (u(n+1) - k) \left|G_1^\dag(u) - G_1^\dag\left(\frac{k+1}{n+1}\right)\right| \\
			&\le \frac{M}{n+1} + \sup_{i\in\{1,2,\cdots,n\}}\left|\frac{h\left(\frac{i}{n+1}\right)}{(n-1)\lambda_1\int_0^1G_1(v)dv} - G_1^\dag\left(\frac{i}{n+1}\right)\right|.
		\end{align*}
		Therefore, by the result from \textbf{(Step 1)}, 
		\[
		\sup_{u\in[1/(n+1),n/(n+1)]}\left|G_1^\dag(u) -
		\frac{1}{(n-1)\lambda_1\int_0^1G_1(v)dv}h(u)\right| = O_p\left(\sqrt{\frac{\log(n)}{n}}\right).
		\]
		Finally, combining the results of the steps yields the proof of Theorem~\ref{th:2}.
	\end{proof}

	\begin{proof}[Proof of Theorem~\ref{th:6}]
		Without loss of generality, we assume that \( \int_0^1 G_k(u) \, du \ge 0 \) for \( 1 \le k \le r \). If \( \int_0^1 G_k(u) \, du \le 0 \), we can replace \( G_k \) with \( -G_k \).
		
		For \( i = 1, \dots, n \), recall that
		\begin{align*}
			L_i^{(1)} &= \sum_{i_1} A_{ii_1}, \\
			L_i^{(a)} &= \sum_{\substack{i_1, \dots, i_a \text{ distinct}, \\ i_k \neq i, 1 \le k \le a}} E_{ii_1} \prod_{j=2}^a E_{i_{j-1}i_j} \quad \text{for } a \ge 2, \\
			C_i^{(a)} &= \sum_{\substack{i_1, \dots, i_{a-1} \text{ distinct}, \\ i_k \neq i, 1 \le k \le a-1}} E_{ii_1} E_{i_{a-1}i} \prod_{j=2}^{a-1} E_{i_{j-1}i_j} \quad \text{for } a \ge 3.
		\end{align*}
		Note that \( \mathbb{P}(E_{ij} = 1 \mid U_i, U_j) = \sum_{k=1}^r \lambda_k G_k(U_i) G_k(U_j) \) and \( \int_0^1 G_i^2(u) \, du = 1 \) for \( 1 \le i \le r \). We then have
		\begin{align}
			\label{eqn:100}
			\frac{1}{\prod_{j=1}^a (n-j)} \mathbb{E}(L_i^{(a)} \mid U_i) &= \sum_{k=1}^r \lambda_k^a G_k(U_i) \int_0^1 G_k(u) \, du \quad \text{for } 1 \le a \le r, \notag \\
			\frac{1}{\prod_{j=1}^{a-1} (n-j)} \mathbb{E}(C_i^{(a)} \mid U_i) &= \sum_{k=1}^r \lambda_k^a G_k^2(U_i) \quad \text{for } 3 \le a \le r+2.
		\end{align}
		
		We prove the theorem in two steps.
		
		\textbf{(Step 1.)} We first show that
		\[
		\max_{1 \le k \le r} |\hat{\lambda}_k - \lambda_k| = O_p(n^{-1/2}), \quad \max_{1 \le k \le r} \left| y_k - \int_0^1 G_k(u) \, du \right| = O_p(n^{-1/2}).
		\]
		
		By \eqref{eqn:100}, we have
		\begin{align}
			\label{eqn:500}
			\frac{1}{\prod_{j=1}^a (n-j)} \mathbb{E}(L_i^{(a)}) &= \sum_{k=1}^r \lambda_k^a \left( \int_0^1 G_k(u) \, du \right)^2 \quad \text{for } 1 \le a \le r, \notag \\
			\frac{1}{\prod_{j=1}^{a-1} (n-j)} \mathbb{E}(C_i^{(a)}) &= \sum_{k=1}^r \lambda_k^a \quad \text{for } 3 \le a \le r+2.
		\end{align}
		Moreover, by the implicit function theorem, the system of equations \eqref{eqn:500} in terms of \( \lambda_k \) and \( \left( \int_0^1 G_k(u) \, du \right)^2 \) for \( 1 \le k \le r \) has a unique solution if
		\begin{align}
			\label{eqn:new3}
			\begin{vmatrix}
				\lambda_1^2 & \lambda_2^2 & \cdots & \lambda_r^2 \\
				\vdots & \vdots & \ddots & \vdots \\
				\lambda_1^{r+1} & \lambda_2^{r+1} & \cdots & \lambda_r^{r+1}
			\end{vmatrix} \neq 0, \quad
			\begin{vmatrix}
				\lambda_1 & \lambda_2 & \cdots & \lambda_r \\
				\vdots & \vdots & \ddots & \vdots \\
				\lambda_1^r & \lambda_2^r & \cdots & \lambda_r^r
			\end{vmatrix} \neq 0.
		\end{align}
		This condition is satisfied under Assumption~\ref{ass:3}, which assumes \( \lambda_k > 0 \) for \( 1 \le k \le r \) and \( \lambda_i \neq \lambda_j \) for \( i \neq j \).
		
		By Lemma~\ref{lem:add1}, we have
		\begin{align*}
			\frac{1}{\prod_{j=0}^a (n-j)} \sum_{i=1}^n \left( L_i^{(a)} - \mathbb{E}(L_i^{(a)}) \right) &= O_p(n^{-1/2}) \quad \text{for } 1 \le a \le r, \\
			\frac{1}{\prod_{j=0}^{a-1} (n-j)} \sum_{i=1}^n \left( C_i^{(a)} - \mathbb{E}(C_i^{(a)}) \right) &= O_p(n^{-1/2}) \quad \text{for } 3 \le a \le r+2.
		\end{align*}
		By Lemma~\ref{lem:add2}, we have
		\[
		\max_{1 \le k \le r} |\hat{\lambda}_k - \lambda_k| = O_p(n^{-1/2}).
		\]
		By Lemma~\ref{lem:add3}, we have
		\[
		\max_{1 \le k \le r} \left| y_k - \int_0^1 G_k(u) \, du \right| = O_p(n^{-1/2}).
		\]
		Note that there is no ambiguity in the square root since we assume \( \int_0^1 G_i(u) \, du \ge 0 \) for \( i = 1, 2 \).
		
		\textbf{(Step 2.)} In this step, we prove that
		\[
		\sup_{i,j} |\hat{p}_{ij} - p_{ij}| = O_p\left( \sqrt{\frac{\log n}{n}} \right).
		\]
		Recall that \( (G_1(U_i), \dots, G_r(U_i)) \) is estimated by solving the system of equations with respect to \( (\hat{G}_1(U_i), \dots, \hat{G}_r(U_i)) \):
		\begin{align*}
			\frac{1}{\prod_{j=1}^a (n-j)} L_i^{(a)} = \sum_{k=1}^r \hat{\lambda}_k^a y_k \hat{G}_k(U_i) \quad \text{for } 1 \le a \le r,
		\end{align*}
		where \( \hat{\lambda}_k^a \) and \( y_k \) are defined in \eqref{eqn:02}. For this linear system, we have
		\begin{align}
			\label{eqn:20}
			\max_i \max_k |\hat{G}_k(U_i) - G_k(U_i)| = O_p\left( \sqrt{\frac{\log n}{n}} \right),
		\end{align}
		provided that
		\[
		\max_i \max_a \frac{|L_i^{(a)} - \mathbb{E}(L_i^{(a)} \mid U_i)|}{\prod_{j=1}^a (n-j)} = O_p\left( \sqrt{\frac{\log n}{n}} \right),
		\]
		which is guaranteed by Lemma~\ref{lem:add4}.
		
		By \eqref{eqn:st2}, for every \( 1 \le k \le r \), we have
		\begin{align}
			\label{eqn:160}
			\tilde{G}_k(U_i) - \hat{G}_k(U_i) = \frac{\hat{G}_k(U_i)}{\sqrt{\frac{1}{n} \sum_{i=1}^n \hat{G}_k^2(U_i)}} - \hat{G}_k(U_i) = \hat{G}_k(U_i) \left( \frac{1}{\sqrt{\frac{1}{n} \sum_{i=1}^n \hat{G}_k^2(U_i)}} - 1 \right).
		\end{align}
		Since \( U_i \)'s are i.i.d., we have
		\[
		\frac{1}{n} \sum_{i=1}^n G_k^2(U_i) - 1 = O_p(n^{-1/2}).
		\]
		Thus,
		\[
		\frac{1}{n} \sum_{i=1}^n \hat{G}_k^2(U_i) - 1 = \frac{1}{n} \sum_{i=1}^n \hat{G}_k^2(U_i) - \frac{1}{n} \sum_{i=1}^n G_k^2(U_i) + \frac{1}{n} \sum_{i=1}^n G_k^2(U_i) - 1 = O_p\left( \sqrt{\frac{\log n}{n}} \right),
		\]
		which implies
		\begin{equation}
			\label{eqn:140}
			\frac{1}{\sqrt{\frac{1}{n} \sum_{i=1}^n \hat{G}_k^2(U_i)}} - 1 = O_p\left( \sqrt{\frac{\log n}{n}} \right).
		\end{equation}
		By Assumption~\ref{ass:3}, \( G_k \) is bounded by \( K \). Combining \eqref{eqn:140}, \eqref{eqn:160}, \eqref{eqn:20}, and noting that \( r = O(1) \), we have
		\[
		\max_k \max_i |\tilde{G}_k(U_i) - \hat{G}_k(U_i)| = O_p\left( \sqrt{\frac{\log n}{n}} \right).
		\]
		Therefore,
		\[
		\max_k \max_i |\tilde{G}_k(U_i) - G_k(U_i)| = O_p\left( \sqrt{\frac{\log n}{n}} \right).
		\]
		
		As a result, for the estimation of connection probabilities, we have
		\begin{align*}
			\sup_{i,j} |\hat{p}_{ij} - p_{ij}| &= \sup_{i,j} \left| \left[ 1 \wedge \left( 0 \vee \left( \sum_{k=1}^r \hat{\lambda}_k \tilde{G}_k(U_i) \tilde{G}_k(U_j) \right) \right) \right] - \left( \sum_{k=1}^r \lambda_k G_k(U_i) G_k(U_j) \right) \right| \\
			&= O_p\left( \sqrt{\frac{\log n}{n}} \right).
		\end{align*}
	\end{proof}
	\begin{proof}[Proof of Lemma~\ref{lem:add5}]
		We show that 
		\[
		\max_{1\le i\le n}\max_{1\le a\le r}
		\left|
		\frac{1}{\prod_{j=1}^a (n-j)} \left(
		\tilde{L}_i^{(a)} - L_i^{(a)}
		\right)
		\right| = o_p\left(\frac{1}{\sqrt{n}}\right),
		\] 
		as the result for $\tilde{C}_i^{(a)}$ follows similarly.
		
		By definition, we have
		\[
		\tilde{L}_i^{(a)} - L_i^{(a)} = \sum_{i_1, \ldots, i_a \in \mathcal{M}} E_{i, i_1} \prod_{j=2}^a E_{i_{j-1}, i_j},
		\]
		where 
		\[
		\mathcal{M} = \{\text{At least two of the values } i, i_1, \ldots, i_a \text{ are identical}\}.
		\]
		Therefore,
		\[
		\frac{1}{\prod_{j=1}^a (n-j)} \left|  
		\tilde{L}_i^{(a)} - L_i^{(a)} \right| \leq 
		\frac{1}{\prod_{j=1}^a (n-j)} \sum_{i_1, \ldots, i_a \in \mathcal{M}} 1 = 
		\frac{O(n^{a-1})}{\prod_{j=1}^a (n-j)}.
		\]
		As a result, we have
		\[
		\max_{1 \leq i \leq n} \max_{1 \leq a \leq r} \frac{1}{\prod_{j=1}^a (n-j)} \left|  
		\tilde{L}_i^{(a)} - L_i^{(a)} 
		\right| \leq \frac{O(n^{a-1})}{\prod_{j=1}^a (n-j)} = O_p\left(\frac{1}{n}\right).
		\]
	\end{proof}

	\vspace{-\baselineskip}

	\section{Technical Lemmas}
	
	\begin{lemma}
		\label{lem:3}
		For the rank-2 model with $f(u,v) = \lambda_1 G_1(u)G_1(v) + \lambda_2 G_2(u)G_2(v)$, where $G_1, G_2$ are bounded by a constant $M > 0$, we have
		\[
		\sup_{i=1,\ldots,n} \frac{|d_i - \bb{E}(d_i \mid U_i)|}{n-1} = O_p\left(\sqrt{\frac{\log(n)}{n}}\right),
		\]
		where $d_i$ is the degree of the $i$-th node. Note that the model reduces to a rank-1 model when $\lambda_2 = 0$.
	\end{lemma}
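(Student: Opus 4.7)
The plan is to exploit a conditional independence structure. Conditional on $U_i$, the indicator variables $\{E_{ij}\}_{j\ne i}$ are mutually independent, since each $E_{ij}$ depends only on $U_i$, $U_j$ (which are independent across $j$), and a fresh Bernoulli draw. Thus, given $U_i$, $d_i=\sum_{j\ne i} E_{ij}$ is a sum of $n-1$ independent $[0,1]$-valued random variables with common conditional mean
\[
\mathbb{E}(E_{ij}\mid U_i)=\int_0^1 f(U_i,u)\,du,
\]
so that $\mathbb{E}(d_i\mid U_i)=(n-1)\int_0^1 f(U_i,u)\,du$ and the problem reduces to a standard concentration inequality for a bounded i.i.d.\ sum.

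First, I would apply Hoeffding's inequality conditionally on $U_i$: for any $t>0$,
\[
\mathbb{P}\bigl(|d_i-\mathbb{E}(d_i\mid U_i)|>t\ \big|\ U_i\bigr)\le 2\exp\!\left(-\frac{2t^2}{n-1}\right).
\]
Since the right-hand side is non-random, the same bound holds unconditionally. Setting $t=C\sqrt{(n-1)\log n}$ for a constant $C>0$ gives a tail bound of $2 n^{-2C^2}$.

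Second, I would take a union bound over $i=1,\dots,n$:
\[
\mathbb{P}\!\left(\sup_{1\le i\le n}|d_i-\mathbb{E}(d_i\mid U_i)|>C\sqrt{(n-1)\log n}\right)\le 2\,n^{1-2C^2}.
\]
Choosing $C>1$ makes this tend to $0$, so $\sup_i|d_i-\mathbb{E}(d_i\mid U_i)|=O_p(\sqrt{n\log n})$. Dividing by $n-1$ yields the claimed $O_p(\sqrt{\log n/n})$ rate.

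There is essentially no obstacle: the rank-$2$ form of $f$ and the boundedness of $G_1,G_2$ play no role beyond ensuring $f\in[0,1]$ (already forced by it being a graphon), so the Hoeffding plus union bound argument is the whole story. The one point worth being careful about is making clear that conditional independence gives the Hoeffding bound a fixed denominator $n-1$ (not a random quantity), so the union bound passes through cleanly; the same proof goes through verbatim for the rank-$1$ specialization $\lambda_2=0$ used in Section~\ref{sec:The case where r=1}.
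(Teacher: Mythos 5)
Your proof is correct, and it takes a genuinely different route from the paper's. You condition only on $U_i$, observe that $\{E_{ij}\}_{j\neq i}$ are then i.i.d.\ Bernoulli with parameter $\int_0^1 f(U_i,u)\,du$, and get the whole bound from a single application of Hoeffding plus a union bound. The paper instead conditions on \emph{all} of $U_1,\dots,U_n$ and therefore has to split $d_i-\mathbb{E}(d_i\mid U_i)$ into two pieces: a conditional Hoeffding/sub-Gaussian-maximum bound for $\sum_{j\neq i}\bigl(E_{ij}-f(U_i,U_j)\bigr)$, giving the $\sqrt{\log(n)/n}$ rate, plus a separate $O_p(n^{-1/2})$ law-of-large-numbers step controlling $\sup_i\bigl|\tfrac{1}{n-1}\sum_{j\neq i}f(U_i,U_j)-\int_0^1 f(U_i,u)\,du\bigr|$ via the expansion $\lambda_k G_k(U_i)\bigl(\tfrac{1}{n-1}\sum_j G_k(U_j)-\int_0^1 G_k(u)\,du\bigr)$ --- which is where the rank-$2$ structure and the bound $M$ on $G_1,G_2$ actually enter. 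Your one-shot conditioning absorbs that second piece into the Hoeffding step and makes those hypotheses superfluous (beyond $f\in[0,1]$), which you correctly point out; your argument is the cleaner of the two for this particular lemma. The only thing you give up is the intermediate quantity $\tfrac{1}{n-1}\sum_{j\neq i}f(U_i,U_j)$, whose explicit control is of the same flavor as what the paper needs later for the higher-order path counts (Lemma~\ref{lem:add4}), so the paper's decomposition is a template that generalizes, while yours is tailored to the degree statistic. Minor points: your union bound over dependent events is of course legitimate, and $C>1/\sqrt{2}$ already suffices where you ask for $C>1$.
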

	
	\begin{proof}
		We first note that 
		\begin{align*}
			& \sup_{i=1,\ldots,n}\left|
			\frac{1}{n-1}\sum_{j:j\neq i} \left( \lambda_1 G_1(U_i)G_1(U_j) + \lambda_2 G_2(U_i)G_2(U_j) \right)
			\right. \\
			& \left. \quad 
			- \lambda_1 G_1(U_i) \int_0^1 G_1(u) \, du 
			- \lambda_2 G_2(U_i) \int_0^1 G_2(u) \, du 
			\right| \\
			& \leq \lambda_1 M \left(
			\left| \frac{1}{n-1}\sum_{j=1}^n G_1(U_j) - \int_0^1 G_1(u) \, du \right| + \frac{1}{n-1}M
			\right) \\
			& \quad + \lambda_2 M \left(
			\left| \frac{1}{n-1}\sum_{j=1}^n G_2(U_j) - \int_0^1 G_2(u) \, du \right| + \frac{1}{n-1}M
			\right) = O_p(n^{-1/2}),
		\end{align*}
		where the last result follows from Slutsky's theorem. 
		
		It now suffices to show that 
		\begin{align}
			\label{pfeqn:01}
			\sup_{i=1,\ldots,n} & \left|
			\frac{1}{n-1}\sum_{j:j \neq i} \left(
			I\left(U_{ij} \leq \lambda_1 G_1(U_i)G_1(U_j) + \lambda_2 G_2(U_i)G_2(U_j) \right)
			\right. \right. \\
			& \left. \left.
			\quad - \lambda_1 G_1(U_i)G_1(U_j)
			- \lambda_2 G_2(U_i)G_2(U_j)
			\right)
			\right| = O_p\left(\sqrt{\frac{\log(n)}{n}}\right),
		\end{align}
		where $U_{ij}, i \leq j$ are i.i.d. uniformly distributed random variables on $[0,1]$, and $U_{ji} = U_{ij}$ for $i > j$. 
		
		Let 
		\[
		Z_i = \frac{1}{n-1}\sum_{j=1}^n \left(
		I\left(U_{ij} \leq \lambda_1 G_1(U_i)G_1(U_j) + \lambda_2 G_2(U_i)G_2(U_j) \right)
		\right. 
		\left. - \lambda_1 G_1(U_i)G_1(U_j) - \lambda_2 G_2(U_i)G_2(U_j)
		\right).
		\]
		
		By Hoeffding's inequality (Theorem 2.6.2 of \citet{Vershynin2018Highdimensional}), for any $t > 0$, we have
		\[
		\bb{P}\left(
		\sqrt{n}|Z_i| > t \mid U_1, \ldots, U_n
		\right) \leq 2\exp(-ct^2),
		\]
		where $c > 0$ is an absolute constant. 
		
		Taking expectations, we get
		\[
		\bb{P}\left(
		\sqrt{n}|Z_i| > t
		\right) = \bb{E} \left( 
		\bb{P}\left(
		\sqrt{n}|Z_i| > t \mid U_1, \ldots, U_n
		\right)
		\right) \leq 2\exp(-ct^2).
		\]
		As a result, $\sqrt{n}Z_i$ are sub-Gaussian random variables. Then, using standard bounds for maxima of sub-Gaussian variables, we have 
		\[
		\bb{E} \max_{i=1,\ldots,n} |Z_i| = O\left(\sqrt{\frac{\log(n)}{n}}\right),
		\]
		which implies that 
		\[
		\max_{i=1,\ldots,n} |Z_i| = O_p\left(\sqrt{\frac{\log(n)}{n}}\right).
		\]
	\end{proof}
	
	\begin{lemma}
		\label{lem:1}
		Suppose that \( U_i \overset{i.i.d.}{\sim} \text{Uniform}(0,1) \), for \( i = 1, \ldots, n \). Let \( U_{(i)} \) denote the \( i \)-th smallest value among \( U_1, \ldots, U_n \), i.e., \( U_{(1)} \leq U_{(2)} \leq \cdots \leq U_{(n)} \). Then
		\[
		\sup_i \left| U_{(i)} - \frac{i}{n+1} \right| \overset{a.s.}{\to} 0.
		\]
	\end{lemma}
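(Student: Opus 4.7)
The plan is to reduce the statement to the Glivenko–Cantelli theorem. Let $F_n(u) = \frac{1}{n}\sum_{j=1}^n I(U_j \le u)$ denote the empirical distribution function of $U_1,\ldots,U_n$, and recall that the true distribution function is $F(u) = u$ on $[0,1]$. The Glivenko–Cantelli theorem yields
\begin{equation*}
\sup_{u \in [0,1]} |F_n(u) - u| \overset{a.s.}{\to} 0.
\end{equation*}

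The key identity is that the order statistics satisfy $F_n(U_{(i)}) = i/n$ for each $i = 1,\ldots,n$. Combining this with Glivenko–Cantelli gives
\begin{equation*}
\left| U_{(i)} - \frac{i}{n} \right| \;=\; \left| U_{(i)} - F_n(U_{(i)}) \right| \;\le\; \sup_{u \in [0,1]} |F_n(u) - u|,
\end{equation*}
uniformly in $i$. A triangle inequality then handles the difference between the two normalizations:
\begin{equation*}
\left| U_{(i)} - \frac{i}{n+1} \right| \;\le\; \left| U_{(i)} - \frac{i}{n} \right| + \left| \frac{i}{n} - \frac{i}{n+1} \right| \;\le\; \sup_{u \in [0,1]} |F_n(u) - u| + \frac{1}{n+1},
\end{equation*}
where in the last step I used $|i/n - i/(n+1)| = i/(n(n+1)) \le 1/(n+1)$.

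Taking the supremum over $i$ and sending $n \to \infty$, both terms on the right vanish almost surely, which gives the claim. There is no real obstacle here; the only subtle point is matching the normalization $i/(n+1)$ (rather than $i/n$) via the trivial triangle inequality above. Since the remainder $1/(n+1)$ is deterministic, almost-sure convergence is preserved throughout.
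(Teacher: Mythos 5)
Your proof is correct, but it takes a genuinely different route from the paper's. You reduce the claim to the Glivenko--Cantelli theorem via the identity $F_n(U_{(i)}) = i/n$ (valid almost surely since the $U_j$ are distinct with probability one), observe that $|U_{(i)} - i/n| = |F(U_{(i)}) - F_n(U_{(i)})| \le \sup_u |F_n(u)-u|$, and absorb the normalization mismatch $|i/n - i/(n+1)| \le 1/(n+1)$ with a deterministic triangle inequality. The paper instead works directly with the exact law $U_{(i)} \sim \mathrm{Beta}(i, n-i+1)$, whose mean is precisely $i/(n+1)$: it computes the sixth central moment explicitly, applies Markov's inequality and a union bound over $i$, verifies that the resulting tail probabilities are summable in $n$, and invokes Borel--Cantelli. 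Your argument is shorter and leans on a single classical theorem; the paper's is more self-contained and quantitative, producing explicit nonasymptotic tail bounds $\mathbb{P}(\sup_i |U_{(i)} - i/(n+1)| \ge \varepsilon) \lesssim \varepsilon^{-6} n^{-2}$ that make the Borel--Cantelli step transparent (though the paper still cites an external result, in the proof of Theorem~\ref{th:2}, for the $O_p(\sqrt{\log(n)/n})$ rate, which neither proof of the lemma itself delivers). Both establish exactly the stated almost-sure convergence.
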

	
	\begin{proof}
		It is well-known that \( U_{(i)} \sim \text{Beta}(i, n-i+1) \), with probability density function 
		\[
		p(x) = \frac{x^{i-1}(1-x)^{n-i}}{\int_0^1 x^{i-1}(1-x)^{n-i} \, dx}.
		\]
		We now proceed to bound the tail probability for any \( \varepsilon > 0 \) using Markov's inequality. First, we have
		\begin{align*}
			\bb{P}\left( \left| U_{(i)} - \frac{i}{n+1} \right| \ge \varepsilon \right)
			&\le \frac{1}{\varepsilon^6} \bb{E} \left| U_{(i)} - \frac{i}{n+1} \right|^6 \\
			&= \frac{1}{\varepsilon^6} \frac{5i(n-i+1)A}{(n+1)^6(n+2)(n+3)(n+4)(n+5)(n+6)} \\
			&\le \frac{1}{\varepsilon^6} \frac{5n^2 A}{(n+1)^{11}},
		\end{align*}
		where 
		\[
		A = 24(n-i+1)^4 + 2i(n-i+1)^3(13n - 13i + 1) + i^2(n-i+1)^2(24 - 8(n-i+1) + 3(n-i+1)^2)
		\]
		\[
		+ 2i^3(n-i+1)^2(3(n-i+1)^2 - 4(n-i+1) - 12) + i^4(24 + 26(n-i+1) + 3(n-i+1)^2).
		\]
		We note that \( A \leq 12n^6 + 36n^5 + 24n^4 \leq 72n^6 \). Hence, we have
		\begin{align*}
			\sum_{n=1}^\infty \bb{P}\left( \sup_i \left| U_{(i)} - \frac{i}{n+1} \right| \ge \varepsilon \right)
			&\le \sum_{n=1}^\infty \sum_{i=1}^n \bb{P}\left( \left| U_{(i)} - \frac{i}{n+1} \right| \ge \varepsilon \right) \\
			&\le \frac{1}{\varepsilon^6} \sum_{n=1}^\infty \sum_{i=1}^n \frac{360n^8}{(n+1)^{11}} \\
			&\le \frac{360}{\varepsilon^6} \sum_{n=1}^\infty \frac{1}{n^2} < \infty.
		\end{align*}
		Therefore, by the Borel-Cantelli lemma, the result follows.
	\end{proof}
	\begin{lemma}
		\label{lem:2}
		Let \( G(u) \), for \( u \in [0,1] \), be a monotonically non-decreasing, Lipschitz continuous function with Lipschitz constant \( L > 0 \). Let \( a_i := G(i/(n+1)), i = 1, \ldots, n \). Suppose that there exists a sequence of random variables \( b_1, \ldots, b_n \) such that \( \sup_{i=1, \ldots, n} |b_i - a_i| \overset{a.s.}{\to} 0 \). Let \( \alpha \) be a one-to-one permutation such that \( b_{\alpha(1)} \leq b_{\alpha(2)} \leq \cdots \leq b_{\alpha(n)} \). Let \( \hat{a}_i := b_{\alpha(i)} \). Then we have
		\[
		\sup_i |\hat{a}_i - a_i| \overset{a.s.}{\to} 0.
		\]
		Moreover, if \( \sup_{i=1, \ldots, n} |b_i - a_i| = O_p(g_n) \), then \( \sup_i |\hat{a}_i - a_i| = O_p(g_n) \) for some \( g_n = o(1) \), with \( n g_n \to \infty \).
	\end{lemma}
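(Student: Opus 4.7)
The plan is to establish the single inequality
\[
\sup_i |\hat{a}_i - a_i| \;\le\; \sup_j |b_j - a_j|,
\]
which reduces both conclusions of the lemma to the hypothesis on $\sup_j|b_j - a_j|$. The key structural fact driving this bound is that $(a_i)_{i=1}^n$ is already non-decreasing in $i$, since $G$ is non-decreasing and $i/(n+1)$ is increasing in $i$; the Lipschitz constant $L$ is in fact not needed for this argument, although it is used in the application of the lemma.

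To prove the inequality, write $\delta_n := \sup_j |b_j - a_j|$, so that $a_j - \delta_n \le b_j \le a_j + \delta_n$ for every $j$. I would verify the two-sided bound
\[
a_i - \delta_n \;\le\; b_{\alpha(i)} \;\le\; a_i + \delta_n
\]
by a short counting argument. For the upper bound: since $a_j \le a_i$ for every $j \le i$, at least $i$ of the $b_j$'s satisfy $b_j \le a_j + \delta_n \le a_i + \delta_n$, so the $i$-th smallest value $b_{\alpha(i)}$ cannot exceed $a_i + \delta_n$ (otherwise only $b_{\alpha(1)}, \dots, b_{\alpha(i-1)}$ could lie at or below $a_i + \delta_n$, giving at most $i-1$ such indices, contradicting the count of at least $i$). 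A symmetric argument yields the lower bound: at least $n-i+1$ of the $b_j$'s (those with $j \ge i$) satisfy $b_j \ge a_j - \delta_n \ge a_i - \delta_n$, which forces $b_{\alpha(i)} \ge a_i - \delta_n$.

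Taking a supremum over $i$, the a.s. statement $\sup_i |\hat{a}_i - a_i| \overset{a.s.}{\to} 0$ follows from $\delta_n \overset{a.s.}{\to} 0$, and the $O_p(g_n)$ rate follows analogously from $\delta_n = O_p(g_n)$. The only delicate point in the proof is the bookkeeping for ties (repeated values among the $a_j$'s or $b_j$'s), which is handled automatically by the non-strict inequalities used in the counting argument; I do not expect this to present any real obstacle.
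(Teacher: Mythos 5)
Your proof is correct, and it takes a genuinely different and cleaner route than the paper's. The paper bounds $|\hat a_i - a_i| = |b_{\alpha(i)} - a_i| \le |a_{\alpha(i)} - a_i| + M_n$ and then controls $|a_{\alpha(i)} - a_i|$ by a contradiction argument that invokes the Lipschitz continuity of $G$ (a term $L/(n+1)$ enters the chain of inequalities) together with an auxiliary random threshold $K_n \in [3M_n, 4M_n]$ and the side conditions $1/n = o_{a.s.}(M_n)$ and $n g_n \to \infty$; it concludes with a bound of order $O(K_n) + M_n$, i.e.\ a constant multiple of $\sup_j|b_j-a_j|$. Your counting argument instead proves the sharp deterministic inequality $\sup_i |\hat a_i - a_i| \le \sup_j |b_j - a_j|$ using only the monotonicity of the sequence $(a_i)$ --- this is the non-expansiveness of sorting in the sup norm when the reference sequence is already sorted --- and both conclusions of the lemma then follow immediately. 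Your version is strictly stronger: it gives constant $1$, dispenses with the Lipschitz hypothesis and with the conditions $1/n = o(M_n)$ and $n g_n \to \infty$ (which the paper's statement of the lemma quietly relies on), and handles ties automatically through the non-strict inequalities. The counting steps themselves are sound: at least $i$ of the $b_j$ (those with $j \le i$) lie at or below $a_i + \delta_n$, forcing the $i$-th order statistic down, and at least $n-i+1$ (those with $j \ge i$) lie at or above $a_i - \delta_n$, forcing it up.
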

	
	\begin{proof}
		Let \( M_n = \sup_{i=1, \ldots, n} |b_i - a_i| \), then \( M_n \overset{a.s.}{\to} 0 \). Assume without loss of generality that \( 1/n = o_{a.s.}(M_n) \). Let \( K_n \) be a non-negative random variable such that \( K_n \overset{a.s.}{\to} 0 \), \( 3M_n \leq K_n \leq 4M_n \), and \( 1/n = o_{a.s.}(K_n) \). For any \( i = 1, \ldots, n \), we have
		\[
		|\hat{a}_i - a_i| = |b_{\alpha(i)} - a_i| \leq |a_{\alpha(i)} - a_i| + M_n.
		\]
		First, consider the case where \( \alpha(i) \geq i \). Assume, for the sake of contradiction, that \( a_{\alpha(i)} - a_i > K_n \). Then for \( j = 1, 2, \ldots, i+1 \), we derive that
		\[
		b_{\alpha(i)} \geq a_{\alpha(i)} - M_n > a_j - \frac{L}{n+1} + K_n - M_n \geq b_j - \frac{L}{n+1} + K_n - 2M_n,
		\]
		where for the second inequality, we use the monotonicity of \( G(u) \). By the construction of \( K_n \), with probability 1, when \( n \) is sufficiently large, we have \( b_{\alpha(i)} > b_j \) for \( j = 1, 2, \ldots, i+1 \). This implies that there are at least \( i+1 \) values smaller than \( b_{\alpha(i)} \), which contradicts the definition of \( \alpha \). Therefore, \( a_{\alpha(i)} - a_i \leq K_n \).
		
		Similarly, for the case where \( \alpha(i) \leq i \), we have \( a_{\alpha(i)} - a_i \geq -K_n \).
		
		We thus conclude that
		\[
		\sup_i |\hat{a}_i - a_i| = O_{a.s.}(K_n) + M_n \overset{a.s.}{\to} 0.
		\]
		The statement for \( O_p \) follows from the same argument.
	\end{proof}
	\begin{lemma}
		\label{lem:add1}
		Under the assumptions of Theorem~\ref{th:6}, we have
		\begin{align*}
			\frac{1}{\prod_{j=0}^a (n-j)} \sum_{i=1}^n \left( L_i^{(a)} - \mathbb{E}(L_i^{(a)}) \right) &= O_p(n^{-1/2}) \quad \text{for } 1 \leq a \leq r, \\
			\frac{1}{\prod_{j=0}^{a-1} (n-j)} \sum_{i=1}^n \left( C_i^{(a)} - \mathbb{E}(C_i^{(a)}) \right) &= O_p(n^{-1/2}) \quad \text{for } 3 \leq a \leq r+2,
		\end{align*}
		where \( L_i^{(a)}, C_i^{(a)} \) are defined in Section~\ref{sec:The case where r>2}.
	\end{lemma}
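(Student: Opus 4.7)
The plan is to bound each centered sum via a variance decomposition that separates the randomness of the latent variables $U_1, \ldots, U_n$ from the conditional randomness of the edges given the latents. Set $T := \frac{1}{\prod_{j=0}^a(n-j)} \sum_{i=1}^n L_i^{(a)}$, which is precisely the average of $\prod_{k=1}^a E_{i_{k-1}i_k}$ over ordered $(a{+}1)$-tuples of distinct nodes $(i_0, i_1, \ldots, i_a)$. I would then decompose
$$
T - \mathbb{E}(T) = \bigl(T - \mathbb{E}(T \mid U)\bigr) + \bigl(\mathbb{E}(T \mid U) - \mathbb{E}(T)\bigr),
$$
where $U = (U_1, \ldots, U_n)$, and bound each piece separately. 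The cycle statistic $C_i^{(a)}$ is handled identically once one replaces the path kernel $f(u_0,u_1)\cdots f(u_{a-1},u_a)$ with the cyclic kernel $f(u_0,u_1)\cdots f(u_{a-2},u_{a-1}) f(u_{a-1}, u_0)$.

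For the edge-noise piece, conditional on $U$ the edges $\{E_{ij}\}_{i<j}$ are independent Bernoulli, so the covariance between $\prod_k E_{i_{k-1}i_k}$ and $\prod_k E_{j_{k-1}j_k}$ vanishes unless the corresponding $(a{+}1)$-tuples share at least one common (unordered) edge. A combinatorial count bounds the number of ordered tuple pairs sharing at least one edge by $O(n^{2a})$: there are $O(1)$ choices of which positions of each tuple coincide on a shared edge, $O(n^2)$ choices for the two endpoints of that edge, and $O(n^{a-1})$ independent completions of each path. Each surviving covariance is bounded by $1$, so
$$
\mathrm{Var}(T \mid U) \lesssim \frac{n^{2a}}{\bigl(\prod_{j=0}^a(n-j)\bigr)^2} = O(n^{-2}),
$$
and Chebyshev's inequality gives $T - \mathbb{E}(T \mid U) = O_p(n^{-1})$, which is stronger than what is needed.

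For the latent-variable piece, $\mathbb{E}(T \mid U)$ is a V-statistic of order $a{+}1$ in the i.i.d.\ variables $U_i$ with bounded (after symmetrisation) kernel $h_a(u_0,\ldots,u_a) = f(u_0,u_1)\cdots f(u_{a-1},u_a)$, and the kernel is bounded by $1$ since $f \in [0,1]$. Classical variance bounds for U- and V-statistics (e.g.\ \citet{korolyuk2013theory}) then yield $\mathrm{Var}(\mathbb{E}(T \mid U)) = O(1/n)$, so $\mathbb{E}(T \mid U) - \mathbb{E}(T) = O_p(n^{-1/2})$. Combining the two pieces gives the advertised $O_p(n^{-1/2})$ rate for $T$, and the same reasoning applied to the cyclic kernel yields the statement for $C_i^{(a)}$. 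The only genuinely nontrivial step is the overlap count that controls the conditional edge-noise variance; once that bookkeeping is in place, everything else reduces to standard U-statistic theory.
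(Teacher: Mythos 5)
Your proposal is correct and follows essentially the same route as the paper's proof: the same decomposition into the conditional-on-$U$ edge-noise term (bounded via the overlap count of tuple pairs sharing an edge, giving conditional variance $O(n^{-2})$ and hence an $O_p(n^{-1})$ contribution) and the latent-variable term handled by classical U-statistic variance bounds from \citet{korolyuk2013theory}, yielding $O_p(n^{-1/2})$. No substantive differences to report.
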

	
	\begin{proof}
		We only show that
		\[
		\frac{1}{\prod_{j=0}^a (n-j)} \sum_{i=1}^n \left( L_i^{(a)} - \mathbb{E}(L_i^{(a)}) \right) = O_p(n^{-1/2}) \quad \text{for } 1 \leq a \leq r,
		\]
		as the results for \( C_i^{(a)} \) follow similarly.
		
		Note that \( \mathbb{E}(E_{ij} | U_i, U_j) = f(U_i, U_j) \), and that \( E_{ij} \) is conditionally independent of \( E_{i_1,j_1} \) when \( (i,j) \neq (i_1,j_1) \). Then we derive that
		\begin{align*}
			&\frac{1}{\left( \prod_{j=1}^a (n-j) \right)^2} \mathbb{E}\left[ \left( \sum_{i=1}^n L_i^{(a)} - \sum_{i=1}^n \mathbb{E}(L_i^{(a)} \mid U_1, \ldots, U_n) \right)^2 \Big| U_1, \ldots, U_n \right] \\
			&\lesssim \frac{1}{n^{2a+2}} \sum_{i,i_1,\ldots,i_a,k,k_1,\ldots,k_a} \mathbb{E} \left[ \left( E_{ii_1} \prod_{j=2}^a E_{i_{j-1}i_j} - f(U_i, U_{i_1}) \prod_{j=2}^a f(U_{i_{j-1}}, U_{i_j}) \right) \right. \\
			&\quad \left. \left( E_{kk_1} \prod_{j=2}^a E_{k_{j-1}k_j} - f(U_k, U_{k_1}) \prod_{j=2}^a f(U_{k_{j-1}}, U_{k_j}) \right) \Big| U_1, \ldots, U_n \right] \\
			&\lesssim \frac{n^{2a}}{n^{2a+2}} = \frac{1}{n^2}.
		\end{align*}
		Since \( \sum_{i=1}^n L_i^{(a)} \leq \prod_{j=0}^a (n-j) \), we have
		\begin{align}
			\label{eqn:90}
			\frac{1}{\prod_{j=0}^a (n-j)} \sum_{i=1}^n \left( L_i^{(a)} - \mathbb{E}(L_i^{(a)} | U_1, \ldots, U_n) \right) = O_p\left(\frac{1}{n}\right).
		\end{align}
		
		Moreover, by the property of U-statistics (see, for example, Theorem 4.2.1 in \citet{korolyuk2013theory}), we have
		\begin{align}
			\label{eqn:new1}
			\frac{\sum_{i,i_1,\ldots,i_a} f(U_i, U_{i_1}) \prod_{j=2}^a f(U_{i_{j-1}}, U_{i_j})}{\prod_{j=0}^a (n-j)} = \frac{\mathbb{E} \sum_{i,i_1,\ldots,i_a} f(U_i, U_{i_1}) \prod_{j=2}^a f(U_{i_{j-1}}, U_{i_j})}{\prod_{j=0}^a (n-j)} + O_p(n^{-1/2}).
		\end{align}
		Note that
		\[
		\sum_{i=1}^n \mathbb{E}(L_i^{(a)} | U_1, \ldots, U_n) = \sum_{i,i_1,\ldots,i_a} f(U_i, U_{i_1}) \prod_{j=2}^a f(U_{i_{j-1}}, U_{i_j}).
		\]
		Then the result follows by combining (\ref{eqn:new1}) with (\ref{eqn:90}).
	\end{proof}
	\begin{lemma}
		\label{lem:add2}
		Suppose that \( x_1, \dots, x_r \) are \( r \) real numbers satisfying \( |x_1| > |x_2| > \cdots > |x_r| > 0 \). Let \( \epsilon_{3,n}, \dots, \epsilon_{r+2,n} \) be \( r \) random variables such that \( \max_i |\epsilon_{i,n}| = O_p(n^{-1/2}) \). Then the solution \( (\tilde{x}_1, \dots, \tilde{x}_r) \) to the following system of equations
		\begin{align}
			\label{pfeqn:0200}
			\sum_{k=1}^r \tilde{x}_k^a = \sum_{k=1}^r x_k^a + \epsilon_{a,n} \quad \text{for } 3 \leq a \leq r+2
		\end{align}
		satisfies
		\[
		\max_i |\tilde{x}_i - x_i| = O_p(n^{-1/2}).
		\]
	\end{lemma}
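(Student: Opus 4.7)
The plan is to apply the Inverse Function Theorem to the polynomial power-sum map. Define $F:\mathbb{R}^r\to\mathbb{R}^r$ by $F(z)_a=\sum_{k=1}^r z_k^{a+2}$ for $a=1,\dots,r$, so that the system (\ref{pfeqn:0200}) reads $F(\tilde x)=F(x)+\epsilon_n$ with $\epsilon_n=(\epsilon_{3,n},\dots,\epsilon_{r+2,n})^\top$ and $\|\epsilon_n\|_\infty=O_p(n^{-1/2})$.

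Next I would compute the Jacobian of $F$ at the true point $x=(x_1,\dots,x_r)$. The $(a,i)$ entry is $(a+2)x_i^{a+1}$, and factoring $(a+2)$ from row $a$ and $x_i^2$ from column $i$ gives the decomposition
\[
DF(x)=\operatorname{diag}(3,4,\dots,r+2)\cdot V(x)\cdot\operatorname{diag}(x_1^2,\dots,x_r^2),
\]
where $V(x)_{ai}=x_i^{a-1}$ is a Vandermonde matrix with $\det V(x)=\prod_{i<j}(x_j-x_i)$. The hypothesis $|x_1|>|x_2|>\cdots>|x_r|>0$ makes the $x_i$ pairwise distinct and all nonzero, so both diagonal factors and $V(x)$ are invertible, and hence $DF(x)$ is invertible.

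By the Inverse Function Theorem, there exist open neighborhoods $U\ni x$ and $W\ni F(x)$ such that $F|_U:U\to W$ is a $C^\infty$ diffeomorphism, and $F^{-1}:W\to U$ is Lipschitz with some constant $L=L(x_1,\dots,x_r)$. Since $\|\epsilon_n\|_\infty=O_p(n^{-1/2})=o_p(1)$, for any $\eta>0$ we can choose $n$ large so that $F(x)+\epsilon_n\in W$ with probability at least $1-\eta$; on this event the unique solution of (\ref{pfeqn:0200}) lying in $U$ is $\tilde x=F^{-1}(F(x)+\epsilon_n)$, and Lipschitzness yields
\[
\max_i|\tilde x_i-x_i|\le L\|\epsilon_n\|_\infty=O_p(n^{-1/2}).
\]

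The main subtlety, rather than the calculation itself, is interpretive: the polynomial system has up to $r!$ solutions (coming from permutations of the $\tilde x_k$ and possible sign/complex issues), so one must specify which solution is meant. The Inverse Function Theorem singles out the unique solution in the neighborhood $U$ of the true $x$, and this matches the intended branch in (\ref{eqn:02}) because the strict inequalities $|\tilde x_1|>\cdots>|\tilde x_r|$ are preserved under small perturbations of $x$. Once this is noted, the proof is a one-line application of the Lipschitz bound on $F^{-1}$.
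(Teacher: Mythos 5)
Your proof is correct, and it rests on the same mathematical fact as the paper's: the Jacobian of the power-sum map at $x$ factors as a diagonal matrix times a Vandermonde matrix times $\operatorname{diag}(x_1^2,\dots,x_r^2)$, which is invertible precisely because the $x_i$ are distinct and nonzero (this is the determinant condition the paper records in its equation~(\ref{eqn:new3})). The packaging, however, differs. The paper linearizes by hand: it writes $\sum_k\tilde x_k^a-\sum_k x_k^a=\sum_k a x_k^{a-1}\Delta_k+O_p(\max_k\Delta_k^2)$, first establishes $\Delta_k=o_p(1)$ via the implicit function theorem and the continuous mapping theorem, and then solves the resulting linear system and absorbs the quadratic remainder into the error — a two-stage bootstrap. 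You instead invoke the inverse function theorem once and use Lipschitz continuity of $F^{-1}$ near $F(x)$, which handles the remainder automatically and yields the rate in one line; this is arguably cleaner and avoids the slightly self-referential bound $\Delta_i=\sum_j a_{ij}(\epsilon_{j,n}+O_p(\max_i\Delta_i^2))$. Your closing remark on which of the up to $r!$ solutions is meant is well taken and addresses a point the paper only handles implicitly through the ordering constraint $|\hat\lambda_1|>\cdots>|\hat\lambda_r|$ in~(\ref{eqn:02}). Two minor polish items: the inverse function theorem gives a $C^\infty$ inverse, so to quote a uniform Lipschitz constant $L$ you should shrink $W$ to a set on which $\|DF^{-1}\|$ is bounded (e.g.\ a compact ball around $F(x)$); and you should state explicitly that the event $\{F(x)+\epsilon_n\in W\}$ has probability tending to one because $\|\epsilon_n\|_\infty=o_p(1)$, so that the $O_p$ conclusion survives restriction to that event.
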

	
	\begin{proof}
		Let \( \Delta_i = \tilde{x}_i - x_i \) for \( 1 \leq i \leq r \). By the implicit function theorem, the system of equations (\ref{pfeqn:0200}) has a unique solution with probability tending to 1. Moreover, by the continuous mapping theorem, we have \( \Delta_i = o_p(1) \). By the definition of \( \epsilon_{i,n} \), for any \( \varepsilon > 0 \), there exist finite constants \( M \) and \( N \) such that
		\[
		\mathbb{P} \left( \max_i |\sqrt{n} \epsilon_i| > M \right) < \varepsilon \quad \text{for all } n > N.
		\]
		Therefore, it suffices to show that
		\begin{align}
			\label{pfeqn:040}
			\mathbb{P} \left( \max_i |\Delta_i| \leq C \max_i |\epsilon_{i,n}| \right) \to 1
		\end{align}
		for some constant \( C > 0 \).
		
		Note that
		\begin{align*}
			\sum_{k=1}^r \tilde{x}_k^a - \sum_{k=1}^r x_k^a
			&= \sum_{k=1}^r (x_k + \Delta_k)^a - \sum_{k=1}^r x_k^a \\
			&= \sum_{k=1}^r a x_k^{a-1} \Delta_k + O_p(\max_k \Delta_k^2).
		\end{align*}
		We then calculate that
		\[
		\sum_{k=1}^r a x_k^{a-1} \Delta_k = \tilde{\epsilon}_{a,n} \quad \text{for } 3 \leq a \leq r+2,
		\]
		where \( \tilde{\epsilon}_{a,n} = \delta_a + \epsilon_{a,n} \) and \( \delta_a = O_p(\max_i \Delta_i^2) \).
		
		For the above linear system of equations, by our assumption on \( x_i \) (similar to the arguments in (\ref{eqn:new3})), it has a unique solution of the form
		\begin{align}
			\label{pfeqn:050}
			\Delta_i = \sum_{j=3}^{r+2} a_{i,j} \tilde{\epsilon}_{j,n},
		\end{align}
		where \( a_{i,j} \) are constants depending only on \( x_1, \dots, x_r \). By combining (\ref{pfeqn:050}) and the fact that \( \max_a |\delta_a| = O_p(\max_i \Delta_i^2) \) and \( \Delta_i = o_p(1) \), we conclude that (\ref{pfeqn:040}) follows.
	\end{proof}
	\begin{lemma}
		\label{lem:add3}
		Suppose that \( x_1, \dots, x_r \) are \( r \) real numbers satisfying \( |x_1| > |x_2| > \cdots > |x_r| > 0 \), and that \( \tilde{x}_1, \dots, \tilde{x}_r \) are \( r \) random variables satisfying \( \max_i |\tilde{x}_i - x_i| = O_p(n^{-1/2}) \). Let \( y_1, \dots, y_r \) be \( r \) non-zero real numbers, and let \( \epsilon_{1,n}, \dots, \epsilon_{r,n} \) be \( r \) random variables satisfying \( \max_i |\epsilon_{i,n}| = O_p(n^{-1/2}) \). Then the solution \( (\tilde{y}_1, \dots, \tilde{y}_r) \) to the following system of equations with respect to \( (y_1, \dots, y_r) \):
		\begin{align}
			\label{eqn:new2}
			y_a \geq 0, \quad \sum_{k=1}^r \tilde{x}_k^a \tilde{y}_k^2 = \sum_{k=1}^r x_k^a y_k^2 + \epsilon_{a,n} \quad \text{for } 1 \leq a \leq r,
		\end{align}
		satisfies
		\[
		\max_i |\tilde{y}_i - y_i| = O_p(n^{-1/2}).
		\]
	\end{lemma}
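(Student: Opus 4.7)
The plan is to convert~(\ref{eqn:new2}) into a linear system in the squared unknowns and then mirror the perturbation argument of Lemma~\ref{lem:add2}. Setting $z_k := y_k^2$ and $\tilde z_k := \tilde y_k^2$, the system becomes $\tilde M \tilde z = M z + \epsilon$, where $M_{a,k} = x_k^a$, $\tilde M_{a,k} = \tilde x_k^a$, and $\epsilon = (\epsilon_{1,n}, \ldots, \epsilon_{r,n})^\top$. The key structural fact is that $M$ admits the factorization $M = V D$ with $D = \operatorname{diag}(x_1, \ldots, x_r)$ and $V_{a,k} = x_k^{a-1}$ a Vandermonde matrix; the hypothesis $|x_1| > \cdots > |x_r| > 0$ forces the $x_k$ to be distinct and nonzero, so both $V$ and $D$ are invertible and hence so is $M$.

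Next, I would use the mean-value expansion $\tilde x_k^a - x_k^a = a x_k^{a-1}(\tilde x_k - x_k) + O_p(n^{-1})$ together with the assumed rate $\max_k |\tilde x_k - x_k| = O_p(n^{-1/2})$ to conclude $\|\tilde M - M\| = O_p(n^{-1/2})$. A standard Neumann-series perturbation argument then yields that $\tilde M$ is invertible with $\|\tilde M^{-1}\| = O_p(1)$ on an event of probability tending to one. Rearranging the system as
\begin{align*}
\tilde M (\tilde z - z) = (M - \tilde M) z + \epsilon
\end{align*}
and premultiplying by $\tilde M^{-1}$, since $z$ is a fixed vector and both $(M - \tilde M) z$ and $\epsilon$ are $O_p(n^{-1/2})$, one concludes $\max_k |\tilde z_k - z_k| = O_p(n^{-1/2})$.

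Finally, to pass from $\tilde z_k$ back to $\tilde y_k$, I would invoke the identity $\tilde y_k - y_k = (\tilde z_k - z_k)/(\tilde y_k + y_k)$. Using the convention from the proof of Theorem~\ref{th:6}, where without loss of generality $y_k = \int_0^1 G_k(u)\, du > 0$, the consistency $\tilde z_k \to z_k > 0$ in probability combined with $\tilde y_k \ge 0$ implies $\tilde y_k + y_k$ is bounded away from zero in probability, delivering the claimed $\max_i |\tilde y_i - y_i| = O_p(n^{-1/2})$. The principal obstacle is managing the perturbation of the coefficient matrix and the perturbation of the right-hand side simultaneously; the identity $\tilde M (\tilde z - z) = (M - \tilde M) z + \epsilon$ is precisely what turns this into a first-order calculation controlled by the Vandermonde invertibility of $M$. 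A secondary subtlety is selecting the correct non-negative square-root branch for $\tilde y_k$, which is unambiguous since $z_k > 0$ and $\tilde z_k > 0$ hold with probability tending to one.
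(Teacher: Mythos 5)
Your proposal is correct and follows essentially the same route as the paper's proof: isolate the perturbation of the coefficients $\tilde x_k^a$ from the perturbation of the squared unknowns, solve the resulting linear system in $\tilde y_k^2 - y_k^2$ using the invertibility guaranteed by $|x_1|>\cdots>|x_r|>0$, and then pass from squares to the $\tilde y_k$ via $y_k\neq 0$ and the sign constraint. Your write-up is in fact somewhat more explicit than the paper's (the Vandermonde factorization, the Neumann-series control of $\tilde M^{-1}$, and the square-root branch are all left implicit there), but there is no substantive difference in approach.
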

	
	\begin{proof}
		Note that
		\[
		\tilde{x}_k^a \tilde{y}_k^2 - x_k^a y_k^2 = (\tilde{x}_k^a - x_k^a) y_k^2 + \tilde{x}_k^a (\tilde{y}_k^2 - y_k^2).
		\]
		Since \( \max_i |\tilde{x}_i - x_i| = O_p(n^{-1/2}) \), we have \( \max_i |\tilde{x}_i^a - x_i^a| = O_p(n^{-1/2}) \). Therefore, equation (\ref{eqn:new2}) reduces to
		\begin{align*}
			y_a \geq 0, \quad \sum_{k=1}^r \tilde{x}_k^a (\tilde{y}_k^2 - y_k^2) = \tilde{\epsilon}_{a,n} \quad \text{for } 1 \leq a \leq r,
		\end{align*}
		where \( \max_a |\tilde{\epsilon}_{a,n}| = O_p(n^{-1/2}) \). Moreover, since \( \max_k |\tilde{x}_k^a| = O_p(1) \), and noting that the above system of equations is linear in \( \tilde{y}_k^2 - y_k^2 \) for \( 1 \leq k \leq r \), and that \( r = O(1) \), we have
		\[
		\max_k |\tilde{y}_k^2 - y_k^2| = O_p(n^{-1/2}).
		\]
		Finally, recalling that \( y_1, \dots, y_r \) are non-zero, we conclude that
		\[
		\max_k |\tilde{y}_k - y_k| = O_p(n^{-1/2}).
		\]
	\end{proof}
	
	\begin{lemma}\label{lem:add4}
		Under the assumptions of Theorem~\ref{th:6}, we have
		\begin{align*}
			\max_{1\le i\le n}\max_{1\le a\le r}\left| \frac{L_i^{(a)}-\bb{E}(L_i^{(a)}|U_i)}{\prod_{j=1}^a(n-j)} \right| = O_p\left(\sqrt{\frac{\log(n)}{n}}\right)
		\end{align*}
		where
		\begin{align*}
			L_i^{(1)} &= \sum_{i_1} E_{ii_1}, \\
			L_i^{(a)} &= \sum_{i_1, \cdots, i_a \text{ distinct}, i_k \neq i, 1 \le k \le a} E_{ii_1} \prod_{j=2}^a E_{i_{j-1}i_j} \text{ for } a \ge 2.
		\end{align*}
	\end{lemma}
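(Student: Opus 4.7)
The plan is to decompose
\begin{align*}
L_i^{(a)} - \mathbb{E}(L_i^{(a)}\mid U_i)
&= \bigl[L_i^{(a)} - \mathbb{E}(L_i^{(a)}\mid U_1,\ldots,U_n)\bigr] + \bigl[\mathbb{E}(L_i^{(a)}\mid U_1,\ldots,U_n) - \mathbb{E}(L_i^{(a)}\mid U_i)\bigr] \\
&=: T_{1,i,a} + T_{2,i,a},
\end{align*}
and to bound $T_1$ via edge-wise concentration given the latent variables and $T_2$ via concentration of empirical averages of $G_k(U_l)$ around their integrals. Both terms must be controlled uniformly in $i$ and in $a\in\{1,\ldots,r\}$, but only $T_1$ will drive the stated rate; $T_2$ turns out to be of order $n^{-1/2}$.

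For $T_{1,i,a}$, conditionally on $U_1,\ldots,U_n$ the edges $\{E_{kl}\}_{k<l}$ are independent Bernoullis, so I apply McDiarmid's inequality to the bounded function $L_i^{(a)}$. The crucial calculation is the bounded-difference constant $D_{kl}$: flipping $E_{kl}$ changes $L_i^{(a)}$ by at most the number of simple paths of length $a$ from $i$ using the edge $\{k,l\}$. When $i\notin\{k,l\}$ the edge must appear as an \emph{interior} edge of the path, so two consecutive vertices are pinned and only $O(n^{a-2})$ configurations remain; for the $n-1$ edges incident to $i$ the edge may additionally be the \emph{initial} edge, which inflates the bound to $O(n^{a-1})$. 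Therefore $\sum_{k<l}D_{kl}^2 = O(n)\cdot O(n^{2a-2}) + O(n^2)\cdot O(n^{2a-4}) = O(n^{2a-1})$. Since $\bigl[\prod_{j=1}^a(n-j)\bigr]^2 \asymp n^{2a}$, McDiarmid yields
\[
\mathbb{P}\!\left(\frac{|T_{1,i,a}|}{\prod_{j=1}^a(n-j)} > C\sqrt{\tfrac{\log n}{n}}\,\Big|\, U_1,\ldots,U_n\right) \le 2\exp(-c\,C^2\log n)
\]
for an absolute constant $c>0$. A union bound over $n$ values of $i$ and the $O(1)$ values of $a$, followed by choosing $C$ large enough, gives $\max_{i,a}|T_{1,i,a}|/\prod_{j=1}^a(n-j) = O_p(\sqrt{\log n/n})$.

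For $T_{2,i,a}$, substituting $f(u,v)=\sum_{k=1}^r \lambda_k G_k(u)G_k(v)$ into the conditional expectation and summing over ordered distinct tuples $(i_1,\ldots,i_a)$ disjoint from $\{i\}$ yields, up to an $O(1/n)$ error arising from the distinctness constraint,
\[
\frac{\mathbb{E}(L_i^{(a)}\mid U_1,\ldots,U_n)}{\prod_{j=1}^a (n-j)} = \sum_{k_0,\ldots,k_{a-1}}\Bigl[\prod_{l=0}^{a-1}\lambda_{k_l}\Bigr] G_{k_0}(U_i) \prod_{j=1}^{a-1}\alpha^{(i)}_{k_{j-1},k_j} \cdot \beta^{(i)}_{k_{a-1}},
\]
where $\alpha^{(i)}_{k,k'} := \tfrac{1}{n-1}\sum_{l\neq i}G_k(U_l)G_{k'}(U_l)$ and $\beta^{(i)}_{k} := \tfrac{1}{n-1}\sum_{l\neq i}G_k(U_l)$. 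Hoeffding's inequality, applied to each of the $O(r^2)=O(1)$ averages, gives $\max_{k,k'}|\tfrac{1}{n}\sum_l G_k(U_l)G_{k'}(U_l) - \delta_{kk'}| = O_p(n^{-1/2})$ and $\max_k|\tfrac{1}{n}\sum_l G_k(U_l) - \int_0^1 G_k(u)\,du| = O_p(n^{-1/2})$; dropping a single term $l=i$ perturbs each average by only $O(1/n)$, so these rates are uniform in $i$. The orthonormality of $\{G_k\}$ collapses the leading contribution to the diagonal $k_0=\cdots=k_{a-1}=k$, producing $\sum_k \lambda_k^a G_k(U_i)\int G_k(u)\,du$, which is exactly $\mathbb{E}(L_i^{(a)}\mid U_i)/\prod_{j=1}^a(n-j)$ by (\ref{eqn:100}). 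The boundedness $|G_k|\le K$ from Assumption~\ref{ass:3}(iii) controls the $G_{k_0}(U_i)$ prefactor, so $\max_{i,a}|T_{2,i,a}|/\prod_{j=1}^a(n-j) = O_p(n^{-1/2})$, which is $o_p(\sqrt{\log n/n})$.

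The hard part is the bounded-differences computation for $T_1$: the naive estimate $D_{kl}=O(n^{a-1})$ for \emph{every} edge gives $\sum D_{kl}^2 = O(n^{2a})$, which under McDiarmid yields only an $O_p(\sqrt{\log n})$ normalized rate, off by a factor of $\sqrt{n}$. The sharper bound $D_{kl}=O(n^{a-2})$ for edges not incident to $i$—obtained by observing that an interior edge of a simple path pins two consecutive vertices and collapses the combinatorial count by a factor of $n$—is what unlocks the $\sqrt{\log n/n}$ rate. Everything else is routine manipulation of the low-rank representation plus finite-index union bounds.
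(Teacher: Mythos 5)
Your proof is correct, and it splits the error at exactly the same intermediate quantity as the paper: your $T_{1,i,a}$ and $T_{2,i,a}$ are the paper's $L_i^{(a)}-S_{i,0}$ and $S_{i,0}-T_{i,1}$, since $S_{i,0}=\mathbb{E}(L_i^{(a)}\mid U_1,\dots,U_n)$. Where you genuinely diverge is in how the dominant term is controlled. The paper handles $L_i^{(a)}-S_{i,0}$ by a sequential ``peeling'' argument: it replaces one edge factor $E_{i_{j-1}i_j}$ at a time by $f(U_{i_{j-1}},U_{i_j})$, and at each of the $a$ stages applies Hoeffding to a sum of conditionally independent centered Bernoullis indexed by a single free vertex, taking a maximum over the pinned vertex to get the $\sqrt{\log n}$ factor. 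You instead apply McDiarmid once to the whole statistic, and the entire burden shifts to the bounded-differences computation; your key observation --- that an edge not incident to $i$ can only appear as an interior edge of a simple path from $i$, pinning two consecutive vertices and giving $D_{kl}=O(n^{a-2})$ rather than the naive $O(n^{a-1})$ --- is exactly what recovers $\sum_{k<l}D_{kl}^2=O(n^{2a-1})$ and hence the $\sqrt{\log n/n}$ rate; without it McDiarmid loses a factor of $\sqrt{n}$, as you note. Your treatment of $T_2$ also reorganizes the paper's Step 2: rather than peeling off conditional expectations $T_{i,a-1},T_{i,a-2},\dots$ one vertex at a time, you factor the whole chain into empirical Gram matrices $\alpha^{(i)}_{k,k'}$ and the vector $\beta^{(i)}_k$ and invoke orthonormality once; both routes give $O_p(n^{-1/2})$, and your $O(1/n)$ accounting for the distinctness constraint mirrors the paper's Lemma~\ref{lem:add5}. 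On balance your argument is somewhat more compact and self-contained (one concentration inequality instead of $a$ nested applications), at the price of the more delicate combinatorial bookkeeping in the bounded-differences constants; the paper's peeling avoids that bookkeeping entirely because each stage only ever concentrates a sum over one free index.
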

	
	\begin{proof}
		We divide the proof into two steps. In \textbf{Step 1}, we show that
		\begin{align*}
			\frac{1}{\prod_{j=1}^a(n-j)} \max_{i} |L_i^{(a)} - S_{i,0}| = O_p\left(\sqrt{\frac{\log(n)}{n}}\right)
		\end{align*}
		where
		\[
		S_{i,0} = \sum_{i_1, \cdots, i_a \text{ distinct}, i_k \neq i, 1 \le k \le a} f(U_i, U_{i_1}) \prod_{j=2}^a f(U_{i_{j-1}}, U_{i_j}).
		\]
		In \textbf{Step 2}, we show that
		\begin{align*}
			\frac{1}{\prod_{j=1}^a(n-j)} \max_i \left| S_{i,0} - T_{i,1} \right| = O_p(n^{-1/2})
		\end{align*}
		where
		\[
		T_{i,1} = \bb{E}\left[\sum_{i_1, \cdots, i_a \text{ distinct}, i_k \neq i, 1 \le k \le a} f(U_i, U_{i_1}) \prod_{j=2}^a f(U_{i_{j-1}}, U_{i_j}) \Big| U_i \right] = \bb{E}(L_i^{(a)} | U_i).
		\]
		Then the proof is complete by combining the above two equations and noticing that $r$ is bounded.
		
		\textbf{Step 1.} Let
		\[
		S_{i,a-1} = \sum_{i_1, \cdots, i_a \text{ distinct}, i_k \neq i, 1 \le k \le a} E_{ii_1} \prod_{j=2}^{a-1} E_{i_{j-1}i_j} f(U_{i_{a-1}}, U_i).
		\]
		Then
		\begin{align}
			\label{eqn:105}
			\frac{1}{\prod_{j=1}^a(n-j)} \left( L_i^{(a)} - S_{i,a-1} \right)
			=& \frac{1}{\prod_{j=1}^a(n-j)} \notag\\
			&\sum_{i_1, \cdots, i_a \text{ distinct}, i_k \neq i, 1 \le k \le a} E_{ii_1} \prod_{j=2}^{a-1} E_{i_{j-1}i_j} \left( E_{i_{a-1}i_a} - f(U_{i_{a-1}}, U_{i_a}) \right).
		\end{align}
		Notice that $E_{i_{a-1}i_a} = I(U_{i_{a-1},i_a} \le f(U_{i_{a-1}}, U_{i_a}))$ is binary, with $U_{i_{a-1}, i_a} \sim \text{Uniform}(0,1)$ independently, and that $U_{ij}$ is independent of $U_k$ for any $i, j, k$. By Hoeffding's inequality in Theorem 2.6.2 of \citet{Vershynin2018Highdimensional}, we have for any $t > 0$,
		\begin{align*}
			\bb{P}\left( \frac{1}{\sqrt{n-a}} \left| \sum_{i_a \neq i, i_1, \cdots, i_{a-1}} \left( E_{i_{a-1}i_a} - f(U_{i_{a-1}}, U_{i_a}) \right) \right| \ge t \Big| U_1, \cdots, U_n \right) \le 2 \exp(-ct^2)
		\end{align*}
		where $c > 0$ is an absolute constant. Then
		\begin{multline*}
			\bb{P}\left( \frac{1}{\sqrt{n-a}} \left| \sum_{i_a \neq i, i_1, \cdots, i_{a-1}} \left( E_{i_{a-1}i_a} - f(U_{i_{a-1}}, U_{i_a}) \right) \right| \ge t \right) 
			\\= \bb{E} \left( \bb{P}\left( \frac{1}{\sqrt{n-a}} \left| \sum_{i_a \neq i, i_1, \cdots, i_{a-1}} \left( E_{i_{a-1}i_a} - f(U_{i_{a-1}}, U_{i_a}) \right) \right| \ge t \Big| U_1, \cdots, U_n \right) \right) 
			\le 2 \exp(-ct^2).
		\end{multline*}
		As a result, $\frac{1}{\sqrt{n-a}} \left| \sum_{i_a \neq i, i_1, \cdots, i_{a-1}} \left( E_{i_{a-1}i_a} - f(U_{i_{a-1}}, U_{i_a}) \right) \right|$ are sub-Gaussian random variables, and we have
		\[
		\bb{E}\max_{i_{a-1}} \left| \sum_{i_a \neq i, i_1, \cdots, i_{a-1}} E_{i_{a-1}i_a} - f(U_{i_{a-1}}, U_{i_a}) \right| / (n-a) = O\left(\sqrt{\frac{\log(n)}{n}}\right).
		\]
		By recalling (\ref{eqn:105}) and the fact that $E_{ij}$'s are binary, we have
		\[
		\frac{1}{\prod_{j=1}^a(n-j)} \bb{E} \max_{i} |L_i^{(a)} - S_{i,a-1}| = O\left(\sqrt{\frac{\log(n)}{n}}\right).
		\]
		Similarly, let
		\[
		S_{i,a-2} = \sum_{i_1, \cdots, i_a \text{ distinct}, i_k \neq i, 1 \le k \le a} E_{ii_1} \prod_{j=2}^{a-2} E_{i_{j-1}i_j} f(U_{i_{a-2}}, U_{i_{a-1}}) f(U_{i_{a-1}}, U_{i_a}).
		\]
		Then
		\begin{multline}
			\label{eqn:106}
			\frac{1}{\prod_{j=1}^a(n-j)} \left( S_{i,a-1} - S_{i,a-2} \right)
			= \frac{1}{\prod_{j=1}^a(n-j)} \sum_{i_1, \cdots, i_a \text{ distinct}, i_k \neq i, 1 \le k \le a} E_{ii_1} \\ \times  \prod_{j=2}^{a-2} E_{i_{j-1}i_j} \left( E_{i_{a-2}i_{a-1}} - f(U_{i_{a-2}}, U_{i_{a-1}}) \right) f(U_{i_{a-1}}, U_{i_a})\\
			= \frac{1}{\prod_{j=1}^{a-1}(n-j)} \sum_{i_1, \cdots, i_{a-1} \text{ distinct}, i_k \neq i, 1 \le k \le a-1} E_{ii_1} \prod_{j=2}^{a-2} E_{i_{j-1}i_j} \left( E_{i_{a-2}i_{a-1}} - f(U_{i_{a-2}}, U_{i_{a-1}}) \right).
		\end{multline}
		The same reasoning holds by Hoeffding's inequality to show that
		\[
		\frac{1}{\prod_{j=1}^{a}(n-j)} \max_i |L_i^{(a)} - S_{i,a-2}| = O_p\left(\sqrt{\frac{\log(n)}{n}}\right).
		\]
		Continuing this process iteratively and combining with the fact that $r$ is bounded, we obtain
		\begin{align}
			\label{eqn:108}
			\frac{1}{\prod_{j=1}^a(n-j)} \max_i |L_i^{(a)} - S_{i,0}| = O_p\left(\sqrt{\frac{\log(n)}{n}}\right).
		\end{align}

		\textbf{Step 2.} Let
		\[
		T_{i,a-1} = \sum_{i_1, \dots, i_a \text{ distinct}, i_k \neq i, 1 \leq k \leq a} f(U_i, U_{i_1}) \prod_{j=2}^{a-1} f(U_{i_{j-1}}, U_{i_j}) \mathbb{E}\left(f(U_{i_{a-1}}, U_{i_a}) \mid U_{i_{a-1}}\right).
		\]
		
		Then, we have
		\begin{align*}
			&\max_i \frac{1}{\prod_{j=1}^a (n-j)} \left| S_{i,0} - T_{i,a-1} \right| \\
			& = \max_i \frac{1}{\prod_{j=1}^a (n-j)} \left| \sum_{i_1, \dots, i_a \text{ distinct}, i_k \neq i, 1 \leq k \leq a} f(U_i, U_{i_1}) \prod_{j=2}^{a-1} f(U_{i_{j-1}}, U_{i_j}) \right. \\
			&\quad \left. \times \left(f(U_{i_{a-1}}, U_{i_a}) - \mathbb{E}(f(U_{i_{a-1}}, U_{i_a}) \mid U_{i_{a-1}})\right) \right| \\
			& = \max_i \frac{1}{\prod_{j=1}^{a-1} (n-j)} \left| \sum_{i_1, \dots, i_{a-1} \text{ distinct}, i_k \neq i, 1 \leq k \leq a-1} f(U_i, U_{i_1}) \prod_{j=2}^{a-1} f(U_{i_{j-1}}, U_{i_j}) \right. \\
			&\quad \left. \times \frac{1}{n-a} \sum_{i_a} \sum_{k=1}^r \lambda_k G_k(U_{i_{a-1}}) \left[G_k(U_{i_a}) - \int_0^1 G_k(u) \, du\right] \right| \\
			& \leq \frac{1}{n-a} \sum_{k=1}^r \left|\lambda_k M \sum_{i_a} \left[G_k(U_{i_a}) - \int_0^1 G_k(u) \, du\right] \right| = O_p(n^{-1/2}),
		\end{align*}
		where we use the facts that \( f(x,y) \) are bounded by 1, \( G_k \) are bounded by \( M \), and \( U_i \) are i.i.d. random variables.
		
		Similarly, let
		\[
		T_{i,a-2} = \sum_{i_1, \dots, i_a \text{ distinct}, i_k \neq i, 1 \leq k \leq a} f(U_i, U_{i_1}) \prod_{j=2}^{a-2} f(U_{i_{j-1}}, U_{i_j}) \mathbb{E}\left(f(U_{i_{a-2}}, U_{i_{a-1}}) f(U_{i_{a-1}}, U_{i_a}) \mid U_{i_{a-2}}\right).
		\]
		
		Then, we have
		\begin{align*}
			& \max_i \frac{1}{\prod_{j=1}^a (n-j)} \left| T_{i,a-1} - T_{i,a-2} \right| \\
			& = \max_i \frac{1}{\prod_{j=1}^a (n-j)} \left| \sum_{i_1, \dots, i_a \text{ distinct}, i_k \neq i, 1 \leq k \leq a} f(U_i, U_{i_1}) \prod_{j=2}^{a-2} f(U_{i_{j-1}}, U_{i_j}) \right. \\
			&\quad \left. \times \left( f(U_{i_{a-2}}, U_{i_{a-1}}) \mathbb{E}(f(U_{i_{a-1}}, U_{i_a}) \mid U_{i_{a-1}}) - \mathbb{E}\left(f(U_{i_{a-2}}, U_{i_{a-1}}) f(U_{i_{a-1}}, U_{i_a}) \mid U_{i_{a-2}}\right) \right) \right| \\
			& \lesssim \frac{1}{n} \left| \sum_{i_{a-1}} \sum_{k_1=1}^r \sum_{k_2=1}^r \lambda_{k_1} \lambda_{k_2} G_{k_1}(U_{i_{a-1}}) G_{k_2}(U_{i_{a-1}}) \int_0^1 G_{k_2}(u) \, du \right. \\
			&\quad \left. - \sum_{i_{a-1}} \sum_{k=1}^r \lambda_k^2 \int_0^1 G_k(u) \, du \right| \\
			& \lesssim \frac{1}{n} \sum_{k=1}^r \left|\sum_{i_{a-1}} \left( G_k^2(U_{i_{a-1}}) - 1 \right)\right| + \frac{1}{n} \sum_{k_1 \neq k_2} \left| \sum_{i_{a-1}} G_{k_1}(U_{i_{a-1}}) G_{k_2}(U_{i_{a-1}}) \right| + O\left( \frac{1}{n} \right) \\
			& = O_p(n^{-1/2}),
		\end{align*}
		where we use the facts that \( f(x,y) \) are bounded by 1, \( G_k \) are bounded by \( M \), \( r \) is bounded, \( \int_0^1 G_k^2(u) \, du = 1 \), \( \int_0^1 G_i(u) G_j(u) \, du = 0 \) for \( i \neq j \), and that \( U_i \) are i.i.d. random variables.
		
		Similar arguments can be performed for \( T_{i,a-3}, \dots, T_{i,1} \). Since \( a \leq r \) is bounded, by combining all the results, we obtain
		\begin{align} \label{eqn:107}
			\frac{1}{\prod_{j=1}^a (n-j)} \max_i \left| S_{i,0} - T_{i,1} \right| = O_p(n^{-1/2}),
		\end{align}
		where
		\[
		T_{i,1} = \mathbb{E}\left[\sum_{i_1, \dots, i_a \text{ distinct}, i_k \neq i, 1 \leq k \leq a} f(U_i, U_{i_1}) \prod_{j=2}^a f(U_{i_{j-1}}, U_{i_j}) \mid U_i \right].
		\]
		
		Finally, the proof is complete by combining the results from equations (\ref{eqn:108}), (\ref{eqn:107}), and noting that \( r \) is bounded.
	\end{proof}

\end{document}